\documentclass[a4paper,12pt,reqno,oneside,dvipsnames]{amsart}
\usepackage{graphicx}
\usepackage{subcaption}
\usepackage{adjustbox}
\usepackage{multicol}
\usepackage{adjmulticol}
\usepackage{caption}
\usepackage[bbgreekl]{mathbbol}

\usepackage{tikz}
\usetikzlibrary{arrows.meta, decorations.pathreplacing}

\usepackage[multiple]{footmisc} 
\usepackage[normalem]{ulem}
\usepackage{setspace}
\usepackage{amsmath}
\usepackage{amsfonts}
\usepackage{amssymb}
\usepackage{bbm}  %
\usepackage{enumerate}
\usepackage{accents}
\usepackage[breaklinks]{hyperref}
\usepackage{pictex}
\usepackage{amsthm}
\usepackage{breakcites}
\usepackage{graphics,epsfig,verbatim,bm,latexsym,url,amsbsy}
\usepackage{rotating}
\usepackage[authoryear,round]{natbib}
\usepackage{mathrsfs}
\usepackage{textgreek}
\usepackage{multirow}
\usepackage{graphicx}
\usepackage{subcaption}
\usepackage{array}
\usepackage{url}

\usepackage[top=1in,bottom=1.25in,left=1in,right=1.5in,marginparwidth=1in, marginparsep=3mm]{geometry}

\usepackage{graphicx}

\usepackage{float}
\usepackage{tikz}
\usepackage{caption}

\usepackage{bm}
\usepackage{tikz-cd}

\newcommand{\tr}{\mathsf{T}}

\usepackage{pstricks, enumerate, pst-node, pst-text, pst-plot}
\usepackage{xparse}

\usepackage{graphicx}
\graphicspath{{screenshots/}{images/}} %

\theoremstyle{definition} \newtheorem{remark}{Remark}
\theoremstyle{definition}

\long\def\symbolfootnote[#1]#2{\begingroup%
\def\thefootnote{\fnsymbol{footnote}}\footnote[#1]{#2}\endgroup}

\usepackage{footnote}
\makesavenoteenv{tabular}
\usepackage{fancyhdr}
\lhead{\textsc{\documenttitle}} \chead{} \rhead{\today\ / Page
\thepage\ of \pageref{lastpage}}
\newcommand{\documenttitle}{Thesis}

\renewcommand{\max}{\operatornamewithlimits{max}}
\renewcommand{\min}{\operatornamewithlimits{min}}

\newcommand{\be}{\begin{equation}}
\newcommand{\ee}{\end{equation}}
\newcommand{\bes}{\begin{equation*}}
\newcommand{\ees}{\end{equation*}}

\usepackage{amsmath}
\usepackage{epsfig,graphics}
\usepackage{etoolbox}

\usepackage{accents}

\renewcommand{\Pr}{\mathbb P}

\newcommand{\rdots}{\mathinner{%
  \mkern1mu\raise1pt\hbox{.}%
  \mkern2mu\raise4pt\hbox{.}%
  \mkern2mu\raise7pt\vbox{\kern7pt\hbox{.}}\mkern1mu}}

\DeclareMathOperator{\diag}{diag}

\DeclareFontFamily{U}{mathx}{\hyphenchar\font45}
\DeclareFontShape{U}{mathx}{m}{n}{
      <5> <6> <7> <8> <9> <10>
      <10.95> <12> <14.4> <17.28> <20.74> <24.88>
      mathx10
      }{}
\DeclareSymbolFont{mathx}{U}{mathx}{m}{n}
\DeclareFontSubstitution{U}{mathx}{m}{n}
\DeclareMathAccent{\widecheck}{0}{mathx}{"71}
\DeclareMathAccent{\wideparen}{0}{mathx}{"75}

\renewcommand{\Pr}{ P}

\usepackage{palatino}

\hypersetup{colorlinks=true, urlcolor=black!50!blue, citecolor=black!50!blue, linkcolor=black!50!blue}

\makeatletter
\def\@footnotecolor{purple!30!blue}
\define@key{Hyp}{footnotecolor}{%
 \HyColor@HyperrefColor{#1}\@footnotecolor%
}
\patchcmd{\@footnotemark}{\hyper@linkstart{link}}{\hyper@linkstart{footnote}}{}{}
\makeatother
\hypersetup{footnotecolor=black}

\usepackage{thmtools}
\usepackage{nameref}
\usepackage[nameinlink]{cleveref}
\crefname{lem}{Lemma}{Lemmas}
\Crefname{lem}{Lemma}{Lemmas}
\crefname{assumption}{Assumption}{Assumptions}
\Crefname{assumption}{Assumption}{Assumptions}

\newcommand\nc{\newcommand}
\nc\on{\operatorname}
\theoremstyle{definition} \newtheorem{thm}{Theorem}
\theoremstyle{definition} 

\theoremstyle{definition} \newtheorem{definition}{Definition}
\theoremstyle{definition} \theoremstyle{remark}
\theoremstyle{definition} 
\theoremstyle{definition} 
\theoremstyle{definition} \theoremstyle{plain}
\theoremstyle{definition} \newtheorem{condition}{Condition}
\theoremstyle{definition} \newtheorem{lem}{Lemma}
\theoremstyle{definition} 
\theoremstyle{definition} \newtheorem{prop}{Proposition}
\theoremstyle{definition} 
\theoremstyle{definition} \newtheorem{assumption}{Assumption}

\theoremstyle{definition} 
\theoremstyle{definition} 

\newtheorem*{PropertyA}{Property NSD}
\theoremstyle{definition} 
\theoremstyle{definition} 
\theoremstyle{definition}

\theoremstyle{definition}

\renewcommand{\paragraph}[1]{\noindent\textit{#1}}

\allowdisplaybreaks

\begin{document}

\begin{titlepage}
\title{Robust Market Interventions}

\date{\today}
\author[]{Andrea Galeotti \and Benjamin Golub \and Sanjeev Goyal  \and \\  Eduard Talam\`{a}s \and Omer Tamuz}\thanks{%
Galeotti is at the London Business School, Golub is at Northwestern University, Goyal is at the University of Cambridge, Talam\`{a}s is at IESE Business School, and Tamuz is at Caltech. \\ \hspace*{\parindent}  We thank  Yann Calv\'{o} L\'{o}pez and Yu-Chi Hsieh for excellent research assistance. \\ \hspace*{\parindent}This work was supported by the ERC grant 724356 (Galeotti), the Pershing Square Fund for Research on the Foundations of Human Behavior (Golub), the Keynes Fund for Applied Economics (Goyal), the National Science Foundation (SES-1658940, SES-1629446, Golub \& DMS-1944153, Tamuz), the Sloan Foundation (Tamuz) and the US-Israel Binational Science Foundation (2018397, Tamuz). \\ \hspace*{\parindent}  We thank Bruno Pellegrino for valuable exchanges and for sharing data with us, and Volker Nocke
for a helpful discussion on a related project at the 2023 AEA meetings. We thank Pol Antr\`{a}s, Itai Ashlagi, Modibo Camara, Yeon-Koo Che, Kevin Chen, Rebecca Diamond, Jeff Ely, Joey Feffer,  Alex Frankel,  Matthew Gentzkow, Tim Gowers, Matthew O. Jackson, Zi Yang Kang, Ernest Liu, Margaret Meyer, Jose Luis Moraga, Stephen Morris, Martin Peitz, Ilya Segal, Andy Skrzypacz,  Ran Spiegler, Ludwig Straub, Xavier Vives, Kuang Xu, and Frank Yang for helpful comments. \\
\hspace*{\parindent} We thank the editor and four excellent referees for very helpful comments and guidance on previous versions that greatly improved the manuscript. \\
\hspace*{\parindent} The paper subsumes the working paper \citet{galeotti2022taxesmarketpowerprincipal}.}

\begin{abstract} 
When can interventions in markets be designed to increase surplus \emph{robustly}---i.e., with high probability---accounting for uncertainty due to imprecise information about economic primitives? In a setting with many strategic firms, each possessing some market power, we present conditions for such interventions to exist. The key condition, \emph{significant structure}, requires large-scale complementarities among families of products.  The analysis works by decomposing the incidence of interventions in terms of principal components of a Slutsky matrix. Under significant structure, a noisy signal of this matrix reveals enough about these principal components to design robust interventions. Our results demonstrate the usefulness of spectral methods for analyzing imperfectly observed strategic interactions with many agents. \end{abstract}
\newpage

\maketitle

\thispagestyle{empty}

\end{titlepage}

\setcounter{page}{1}

\section{Introduction}

Market power has recently attracted renewed attention and is thought to have significant and growing welfare implications \citep[see, e.g.,][]{syverson2019macroeconomics}.  While many applied studies of competition confine the study of market power to tightly defined product markets, it is becoming clear that important welfare-relevant spillovers operate \emph{across} such markets \citep{baqaee2020productivity,
azarvives2021,pelligrino2021,edererpelligrino2021}. Our theoretical understanding of such spillovers in environments with many firms interacting via general demand systems remains limited. This paper is about the welfare theory of such interactions.

Consider firms that set prices in a setting where  demands exhibit arbitrary complementarities and substitutabilities. For instance, one firm's product---e.g., a Samsung smartphone---may be a substitute for some products---e.g., Apple smartphones---and a complement to others---e.g., compatible accessories such as earbuds, watches, and smart home appliances---which may, in turn, be substitutes or complements to one another. %

We are interested in the nature of inefficiencies in such an environment and interventions to respond to them. We consider these issues from the perspective of an authority that recognizes the possibility of inefficiency due to market power and can intervene through taxes and subsidies on firms' sales. For concreteness, we can think of this authority as the operator of a large marketplace, such as Amazon, that mediates retail sales, or as a governmental institution. The authority's objective is to increase the equilibrium economic surplus generated by the marketplace. (Once this is achieved, the authority can reclaim and redistribute this additional surplus via transfers.) What principles should guide the design of such interventions? When can such policies be implemented under realistic uncertainty about market parameters?

What makes the problem challenging is that, once we broaden our perspective beyond one traditionally defined market (e.g., a set of imperfectly substitutable products such as smartphones) and consider spillovers to a variety of other complements and substitutes, there is a kind of curse of dimensionality. A marketplace with numerous and changing goods is described by a demand system that is high-dimensional and a priori unstructured. As any firm's cost changes, the number of potential spillover effects to consider is equal to the number of products; thus, the number of interactions scales quadratically in this number. Realistic signals leave substantial uncertainty about many aspects of the environment, so that the authority will have nothing close to a precise estimate of the entire demand system. This raises the question of whether there exist policies that can reliably improve surplus despite this uncertainty.

Our main result is that if the economic environment and the statistical signal about it jointly satisfy a property that we call \emph{significant structure}, then there are feasible intervention rules that robustly (i.e., with high probability) increase equilibrium total surplus. They do so despite the fact that the authority faces large errors in observing every detail of the system. Moreover, within a natural class of interventions---those whose direct (pre-transfer) incidence on consumer surplus is nonnegative---our feasible interventions achieve the largest gain in surplus that is possible for a given level of subsidy expenditure. Hence, within this class, these interventions are as good as those that could be designed by an authority with perfect information.

The key condition in the paper is significant structure. We now explain what it means and how it is used.

The demand structure is encoded in a matrix $\bm{D}$ of demand derivatives (which in our setting is equal to the Slutsky matrix). A given cell $D_{ij}$ in this matrix is the derivative of product $i$'s demand with respect to product $j$'s price. Thus, the matrix specifies the complementarity and substitutability relationships across products. Mathematically, the significant structure property requires that  a normalized version of $\bm{D}$ possesses a low-dimensional subspace of eigenvectors with large eigenvalues. In other words, there exist one or more principal components of the demand system that account for a significant amount of demand behavior. 

We show that significant structure entails an economic property---substantial large-scale complementarities in the economy. These large-scale complementarities give rise to cross-market double marginalization problems, where many goods exert externalities on one another. In turn, these externalities create the potential for small subsidies to have large spillover effects, raising the consumption of many underproduced goods and thereby substantially improving welfare.

Significant structure is also central to realizing this potential. In particular, it permits uncovering the large-scale structure of demand relationships when the Slutsky matrix is observed imperfectly.  The authority's signal consists of noisy estimates of the entries of this matrix.  In principle, this noise can create large uncertainty in the operation of a given intervention. The key step in dealing with this is an application of a theorem of \citet{davis1970rotation}. We use this theorem to show that when the economy satisfies significant structure, the  noisy observation of $\bm{D}$ suffices to precisely predict the effects of \emph{some} well-chosen interventions---specifically, those operating in the space of eigenvectors associated with the largest eigenvalues of $\bm{D}$.

Combining the economic and statistical implications of significant structure allows us to establish our main result. In markets possessing such structure, the authority can recover precise information about large-eigenvalue components of $\bm{D}$, and this information suffices to construct interventions with highly predictable surplus implications.

The robust interventions  constructed in this way keep consumer surplus essentially unchanged while generating total-surplus gains. Though the direct incidence of these gains is on producers, they can be shared with consumers through additional transfer instruments.  \Cref{Prop:LackOfAS} shows that, in some environments with significant structure, the authority \emph{cannot} robustly guarantee any fixed positive fraction of expenditure accruing to consumer surplus. Thus, there are some intrinsic constraints on the direct incidence of robust interventions.

At a technical level, to perform our analysis we develop a new spectral description of the pass-through of an intervention. That is, we diagonalize the Slutsky matrix to obtain a specific orthonormal basis in which we can express the implications of any intervention as a linear combination of orthogonal effects.  These effects correspond to the projection of the intervention onto each eigenvector of a (normalized) Slutsky matrix $\bm{D}$. By characterizing the pass-throughs of subsidies to prices, quantities, and welfare separately across these principal components, we are able to prove that targeting the high-eigenvalue principal components yields precisely predictable results achieving our claimed welfare properties.

\subsection{Structure of the paper}
We introduce the framework in \Cref{sec:model} and, as a benchmark, \Cref{Sec:FullInfo} characterizes the outcome an authority can implement with complete information (\Cref{prop:Welfare CI}).

\Cref{sec:SBM} introduces an important running example of a demand system, consistent with a representative consumer with a quadratic utility function. There are $K$ product categories; products within a category are substitutes, while across categories they are complements. We illustrate some of the main ideas of the paper in the context of this example and present a special case of our main result, \Cref{prop:example}, in this setting: as the number of product categories grows large, the authority can design robust interventions that are welfare-improving. This example brings out the role of large-scale complementarities and a tractable decomposition of effects of taxes/subsidies in the design of robust intervention policies.

\Cref{sec:spt} introduces a new spectral description of the pass-through of an intervention. The spectral decomposition is of independent interest, yielding a useful basis in which price and welfare pass-throughs of cost shocks behave intuitively despite arbitrary spillovers across firms. \Cref{sec:rs} formally defines  the property of significant structure and shows how to check it within the product-category model. 

\Cref{sec:mainresult} presents the main result of the paper, \Cref{Th:Main}. 

\Cref{Sec:Illustration} develops a Monte Carlo experiment to demonstrate how significant structure permits the recovery of welfare-relevant structure from noisy observations of the demand system, and to  illustrate the limitations of this recovery.

\Cref{concluding remarks} further explores the scope of our analysis. \Cref{S:Tight} clarifies the limits of our robust interventions and shows that it may be impossible to robustly improve welfare when significant structure fails (\Cref{Prop:LackOfAS}). \Cref{concluding remarks_sampling} discusses foundations for the property of significant structure and develops an empirical diagnostic for it, which we illustrate using the Monte Carlo experiment. \Cref{sec:hedonic} discusses how the linear--quadratic utility model and the hedonic model of demand relate to our theory. \Cref{concluding remarks_NL} discusses the implications of our assumption of local linear demand and how to generalize our insights once we dispense with that assumption.

\subsection{Related literature}

Our paper contributes to the literature on the structure and theoretical properties of market power. For an early theoretical paper, see \citet{dixit1986comparative}; more recent studies include, for example, \citet{vives1999oligopoly}, \citet{azarvives2021}, \citet{nocke2018Multiproduct}, and \citet{nocke2022Merger}. A recent strand of research in macroeconomics and industrial organization uses differentiated oligopoly network models---similar to the one we consider here---to provide empirical estimates of efficiency losses due to market power  \citep[e.g., ][]{pelligrino2021,edererpelligrino2021}.\footnote{See also \citet{elliott2019role} for related arguments about how network methods can be useful for competition authorities in developing antitrust investigations.}

Given these estimates of inefficiencies, a natural theoretical question is: What feasible interventions can improve welfare? Our main contribution is to analyze interventions from the perspective of an authority uncertain about the demand structure. Our analysis combines new spectral pass-through formulas with results building on the statistical theory of large matrices, and we identify conditions on the demand structure that ensure the robust achievement of welfare improvements even when many aspects of the demand structure cannot be accurately estimated.\footnote{Our focus on pass-through builds on work emphasizing the value of pass-through as a conceptual tool, e.g., \citet{marshall1890}, \citet{Pigou1920}, \citet{dixit1979price} and, more recently, \citet{weyl2013pass}, \citet{miklos2021pass} and \citet{JN2024}.} This approach has significant implications for understanding which kinds of empirical models are needed to design interventions in large markets with many goods. In particular, we note that the significant structure condition entails widespread complementarities in the economy, and this is distinctive relative to the focus on substitutabilities in the models of hedonic utility used in the literature. We elaborate on these issues in \Cref{concluding remarks}.

Methods in high-dimensional statistics are currently attracting considerable interest in econometric settings \citep*{athey2018impact,athey2021matrix, chernozhukov2023inference}, including work applying related statistical models to informational or behavioral spillovers in social networks and marketplaces \citep*{golub2012homophily,bajari2023experimental,cai2022recommender,dasaratha2020distributions, cai2022,parise2023graphon, chandrasekhar2024non,wager2021experimenting}.

However, we know little about when noisy data can be effectively used in order to implement desirable interventions in the presence of strategic spillovers, particularly in market settings. We show that, in a large oligopoly market, spectral methods for high-dimensional statistics developed in the literature on large network recovery can be useful for designing socially desirable interventions.\footnote{See the monograph \citet{chen2021spectral} for a detailed treatment of the mathematical background and some marquee applications.} 

Our paper contributes to the theory of network interventions. Early contributions include \cite{Borgatti2006}, \citet*{Ballesteretal2006}, and \cite{goyal1996interaction}.\footnote{The literature on this subject is very large.  Intervention design has been studied in models of information diffusion, advertising, finance, security, and pricing, among other topics---see, e.g., \citet*{banerjee2013diffusion},  \citet{BlochQuerou2013}, \citet*{Candoganetal2012}, %
\citet{BelhajDeroian}, \citet{GDemange}, \cite{DziubinskiGoyal2017}, \citet{GaleottiGoyal2009}, %
and \citet*{leduc2017pricing}.}  Spectral methods have recently been applied to optimal intervention problems when spillovers are known \citep*{galeotti2020targeting, gaitonde2021polarization,liu2024dynamic}.\footnote{Some recent work uses spectral analysis to derive conditions for core-selecting re-allocative auctions \citep{MarzenaR2023}, and robust implementation \citep{ollar2023network}. See also \citet{aguiar2017slutsky} on spectral methods to study Slutsky matrices in a consumer theory setting.} By contrast, in the present paper, the authority observes strategic spillovers with significant noise.
The methods we develop for robust interventions can be applied to other network games and we briefly discuss this in \Cref{concluding remarks_NG}. Our analysis of perturbations of taxes and subsidies is related to the classic ``tax reform approach'' in public finance \citep{feldstein1976theory,tirole1981tax}; the study of uncertain spillovers distinguishes our work.

Our approach to robustness is conceptually related to, but methodologically distinct from, an extensive literature in economic theory. That literature focuses on understanding the design of mechanisms and contracts that achieve desired outcomes even when assumptions about the environment (e.g., agents' preferences, beliefs, and rationality) are relaxed; see \cite{carroll2019robustness} for a survey. Our definition of robustness aligns with the spirit of this literature. However, in our context, the motivation for analyzing robust interventions arises from the high-dimensional nature of the market state, and we use methods that align with statistical work in this type of setting.

\section{Framework} \label{sec:model}
The foundation of our framework is a differentiated oligopoly game. Within this game, we introduce a statistical framework describing the signals available to the authority about the oligopoly, a class of interventions available to the authority, and a notion of rules that use these signals to achieve good outcomes robustly.

\subsection{Demand side}
 There is a set $N=\{1,\ldots,n\}$ of distinct products, with the number of products $n\geq1$; a typical product is denoted by $i$. The demand for these products arises from the consumption choices of a finite number of optimizing households. Each household $h\in \{1, \dots, H\}$ takes prices as given and has a \emph{choice utility} that is quasilinear in a numeraire $m$,   
$$  {U}^h(\tilde{\bm{q}}^h,m) =  V^h(\tilde{\bm{q}}^h) +  m,$$ 
 where $V^h$ is a twice-differentiable and strictly concave function of the consumption profile $\tilde{\bm{q}}^h \in \mathbb{R}^n$ and $m$ is a numeraire (``money''), in which all prices are denominated.  Given a price profile $\tilde{\bm{p}}$, the household's problem is to choose a bundle $\tilde{\bm{q}}^h$ to maximize $V^h(\tilde{\bm{q}}^h)-\tilde{\bm{p}}\cdot \tilde{\bm{q}}^h$.  Letting $\bm{q}^h(\tilde{\bm{p}})$ be the solution to household $h$'s problem (unique by strict concavity of $V^h$), total market demand is\footnote{Because the households' utilities are quasilinear in money, one can derive the same aggregate demand from a single representative consumer.} 
$$ \bm{q}(\tilde{\bm{p}}) = \sum_{h=1}^H \bm{q}^h(\tilde{\bm{p}}).$$ We use tilde notation for an arbitrary price or quantity, and then drop the tilde for these variables to indicate optimal or equilibrium solutions to be introduced later.

\subsection{Supply side}

There is a firm associated with each product: Firm $i$ produces product $i$. Firms play a simultaneous pricing game; each firm chooses $\tilde{p}_i \geq 0$. For any profile of prices $\tilde{\bm{p}}$, firm $i$'s profit is 
\begin{equation}\label{payoffs} 
  q_i(\tilde{\bm{p}})(\tilde{p}_i - c_i+\sigma_i),
\end{equation}
where $c_i$ is firm $i$'s (constant) marginal cost of production and $\sigma_i$ is a per-unit subsidy on firm $i$. In particular, when $\sigma_i$ is positive, the authority subsidizes firm $i$'s production and transfers $\sigma_i \tilde{q}_i$ to firm $i$, whereas a negative $\sigma_i$ corresponds to a tax.

We call $\bm{\sigma}=(\sigma_1,\ldots,\sigma_n)$ an intervention. We denote by $\bm{p}(\bm{\sigma})$ a pure strategy Nash equilibrium given $\bm{\sigma}$, and we refer to $\bm{p}^0=\bm{p}(\bm{0})$ and $\bm{q}^0=\bm{q}(\bm{p}^0)$ as the \emph{status quo} vectors of equilibrium price and quantity. Throughout, $\|\cdot\|$ denotes the Euclidean norm for vectors and the induced operator norm for matrices. To facilitate unambiguous local comparative statics, we assume that $\bm{p}(\bm{\sigma})$ is unique and continuous in $\bm{\sigma}$ around the status quo $\bm{\sigma}=\bm{0},$ as the following assumption, maintained throughout the analysis, formalizes:

\begin{assumption}[Local equilibrium uniqueness]
\label{ass:unique} There exist $\nu>0$ and $\rho > 0$ such that, for all $\bm{\sigma}$ with $\Vert \bm{\sigma}\Vert < \nu$, there is a  unique pure-strategy Nash equilibrium $\bm{p}(\bm{\sigma})$ in a $\rho$-neighborhood of $\bm{p}^0$.
 \end{assumption}

For a discussion of conditions for existence and uniqueness of Nash equilibrium in oligopoly models, see \citet{fudenberg1991game}, \citet{vives1999oligopoly}, and \citet{cumbul2018multilateral}.\footnote{Assuming compact action sets, existence of a pure strategy equilibrium is guaranteed if profits are quasi-concave in prices. }
 A sufficient condition for local uniqueness is non-singularity of the Jacobian of best responses at equilibrium; this holds generically in our setting.\footnote{For a discussion of these issues, see, e.g., \citet{mclennan2018advanced}.}

 We confine attention to small interventions (i.e., $\Vert \bm{\sigma}\Vert < \nu$) and evaluate the effect of an intervention $\bm{\sigma}$ on any outcome variable $Y$ (e.g., producer or consumer surplus) by its first derivative evaluated at the status quo intervention $\bm{\sigma}=\bm{0}$:\begin{equation}\dot{Y}_{\bm{\sigma}}=\sum_{i}\frac{d Y}{d \sigma_i}(\bm{0})\sigma_i.  \label{eq:dot_definition}\end{equation}  
Throughout, we consider the effects of an intervention starting from the status quo equilibrium, which implies that the expenditure $S = \bm{q}\cdot \bm{\sigma}$ associated with intervention $\bm{\sigma}$ satisfies \begin{equation}\label{eq:dotS}
\dot{S}_{\bm{\sigma}}=\bm{q}^0\cdot \bm{\sigma}. 
 \end{equation}

\subsection{Interventions and their effects}

 We start by focusing on environments in which each good's demand is linear in its own price in a neighborhood of the status quo equilibrium, as captured by the following assumption.   
 
\begin{assumption}[Each good's demand is linear in its own price, locally]\label{A1} 
There exists $\rho > 0$ such that $\frac{\partial q_i(\bm{p})}{ \partial p_i}=\frac{\partial q_i(\bm{p}^0)}{ \partial p_i}$ for all $\bm{p}$ satisfying $\Vert \bm{p}-\bm{p}^0\Vert<\rho$.

\end{assumption}

This assumption facilitates our analysis of interventions, yielding simple formulas for comparative statics. We discuss the content of this assumption and the extension of our methods to the case in which this assumption does not hold in  \Cref{concluding remarks_NL}.

The firms' first-order conditions imply that equilibrium prices $\bm{p}(\bm{\sigma})$ satisfy 

\begin{equation}\label{eq:FOC} 
q_i(\bm{p}(\bm{\sigma}))=-\frac{\partial q_i(\bm{p}(\bm{\sigma}))}{ \partial p_i} ({p}_i(\bm{\sigma})-c_i+\sigma_i) \quad \text{ for each firm $i$.}
\end{equation}
 \Cref{A1} implies that we can replace the $\bm{p}(\bm{\sigma})$-dependent partial derivative in the above equation with the constant $\frac{\partial q_i(\bm{p}^0)}{ \partial p_i}$.  Implicitly differentiating the first-order-condition system with respect to $\bm{\sigma}$ gives the price equation below. The quantity equation follows from applying the chain rule to $\bm{q}(\bm{p}(\bm{\sigma}))$ at $\bm{\sigma}=\bm{0}$:
\begin{equation}\label{eq:PricePass}
[-\text{diag}(\bm{D})-\bm{D}]\dot{\bm{p}}_{\bm{\sigma}}=\text{diag}(\bm{D}) \bm{\sigma} \quad \text{ and } \quad   \dot{\bm{q}}_{\bm{\sigma}} = \bm{D} \dot{\bm{p}}_{\bm{\sigma}},
\end{equation}
 where $\bm{D}=\bm{D}(\bm{p}^0)$ is the Slutsky matrix:\footnote{In general, the  matrix of  derivatives of Marshallian demand need not be the same as the Slutsky matrix (which works with compensated demand).  However, in this demand system, the wealth effect is zero due to the fact that the goods utility and money are additively separable. Thus, the two matrices coincide \citep{nocke2017quasi}, and so we use the term ``Slutsky matrix'' throughout.} $$ D_{ij}= {\frac{\partial q_i(\bm{p}^0) }{\partial p_j}},$$ and $\text{diag}(\bm{D})$ is a diagonal matrix defined by $\text{diag}(\bm{D})_{ii} = D_{ii}$. Note that $D_{ii}<0$ by strict concavity of the consumers' utility functions. For $i\neq j$, if $D_{ij}>0$ (resp. $D_{ij}<0$) then, around the equilibrium, products $i$ and $j$ are substitutes (resp. complements).  

\Cref{A1} implies that comparative statics of prices and quantities are fully determined by the Slutsky matrix $\bm{D}$. We note that $\bm{D}$ satisfies the following property \citep{nocke2017quasi}.
\begin{PropertyA} \label{Ass:psd} The Slutsky matrix $\bm{D}$ is negative semidefinite.
\end{PropertyA} This property holds because the demand function can be taken to arise from a representative household (with a twice-differentiable utility function for goods equal to the sum of the consumers' utilities, $V^h$).

\subsection{Surpluses} \label{sec:Interventions}
The authority cares about the surplus that different market participants obtain in equilibrium. We focus on three canonical metrics: consumer surplus $C$, producer surplus $P$, and total surplus $W=C+P-S$, where $S$ is the authority's expenditure. Let $C^h$ denote the consumer surplus accruing to consumer $h$. Given an intervention $\bm{\sigma}$, equilibrium price $\bm{p}(\bm{\sigma})$ and equilibrium quantity profiles
$\{\bm{q}^h(\bm{p}(\bm{\sigma}))\}_{h=1, \dots, H}$, these are: 
\[
C=\sum_{h} {C}^h \quad \text{where} \quad  {C}^h= V^h(\bm{q}^h(\bm{p}(\bm{\sigma}))) - \bm{q}^h(\bm{p}(\bm{\sigma})) \cdot \bm{p}(\bm{\sigma}),
\]
\begin{equation}
P=(\bm{p}(\bm{\sigma})-\bm{c}+\bm{\sigma}) \cdot \bm{q}(\bm{p}(\bm{\sigma})) %
\label{eq:welfare_defs}
\end{equation}

\subsection{The statistical framework and robust interventions}
 A \emph{market state}, denoted by $\bm{\theta}$, is a pair $(\bm{D},\bm{q}^0)$; the set of possible market states is $\bm{\Theta}$. The authority receives a signal, denoted by $\widehat{\bm{\theta}} \in \widehat{\bm{\Theta}}$, about the market state. In our setting, $\widehat{\bm{\theta}}$ consists of signals about the Slutsky matrix and quantities:  
 $$ \widehat{\bm{D}} = \bm{D} + \bm{E} \qquad \text{and} \qquad \widehat{\bm{q}}^0 = \bm{q}^0 + \bm{\varepsilon}.  $$ {The joint distribution of $(\bm{E},\bm{\varepsilon})$ may depend on $\bm{\theta}$; we denote the resulting signal distribution by $\varphi_{\bm{\theta}} \in \Delta(\widehat{\bm{\Theta}})$.  }We will assume for simplicity throughout that, conditional on the market state $\bm{\theta}$, the quantity errors $\varepsilon_i$ have expectation zero, are i.i.d., and are independent of the matrix observation noise $\bm{E}$.

An \emph{intervention rule} $\bm{R}$ maps each signal $\widehat{\bm{\theta}}=(\widehat{\bm{D}},\widehat{\bm{q}}^0)$ to an intervention $\bm{\sigma}$.\footnote{We require this map to be measurable in a suitable sense, which is clear in our application.}
A \emph{market outcome} is a pair $(\bm{\theta},\bm{\sigma})$  consisting of a market state and an intervention; we refer to this pair as an outcome because it determines production, consumption, and  transfers.  A \emph{property} is a measurable subset  $\mathscr{P}$ of all possible market outcomes. For example, a desirable property is increasing total surplus: $ \mathscr{P} = \{ (\bm{\theta},\bm{\sigma}) : \dot{W}_{\bm{\sigma}} > 0$\}. We are interested in understanding which properties $\mathscr{P}$ can be achieved with high probability in all market states.

\begin{definition} \label{def:robust_intervention}
     An intervention rule $\bm{R}$ achieves a property $\mathscr{P}$ \emph{$\epsilon$--robustly} if, for every $\bm{\theta} \in \bm{\Theta}$, $$ \varphi_{\bm{\theta}} \left(  \widehat{\bm{\theta}} :  (\bm{\theta} , \bm{R}(\widehat{\bm{\theta}})) \in \mathscr{P}  \right) \geq 1-\epsilon. $$
\end{definition}

Note that we require this condition to hold in every state, so the only randomness in this definition is in the signal draw.\footnote{Equivalently, the condition is the requirement that the probability of achieving $\mathscr{P}$ is at least $1-\epsilon$ under every prior over $\bm{\Theta}$.} This is the frequentist approach following \citeauthor*{wald1950statistical}'s (\citeyear{wald1950statistical}) statistical decision theory. A recent line of research in economics works in this paradigm:   welfare or regret guarantees are given uniformly over structural economic parameters $\bm{\theta}$ (over which the analyst does not have a prior); signals conditional on such parameters are analyzed probabilistically, with their distribution being implied by the analyst's statistical procedure.\footnote{Our \Cref{ap:sampling} describes a sampling procedure giving rise to signals about the state in our framework.} For examples of recent work following this paradigm, see \citet{manski2004statistical}, \citet{kitagawa2018should} and \citet{athey2021policy}. It is worth remarking that \Cref{def:robust_intervention} defines success in terms of the probability of a good outcome. In our setting, we will be able to deduce from this statements about expected performance  conditional on $\bm{\theta}$ (which hold uniformly over $\bm{\theta}$), as is typical in the literature we have just mentioned.

In our asymptotic results, we say that an intervention rule achieves a property \emph{robustly} if, for any $\epsilon > 0$, the rule achieves the property $\epsilon$-robustly for all sufficiently large $n$.

\subsection{Asymptotic analysis in large markets}

It will be convenient to have notation for expressing that some quantity asymptotically vanishes relative to others.

An \emph{environment} is a sequence $$\mathcal{E}:=\left(\bm{\Theta}(n),\varphi(n)\right)_n,$$ where $\bm{\Theta}(n)$ is a set of market states with $n$ firms  and $\varphi(n)$ describes the distribution of noise.\footnote{ Given that the distribution of noise can depend on the market state, $\varphi(n)$ is a map $\bm{\theta} \mapsto \varphi_{\bm{\theta}}(n)$ assigning to each state $\bm{\theta}\in \bm{\Theta}(n)$ a signal distribution.} Given $\mathcal{E}$, for a sequence of real-valued random variables $X_n(\bm{\theta})$ and a constant $x$, we say that $X_n \xrightarrow{p} x$ \emph{uniformly over $\bm{\theta}\in\bm{\Theta}(n)$} if for every $\eta>0$,
\[
\sup_{\bm{\theta}\in\bm{\Theta}(n)}\mathbb{P}_{\varphi_{\bm{\theta}}(n)}\big(|X_n(\bm{\theta})-x|>\eta\big)\to_n 0.
\]
We write $X_n=o_{\mathrm{p},\bm{\Theta}}(1)$ if $X_n\xrightarrow{p}0$ uniformly over $\bm{\theta}\in\bm{\Theta}(n)$. More generally, we write $X_n=o_{\mathrm{p},\bm{\Theta}}(f(n))$ if $X_n/f(n)\xrightarrow{p}0$ uniformly over $\bm{\theta}\in\bm{\Theta}(n)$.  Finally, we write $X_n=O_{\mathrm{p},\bm{\Theta}}(1)$ if for every $\epsilon>0$ there exists $M<\infty$ such that $$\sup_{\bm{\theta}\in\bm{\Theta}(n)}\mathbb{P}_{\varphi_{\bm{\theta}}(n)}(|X_n(\bm{\theta})|>M)<\epsilon$$ for all sufficiently large $n$.
In this notation, the roman subscript $\mathrm{p}$ denotes convergence in probability and is not an additional parameter, while $\bm{\Theta}$ records that the convergence is uniform over the sequence of state spaces $(\bm{\Theta}(n))_n$ in the environment.

We will maintain the following technical assumption, which makes asymptotics well-behaved.
\begin{assumption}\label{Ass:Asymptotics} The environment $\mathcal{E}:=\left(\bm{\Theta}(n),\varphi(n)\right)_n$ is such that, for all $(\bm{D},\bm{q}^0) \in \bm{\Theta}(n)$, there exist universal constants $0<d_{\min}\le d_{\max}<\infty$ such that
\[
 -D_{ii} \in [d_{\min},d_{\max}]  \quad \text{for all } i.
\]   
\end{assumption}
\Cref{Ass:Asymptotics} rules out the possibility that own-price demand effects diverge relative to one another as we progress along the sequence defining the environment.

\subsection{Normalization of the Slutsky matrix} \label{sec:norm}
For a given market state $\bm{\theta}=(\bm{D},\bm{q}^{0})$, it will be convenient to describe the market outcomes with respect to a normalized Slutsky matrix, denoted by $\underline{\bm{D}}$. Specifically, the normalized Slutsky matrix is: 
\[
\underline{\bm{D}}=[\text{diag}(-\bm{D})]^{-1/2} \bm{D} [\text{diag}(-\bm{D})]^{-1/2}.
\]
where $\text{diag}(-\bm{D})$ is a diagonal matrix retaining only the diagonal elements of $-\bm{D}$ and setting all off-diagonal elements to zero. This means that 
\[
\underline{D}_{ii}=-1
\quad\text{and}\quad 
\underline{D}_{ij}=\frac{D_{ij}}{\sqrt{D_{ii}D_{jj}}} \quad\text{for all } i\neq j.
\]
This normalization entails a change of units for quantities from $\bm{q}^0$ to
\[
\underline{\bm{q}}^0=[\text{diag}(-\bm{D})]^{-1/2}\bm{q}^0.
\]
Keeping units of money fixed, the corresponding transformations for prices, costs, and interventions are:
\[
\underline{\bm{p}}^0=[\text{diag}(-\bm{D})]^{1/2}\bm{p}^0,
\quad
\underline{\bm{c}}^0=[\text{diag}(-\bm{D})]^{1/2}\bm{c}^0,
\quad
\underline{\bm{\sigma}}=[\text{diag}(-\bm{D})]^{1/2}\bm{\sigma}.
\] 

This normalization amounts to choosing  units for the goods so that  each good's own-price effect is $-1$, making the diagonal entries of $\underline{\bm{D}}$ homogeneous.\footnote{Every underlined variable (e.g., $\underline{\bm{q}}^0,\underline{\bm{p}}^0$) is defined relative to the Slutsky matrix $\bm{D}$, which depends on the state $\bm{\theta}$. A fully explicit notation would subscript the normalization operation (underline) by $\bm{\theta}$, but we omit this and take, in any single expression, the state to be a fixed $\bm{\theta}$.} The normalized coordinates are convenient because, in these, the equilibrium first-order conditions of equation (\ref{eq:FOC}) become:
\begin{equation} \label{eq:simple_demand_identity} \underline{q}_i(\bm{\underline{p}}(\bm{\underline{\sigma}}))=\underline{p}_i(\bm{\underline{\sigma}})-\underline{c}_i^0 + \underline{\sigma}_i, \end{equation} which we exploit throughout.

\section{The case of complete information: Surplus outcomes}\label{Sec:FullInfo}

As a benchmark, we characterize the effects of interventions undertaken by an authority with complete information on the market state. Formally,  $n$ is an arbitrary fixed number and $\bm{\Theta}(n)$ is a singleton throughout this section.

\begin{prop}  \label{prop:Welfare CI} \quad
\begin{itemize}
 \item[1.] For any intervention $\bm{\sigma}$, we have: 
 \begin{equation} \dot{C}_{\bm{\sigma}} +\frac{1}{2}\dot{P}_{\bm{\sigma}} =\dot{S}_{\bm{\sigma}}. \label{eq:Pareto_Identity} \end{equation} 
 Hence, if $\dot{C}_{\bm{\sigma}} \geq 0$, then $\dot{C}_{\bm{\sigma}} + \dot{P}_{\bm{\sigma}}  \leq 2 \dot{S}_{\bm{\sigma}}$.
\item[2.] Assume $\bm{D}$ has at least one nonzero off-diagonal entry. For generic $\bm{q}^0$, any surplus outcome $(\dot{C},\dot{P},\dot{S})$ that satisfies (\ref{eq:Pareto_Identity}) can be implemented by some intervention $\bm{\sigma}$.

 \end{itemize}
 
\end{prop}

It is useful to relate \Cref{prop:Welfare CI} to the standard single-product monopolist facing linear demand with constant marginal cost. In this case, a small per-unit subsidy $\sigma$ lowers the equilibrium price by $\sigma/2$ and so the consumer gains half the subsidy expenditure ($\dot{C} = \frac{1}{2}\dot{S}$). The producer, instead, gains $\dot{S}/2$ from the static incidence on the initial quantity, plus another $\dot{S}/2$ from the profit on the newly induced volume, totaling the full subsidy amount $\dot{S}$.

In the monopoly ($n=1$) setting, these changes satisfy (\ref{eq:Pareto_Identity}). \Cref{prop:Welfare CI} shows that this aggregate constraint carries over to an oligopoly with price competition. 

\begin{proof}[Proof of part (1).]
We fix the state $(\bm{D},\bm{q}^0)$ and present the surplus formula under the normalization introduced in \Cref{sec:norm}.  (Surpluses are invariant to the changes of units performed in the normalization, because the unit of money was not changed.) 
The effect of $\bm{\sigma}$ on consumer surplus satisfies the Marshallian formula 
$\dot{C}_{\bm{\sigma}}=-  \underline{\bm{q}}^0 \cdot \dot{\underline{\bm{p}}}_{\bm{\sigma}}$. 
The effect on producer surplus is 
$$\dot {P}_{\bm{\sigma}}=  \underline{\bm{q}}^0 \cdot (\dot{\underline{\bm{p}}}_{\bm{\sigma}}+\underline{\bm{\sigma}}) + (\underline{\bm{p}}^0-\underline{\bm{c}}) \cdot \dot{\underline{\bm{q}}}_{\bm{\sigma}}=2\underline{\bm{q}}^0\cdot \dot{\underline{\bm{q}}}_{\bm{\sigma}},$$ 
where the second equality follows because \Cref{eq:simple_demand_identity} at $\underline{\bm{\sigma}}=\bm{0}$ yields $\underline{\bm{q}}^0=\underline{\bm{p}}^0-\underline{\bm{c}}^0$. Applying $\left.\frac{d}{dt}\right|_{t=0}$ to both sides of \Cref{eq:simple_demand_identity} along the path $t\underline{\bm{\sigma}}$ gives
$\dot{\underline{\bm{q}}}_{\bm{\sigma}}=\dot{\underline{\bm{p}}}_{\bm{\sigma}}+\underline{\bm{\sigma}}$. Hence, any $(\dot{C}_{\bm{\sigma}},\dot{P}_{\bm{\sigma}}, \dot{S}_{\bm{\sigma}})$ satisfy
\[
\dot{C}_{\bm{\sigma}}+\frac{1}{2}\dot{P}_{\bm{\sigma}}=-\underline{\bm{q}}^0[\dot{\underline{\bm{p}}}_{\bm{\sigma}}-\dot{\underline{\bm{q}}}_{\bm{\sigma}}]={\underline{\bm{q}}}^0 \cdot \underline{\bm{\sigma}}=\dot{S}_{\bm{\sigma}},
\]
where the first equality follows by substituting the expressions for $\dot{C}_{\bm{\sigma}}$ and $\dot{P}_{\bm{\sigma}}$, the second equality follows by using $\dot{\underline{\bm{q}}}_{\bm{\sigma}}=\dot{\underline{\bm{p}}}_{\bm{\sigma}}+\underline{\bm{\sigma}}$, and the third equality by (\ref{eq:dotS}).  

The proof of part (2) uses the techniques we develop in \Cref{sec:spt}, and we relegate it to \Cref{app:complete}. \end{proof}

If products are independent or there is only one product (i.e., $\bm{D}$ is a diagonal matrix), then an intervention implements a specific point of (\ref{eq:Pareto_Identity}): $\dot{C}=\dot{S}/2$ and $\dot{P}=\dot{S}$. Part (2) of \Cref{prop:Welfare CI} shows that, as long as there is demand interdependency across products (i.e., $\bm{D}$ is not a diagonal matrix), the authority can leverage the resulting demand spillovers to generate any point that satisfies (\ref{eq:Pareto_Identity}). This, however, requires precise information about $\bm{D}$. To see this, note that the total surplus pass-through of an intervention $\bm{\sigma}$ is\footnote{To see this, note that $\dot{W}_{\bm{\sigma}} = \dot{C}_{\bm{\sigma}}+\dot{P}_{\bm{\sigma}} -\dot{S}_{\bm{\sigma}}$ and, using (\ref{eq:Pareto_Identity}) and the expressions for $\dot{P}$, we obtain $\dot{W}_{\bm{\sigma}}=   \dot{P}_{\bm{\sigma}}/2=\underline{\bm{q}}^0\cdot(\underline{\bm{D}} \dot{\underline{\bm{p}}}_{\bm{\sigma}})=-\underline{\bm{q}}^0\cdot\underline{\bm{D}}[\bm{I}-\underline{\bm{D}}]^{-1}\underline{\bm{\sigma}}$.}
\begin{equation}  \dot{W}_{\bm{\sigma}} = \underline{\bm{q}}^0 \cdot  \underline{\bm{D}} \dot{\underline{\bm{p}}}_{\bm{\sigma}} =- \underline{\bm{q}}^0 \cdot  \underline{\bm{D}} [\bm{I}-\underline{\bm{D}}]^{-1}\underline{\bm{\sigma}}.\label{eq:Wdot_formula}
\end{equation}
 
 Matrix inverses such as $[\bm{I}-\underline{\bm{D}}]^{-1}$ can be extremely sensitive to the entries of $\bm{D}$. Therefore, without precise knowledge of $\bm{D}$ and $\bm{q}^0$, the authority may not be able to implement a desired point on the Pareto frontier defined by (\ref{eq:Pareto_Identity}). In particular, the authority may not even be sure that a given intervention will increase total surplus ($\dot{W}_{\bm{\sigma}}>0$) rather than decrease it ($\dot{W}_{\bm{\sigma}}<0$). Indeed, it seems hard to justify the detailed study of comparative statics such as (\ref{eq:Wdot_formula}), at least when $n$ is large, without confronting the uncertainty about the ingredients of the formula an analyst or authority is likely to face.

These observations motivate the central question of this paper: \emph{Which interventions have surplus effects that can be predicted with confidence by an authority facing substantial uncertainty about market primitives?} The rest of the paper addresses this question.

The main result of this paper (\Cref{Th:Main} in \Cref{sec:mainresult}) provides a sufficient condition---called ``significant structure''---for the existence of intervention rules that improve surplus robustly, and characterizes these interventions. The next section illustrates these ideas using a simple example.

\section{Illustrative example: A block model of demand} \label{sec:SBM} 
To motivate our main definition (presented in \Cref{sec:rs}) we consider a simple example within a canonical oligopoly model and give a version of our main result in this setting (\Cref{prop:example}).

Suppose there are $K>1$ different product categories, each with a fixed number $m>1$ of products, for a total of $n=m K$ products. Throughout this example, the asymptotic sequence fixes $m$ and lets $K$ grow, so $n=mK$ grows by adding product categories. Products in each category are (imperfect) substitutes, while products across categories are complements. For instance, one product category may consist of different tennis rackets, while another consists of different tennis shoes.

Formally, we take $\mathcal{E}=(\mathbf{\Theta}(n),\varphi(n))_n$ to be an environment defined by a block model of demand. For a given state $\bm{\theta}=(\bm{D},\bm{q}^0)\in \bm{\Theta}(n)$ the demand relationships between products are specified by a $K$-by-$K$ symmetric matrix $\bm{B}(\bm{\theta})$:
$$ D_{ij} = \begin{cases}
-1 & \text{if } i = j, \\
 B_{k(i), k(i)}>0 & \text{if}  \ k(i)= k(j) \ \text{and} \ i\neq j \\ 
  B_{k(i), k(j)}<0 &  \text{if} \  k(i)\neq k(j),
\end{cases}$$
where $k(i)$ is the product category of $i$.\footnote{Note that, in this example, we have made the simplifying assumption that, for each market state $\bm{\theta}=(\bm{D},\bm{q}^0)$, $D_{ii}=-1$ for all $i$ (so that no normalization is required).} We assume without loss of generality (\Cref{rem:normalization} in the appendix) that the quantities satisfy $\Vert\bm{{q}}^0\Vert=1$. \Cref{Figure:Ex-block} depicts a two-product-category demand model.

\begin{figure}
\[
\begin{tikzpicture}[scale=0.72, baseline={(current bounding box.center)}]
    \definecolor{blockblue}{RGB}{80, 130, 200}     %
    \definecolor{blockred}{RGB}{180, 60, 60}       %

    \def\matsize{2.3}
    \def\halfsize{1.15}

    \node at (-2.2,\halfsize) {\large $\bm{D}=$};

    \begin{scope}[shift={(-0.4,0)}]
        \fill[white] (0,0) rectangle (\matsize,\matsize);
        \fill[blockred!75] (0,\halfsize) rectangle (\halfsize,\matsize);
        \fill[blockred!75] (\halfsize,0) rectangle (\matsize,\halfsize);
        \fill[blockblue!70] (0,0) rectangle (\halfsize,\halfsize);
        \fill[blockblue!70] (\halfsize,\halfsize) rectangle (\matsize,\matsize);
        \draw[black!70, line width=0.6pt] (0,0) rectangle (\matsize,\matsize);
        \draw[black!35, dashed, line width=0.5pt] (0,\halfsize) -- (\matsize,\halfsize);
        \draw[black!35, dashed, line width=0.5pt] (\halfsize,0) -- (\halfsize,\matsize);
        \node[above=0.08cm, font=\scriptsize] at (0.575,\matsize) {$A$};
        \node[above=0.08cm, font=\scriptsize] at (1.725,\matsize) {$B$};
        \node[left=0.08cm, font=\scriptsize] at (0,1.725) {$A$};
        \node[left=0.08cm, font=\scriptsize] at (0,0.575) {$B$};
    \end{scope}
\end{tikzpicture}
\]
\caption{Two product-category demand: products within a category are substitutes (red blocks) and across categories products are complements (blue blocks)}\label{Figure:Ex-block}
\end{figure}

\begin{condition}\label{Ass:B-Model} Assume that $\mathcal{E}=(\mathbf{\Theta}(n),\varphi(n))_n$ satisfies the following conditions:
 \begin{enumerate}[(a)]
\item There exists $\kappa<0$ and $\delta>0$ such that, in every market state $\bm{\theta}\in \bm{\Theta}(n)$:
\begin{enumerate}[(1)]
\item There is a minimum level of complementarity across product categories:  $D_{ij}<\kappa$ whenever $k(i)\neq k(j)$;
\item The quantities satisfy $\sum_i {q}_i^0 > \delta \sqrt{n}$.
\end{enumerate}
\item The errors $E_{ij}$ in observing $\bm{D}$ are independent, mean-zero,  and bounded uniformly over $\bm{\Theta}(n)$ and $n$. The quantity $\bm{q}^0$ is observed without noise.
\end{enumerate}
\end{condition}
\Cref{sec:hedonic} shows that \Cref{Ass:B-Model}(a) is consistent with a demand function derived by optimal consumption of a representative consumer with quadratic preferences \citep[see, e.g.,][]{VivesandSing}, and that the pricing game outcome is well behaved when $K$ grows large. \Cref{ap:sampling} shows that \Cref{Ass:B-Model}(b) is consistent with an authority estimating the market state by sampling households making two-good tradeoffs.

Part~(a)(2) of \Cref{Ass:B-Model} is worth a brief remark: it is satisfied by the norm-1 quantity vector given by ${q}_i^0 =\frac{1}{\sqrt{n}}$ for all $i$ (uniform quantities), but not satisfied by a vector that puts quantity $1$ in some fixed $i$ and $0$ elsewhere. It amounts to a condition that quantity is not concentrated too tightly on a small set of indices.

The following proposition is the main result of our paper instantiated in this block model of demand (it follows from \Cref{Th:Main} in \Cref{sec:mainresult}).

\begin{prop}\label{prop:example} 
 Consider the block model of demand and assume \Cref{Ass:B-Model} holds. For every $\varepsilon>0$ and for every target expenditure $s>0$, there is an intervention rule that, if the number of categories $K$ is large enough along this sequence, $\epsilon$-robustly achieves the property that $(\dot{P},\dot{C},\dot{S})$ is within $\varepsilon$ Euclidean distance of   $(2s,0,s)$.
\end{prop}

Recall that \Cref{prop:Welfare CI} established that gross surplus $\dot{C}+\dot{P}$ cannot be increased by more than twice the level of expenditure without reducing consumer surplus. \Cref{prop:example} provides a set of assumptions that are sufficient for the existence of robust intervention rules that achieve this bound. %

These assumptions allow recovery of some partial information about the market state that is sufficient to design intervention rules that increase total surplus robustly without decreasing consumer surplus. In particular, our approach to constructing robust interventions takes the following route:
\begin{enumerate}
    \item Show that, for targeting interventions to achieve good surplus outcomes, it is sufficient to have information on certain eigenvectors of $\underline{\bm{D}}$---those associated with large eigenvalues.
    \item State conditions under which those eigenvectors can be recovered from the noisy signal $\widehat{\bm{D}}=\bm{D}+\bm{E}$.
\end{enumerate} 

The argument for (1) leverages a new spectral decomposition of how interventions pass through to surpluses that we present in \Cref{sec:spt}. \Cref{sec:rs} then formalizes the conditions on market states and noise needed to recover eigenvectors with associated large eigenvalues.

\section{Spectral price theory and significant structure}\label{S:PT}

\subsection{A key tool: Spectral price theory} \label{sec:spt}

Denoting by $\bm{U}$ the matrix whose $\ell^{\text{th}}$ column is the $\ell^{\text{th}}$ eigenvector $\bm{u}^{\ell}$ of $\underline{\bm{D}}$, and by $\bm{\Lambda}$ the matrix whose off-diagonal elements are zero and whose $\ell^{\text{th}}$ diagonal element is the eigenvalue $\lambda_{\ell}$ associated with $\bm{u}^{\ell}$, we have:
$$\underline{\bm{D}}=\bm{U}\bm{\Lambda}\bm{U}^{\tr}.$$

 We can think of the eigenvectors of $\underline{\bm{D}}$ as the principal components of the normalized Slutsky matrix.  An intervention $\bm{\sigma}$ that subsidizes (or taxes) a single product affects the prices and quantities of that product and, through strategic interaction, also those of other products. However, if we think of the eigenvector $\bm{u}^\ell$ as a bundle of products, an intervention $\underline{\bm{\sigma}} \propto \bm{u}^\ell$ only passes through to the price $\bm{u}^\ell \cdot \underline{\bm{p}}$ and quantity $\bm{u}^\ell \cdot \underline{\bm{q}}$ of that bundle; the prices and quantities of all the other bundles $\bm{u}^{\ell'}$ are unchanged. Hence, we can decompose an intervention $\underline{\bm{\sigma}}=\sum_\ell (\bm{u}^{\ell}\cdot \underline{\bm{\sigma}}) \bm{u}^\ell $ into a combination of $n$ orthogonal interventions, each in the direction of an eigenvector. We can use this decomposition to obtain simple expressions for the pass-through of the intervention in terms of the eigenvalues of $\underline{\bm{D}}$ as summarized by the next Lemma.

\begin{lem} \label{prop:eig} The pass-throughs from any intervention $\bm{\sigma}$ to prices and quantities of each eigenvector are as follows:
$$
 \bm{u}^{\ell} \cdot \dot{\underline{\bm{p}}}_{\bm{\sigma}}  =-\frac{1}{1+|\lambda_\ell|} \bm{u}^{\ell} \cdot \underline{\bm{\sigma}} \quad \text{ and } \quad \bm{u}^{\ell} \cdot \dot{\underline{\bm{q}}}_{\bm{\sigma}}= \lambda_\ell  (\bm{u}^\ell \cdot \dot{\underline{\bm{p}}}_{\bm{\sigma}}) =\frac{|\lambda_\ell|}{1+|\lambda_\ell|} \bm{u}^{\ell} \cdot \underline{\bm{\sigma}}.$$
\end{lem}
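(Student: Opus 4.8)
The plan is to work entirely from the two linear identities already derived: the price response equation \eqref{eq:eqm_system_Slutsky}, namely $[\bm{I}-\bm{D}]\dot{\bm{p}}=-\bm{\sigma}$, and the quantity response \eqref{eq:slope}, namely $\dot{\bm{q}}=\bm{D}\dot{\bm{p}}$. The key device is the orthogonal diagonalization $\bm{D}=\bm{U}\bm{\Lambda}\bm{U}^{\tr}$ with $\bm{U}$ orthogonal and $\bm{\Lambda}=\operatorname{diag}(\lambda_1,\dots,\lambda_n)$, which is available because $\bm{D}$ is symmetric by Property NSD. First I would record that $\bm{I}-\bm{D}=\bm{U}(\bm{I}-\bm{\Lambda})\bm{U}^{\tr}$ is invertible: since $\bm{D}$ is negative semidefinite, every $\lambda_\ell\le 0$, hence $1-\lambda_\ell=1+|\lambda_\ell|\ge 1>0$, so $(\bm{I}-\bm{\Lambda})^{-1}$ exists and $[\bm{I}-\bm{D}]^{-1}=\bm{U}(\bm{I}-\bm{\Lambda})^{-1}\bm{U}^{\tr}$. (This also confirms that $\dot{\bm{p}}$ is the unique solution of \eqref{eq:eqm_system_Slutsky}, consistent with \Cref{ass:unique}.)

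The price formula then follows by projecting onto $\bm{u}^\ell$. From $\dot{\bm{p}}=-[\bm{I}-\bm{D}]^{-1}\bm{\sigma}=-\bm{U}(\bm{I}-\bm{\Lambda})^{-1}\bm{U}^{\tr}\bm{\sigma}$ and $\bm{U}^{\tr}\bm{u}^\ell=\bm{e}_\ell$ (the $\ell$-th standard basis vector, by orthonormality of the eigenvectors), one computes
\[
\bm{u}^{\ell}\cdot\dot{\bm{p}}=-\,\bm{e}_\ell^{\tr}(\bm{I}-\bm{\Lambda})^{-1}\bm{U}^{\tr}\bm{\sigma}=-\frac{1}{1-\lambda_\ell}\,(\bm{u}^\ell\cdot\bm{\sigma})=-\frac{1}{1+|\lambda_\ell|}\,(\bm{u}^\ell\cdot\bm{\sigma}),
\]
using $\lambda_\ell=-|\lambda_\ell|$ in the last equality.

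For the quantity formula I would use symmetry of $\bm{D}$ together with the eigenvector relation $\bm{D}\bm{u}^\ell=\lambda_\ell\bm{u}^\ell$: from \eqref{eq:slope},
\[
\bm{u}^{\ell}\cdot\dot{\bm{q}}=\bm{u}^{\ell}\cdot(\bm{D}\dot{\bm{p}})=(\bm{D}\bm{u}^{\ell})\cdot\dot{\bm{p}}=\lambda_\ell\,(\bm{u}^{\ell}\cdot\dot{\bm{p}}),
\]
which is the middle expression in the claim. Substituting the price formula just derived and again writing $\lambda_\ell=-|\lambda_\ell|$ gives $\bm{u}^{\ell}\cdot\dot{\bm{q}}=\frac{|\lambda_\ell|}{1+|\lambda_\ell|}(\bm{u}^\ell\cdot\bm{\sigma})$, completing the proof.

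There is no real obstacle here: the statement is a direct consequence of diagonalizing a symmetric matrix and the two equilibrium identities. The only point requiring a word of care is invertibility of $\bm{I}-\bm{D}$ (and hence well-definedness of $\dot{\bm{p}}$), which is exactly where negative semidefiniteness from Property NSD is used; everything else is bookkeeping with orthonormal bases.
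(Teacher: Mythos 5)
Your proof is correct and takes essentially the same approach as the paper: both diagonalize $\bm{D}=\bm{U}\bm{\Lambda}\bm{U}^{\tr}$ and project the linear equilibrium identities \eqref{eq:eqm_system_Slutsky} and \eqref{eq:slope} onto the eigenbasis, with the sign conversion $\lambda_\ell=-|\lambda_\ell|$ from negative semidefiniteness. The only cosmetic difference is that you obtain the quantity line by moving $\bm{D}$ across the inner product via symmetry, whereas the paper multiplies \eqref{eq:slope} through by $\bm{U}^{\tr}$; you also spell out invertibility of $\bm{I}-\bm{D}$, which the paper leaves implicit.
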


\begin{proof}
Using the normalization, the equation (\ref{eq:PricePass}) becomes $[\bm{I}-\underline{\bm{D}}]\dot{\underline{\bm{p}}}_{\bm{\sigma}}=-\underline{\bm{\sigma}}$; substituting $\underline{\bm{D}}=\bm{U}\bm{\Lambda}\bm{U}^{\tr}$ we get
$\left(\bm{I}- \bm{U} \bm{\Lambda} \bm{U}^\tr \right) \dot{\underline{\bm{{p}}}}_{\bm{\sigma}}=-\underline{\bm{\sigma}}.$
Multiplying both sides by $\bm{U}^\tr$, we get the vector equation $\bm{U}^{\tr}\dot{\underline{\bm{{p}}}}_{\bm{\sigma}}=-(\bm{I}-\bm{\Lambda})^{-1} \bm{U}^{\tr} \underline{\bm{\sigma}}$ and, using equation (\ref{eq:PricePass}), we also get $\bm{U}^{\tr}\dot{\underline{\bm{q}}}_{\bm{\sigma}}=-\bm{\Lambda}(\bm{I}-\bm{\Lambda})^{-1} \bm{U}^{\tr} \underline{\bm{\sigma}}$. The scalar formulas in \Cref{prop:eig} follow from (i) taking the $\ell$-th coordinate of these vector equations, using that $\bm{U}$ is orthonormal and $(\bm{I}-\bm{\Lambda})^{-1}$ is diagonal, and (ii) recalling from Property NSD that $\underline{\bm{D}}$ is negative semidefinite, so $\lambda_{\ell}\leq 0$ and hence $1-\lambda_{\ell}=1+|\lambda_\ell|$.
\end{proof}

The magnitudes of the price and quantity pass-throughs in each direction $\bm{u}^\ell$ are ordered according to their corresponding eigenvalues: The larger $|\lambda_{\ell}|$ is, the less a subsidy $\bm{u}^{\ell}\cdot \underline{\bm{\sigma}}$ reduces prices, but the more it increases quantities. This asymmetry is the result of two opposing forces: On the one hand, the strategic interactions among firms imply that the equilibrium price $\bm{u}^{\ell}\cdot \underline{\bm{p}}$ is less sensitive to the subsidy $\bm{u}^{\ell}\cdot \underline{\bm{\sigma} }$ the larger is $|\lambda_{\ell}|$. On the other, the demand $\bm{u}^{\ell}\cdot \underline{\bm{q}}$ is more sensitive to the price $\bm{u}^{\ell}\cdot \underline{\bm{p}}$ the larger is $|\lambda_{\ell}|$; this is just a fact about the market's demand function, rather than equilibrium pricing.\footnote{Indeed, it follows from (\ref{eq:PricePass}) that $\bm{U}^{\tr}\underline{\dot{\bm{q}}}=\bm{\Lambda}\bm{U}^{\tr}\underline{\dot{\bm{p}}}$, so the derivative of the demand $\bm{u}^{\ell}\cdot \underline{\bm{q}}$  with respect to $\bm{u}^{\ell}\cdot \underline{\bm{p}}$ is equal to $\lambda_{\ell}$.} \Cref{prop:eig} shows that the second effect dominates the first in the sense that the larger is $|\lambda_{\ell}|$, the more sensitive is the equilibrium quantity $\bm{u}^{\ell}\cdot \underline{\bm{q}}$ to the subsidy $\bm{u}^{\ell}\cdot \underline{\bm{\sigma}}$. 

By combining the spectral decomposition of \Cref{prop:eig} with the surplus formulas derived in \Cref{sec:Interventions} we obtain that the effect of an intervention on consumer, producer, or total  surplus is a weighted sum of pass-throughs to each of the eigenvectors $\bm{u}^\ell$---with the weight being the corresponding bundle's quantity $\bm{u}^{\ell}\cdot \underline{\bm{q}}^0$.
\begin{lem}
The pass-throughs to consumer, producer, and total surpluses are:
$$
\dot{C}_{\bm{\sigma}} =
 -\sum_{\ell=1}^{n} (\bm{u}^{\ell}\cdot \underline{\bm{q}}^0)(\bm{u}^{\ell} \cdot \dot{\underline{\bm{p}}}_{\bm{\sigma}} ) \quad  
 \dot{P}_{\bm{\sigma}} =
 2 \sum_{\ell=1}^{n}(\bm{u}^{\ell}\cdot \underline{\bm{q}}^0)  (\bm{u}^{\ell} \cdot \dot{\underline{\bm{q}}}_{\bm{\sigma}} ) \quad \text{and} \quad \dot{W}_{\bm{\sigma}} =\dot{P}_{\bm{\sigma}}/2.
$$
\label{lem:welfare-pass}
\end{lem}
\begin{proof}
The effect of the intervention on consumer surplus is $\dot{C}_{\bm{\sigma}}=-  \underline{\bm{q}}^0 \cdot \dot{\underline{\bm{p}}}_{\bm{\sigma}}$. Using the fact that the eigenvectors form an orthonormal basis ($\bm{U}\bm{U}^{\tr}=\bm{I}$), we insert the identity matrix into the dot product to obtain  $\dot{C}_{\bm{\sigma}}=-\bm{U}^{\tr}\underline{\bm{q}}^0\cdot \bm{U}^{\tr} \dot{\underline{\bm{p}}}_{\bm{\sigma}}$. Similarly, the effect of the intervention on producer surplus is $\dot {P}_{\bm{\sigma}}=  \underline{\bm{q}}^0 \cdot (\dot{\underline{\bm{p}}_{\bm{\sigma}}}+\underline{\bm{\sigma}}) + (\underline{\bm{p}}^0-\underline{\bm{c}}) \cdot \dot{\underline{\bm{q}}}_{\bm{\sigma}}=2\underline{\bm{q}}^0\cdot \dot{\underline{\bm{q}}}_{\bm{\sigma}}$. Multiplying by $\bm{U}\bm{U}^{\tr}$ yields the expression of $\dot{P}_{\bm{\sigma}}$ in the Lemma. The expression for $\dot{W}_{\bm{\sigma}}$ is obtained by aggregating $\dot{P}_{\bm{\sigma}}$, $\dot{C}_{\bm{\sigma}}$ and the intervention expenditure.   
\end{proof}

We now explain why identifying large-eigenvalue eigenvectors of $\underline{\bm{D}}$ is helpful for the design of interventions that achieve a certain  point on the full-information surplus frontier, i.e., \Cref{eq:Pareto_Identity}. Indeed, suppose the authority knows $\bm{u}^1$ and evaluates the effect of intervention  $\underline{\bm{\sigma}}=\bm{u}^1$.  This intervention costs $\dot{S}_{\bm{\sigma}}=\bm{u}^1\cdot \underline{\bm{q}}^0$ and leads to an overall change in consumer and producer surplus equal to 
\begin{eqnarray*} \label{eq:welfare-pass-2} 
\dot{C}_{\bm{\sigma}}
=\frac{1}{1+|\lambda_1|}\dot{S}_{\bm{\sigma}} \quad \text{and} \quad \dot{P}_{\bm{\sigma}}=
 2  \frac{|\lambda_1|}{1+|\lambda_1|}\dot{S}_{\bm{\sigma}}.
\end{eqnarray*}
Hence, if:
\begin{itemize}
\item[a.] the normalized status quo quantity $\underline{\bm{q}}^0$ is not orthogonal to $\bm{u}^1$, which implies that $\dot{S}_{\bm{\sigma}}\neq 0$, and
\item[b.] the eigenvalue $\lambda_1$ associated with $\bm{u}^1$ is, in absolute value, sufficiently large,
\end{itemize}
then the intervention $\underline{\bm{\sigma}}=\bm{u}^1$ achieves $\dot{C}_{\bm{\sigma}}/\dot{S}_{\bm{\sigma}}\approx 0$ and $\dot{P}_{\bm{\sigma}}/\dot{S}_{\bm{\sigma}} \approx 2$. By \Cref{prop:Welfare CI}, this rate of surplus increase per dollar spent is the highest possible  subject to the constraint that the change in consumer surplus is non-negative. At this point, the sign of $\dot{S}$, $\dot{P}$, $\dot{W}$ is undetermined, because the sign of $\bm{u}^1 \cdot \underline{\bm{q}}^0$ is undetermined.\footnote{Recall that $\bm{u}^1$ and $-\bm{u}^1$ are equally good eigenvectors of $\underline{\bm{D}}$, so by bad luck we may have chosen a $\bm{u}^1$ that makes spending and surplus change negative.} By choosing the sign of the intervention appropriately, we can achieve an intervention that increases total surplus. It is not important that we design the intervention proportional to $\bm{u}^1$; if an intervention is approximately contained in a subspace spanned by eigenvectors that satisfy conditions (a) and (b), then the same reasoning applies to it. 

This discussion suggests that recovering information about large eigenvalues of $\underline{\bm{D}}$ and their eigenvectors is very helpful for designing an intervention. The next section develops a set of conditions on the underlying market state and observation noise, called \textit{significant structure}, that ensures the authority can do this. 

\subsection{Significant structure} \label{sec:rs}

 For matrices, $\|\cdot\|$ is the induced operator norm, which for symmetric matrices is the maximum absolute value of the eigenvalues.
  We also denote by $\mathcal{L}(\underline{\bm{D}},b) \subseteq \mathbb{R}^n$ the space spanned by eigenvectors of $\underline{\bm{D}}$ with eigenvalues at least $b$ in absolute value.

\begin{definition}\label{Def:SS} An environment $\mathcal{E}=\left(\bm{\Theta}(n),\varphi(n)\right)_n$ has significant structure if there is a sequence $b(n) \to \infty$ satisfying the following conditions.

\begin{enumerate}
  \item The quantity error vector $\bm{\varepsilon}$ is comparable to the quantity vector, in the sense that $\Vert \bm{\varepsilon} \Vert  = O_{\mathrm{p},\bm{\Theta}}(\Vert \bm{q}^0 \Vert).$ 
  \item The demand observation noise is negligible relative to $b(n)$ in the sense that $\Vert\bm{E}\Vert=o_{\mathrm{p},\bm{\Theta}}(b(n)) $. 
   \item Some eigenvalues of $\underline{\bm{D}}$ have absolute value at least $b(n)$ and normalized quantities have non-vanishing projection onto the eigenspaces spanned by corresponding eigenvectors: there exists $\delta>0$ such that, uniformly over market states and $n$, $$\big\Vert P_{\mathcal{L}(\underline{\bm{D}},b(n))}\,\underline{\bm{q}}^0\big\Vert\ge \delta \Vert \underline{\bm{q}}^0 \Vert.$$    
   
\end{enumerate}
\end{definition}

Note that in (1) and (2), the asymptotic notation requires that the statement hold uniformly over market states, under the error distribution $\varphi_{\bm{\theta}}(n)$ corresponding to that market state.

Condition (1) says that the quantity signal $\widehat{\bm{q}}^0=\bm{q}^0+\bm{\varepsilon}$ is not swamped by noise in overall magnitude. It permits the application of laws of large numbers to identify weighted moments of $\bm{q}^0$ from $\widehat{\bm{q}}^0$. Intuitively, if quantities are too small relative to the noise, the authority has no way to identify or target the sign of terms in the welfare decomposition, so it is not surprising that such a condition should be involved.

In the remaining conditions, a sequence $b(n)$ defines a diverging  threshold for eigenvalues being considered ``large.'' Condition (2) requires that the demand-signal noise $\bm{E}$ is small in operator norm relative to this threshold, while condition (3) requires that $\underline{\bm{D}}$ has some eigenvalues whose absolute value exceeds it. Together, conditions (2) and (3) ensure that the eigenvectors associated with large eigenvalues can be recovered from the noisy observation $\underline{\widehat{\bm{D}}}$; the key tool for this is the Davis--Kahan theorem, discussed in \Cref{S:DK}.

Condition (3) also requires that $\underline{\bm{q}}^0$ has a non-vanishing projection onto the eigenvectors of large eigenvalues. Together with (1), this ensures that the authority can align interventions  well enough so that, at prevailing quantities, an intervention based on major eigenvectors has non-trivial expenditure. If, instead, the status-quo quantities were asymptotically orthogonal to the recoverable eigenspace,  the identifiable structure of $\underline{\bm{D}}$ would be unrelated to the quantities, leading to negligible and/or very noisy welfare effects from interventions based on this structure.

\subsubsection{Significant structure: Example}\label{sec:illustration3}
We now illustrate the concept of significant structure by presenting an instance of the block model of demand in \Cref{sec:SBM} in which significant structure is present. For simplicity, this instance has the feature that the substitutability within each of the $K$ categories is homogeneous, as is the complementarity among them. That is, for market state $\bm{\theta}\in \bm{\Theta}(n)$,  the Slutsky matrix is 
\[
D_{ij} = 
\begin{cases}
-1 & \text{if } i = j, \\
\alpha_{\bm{\theta}} < 0 & \text{if } k(i) \neq k(j), \\
\omega_{\bm{\theta}} > 0 & \text{otherwise.}
\end{cases}
\]
The associated eigenvalues of $\bm{D}$ are 
$$
\begin{array}{ll}

\lambda_3(\bm{\theta})=-1 - \omega_{\bm{\theta}}   & \text{with multiplicity $K(m-1)$}\\
\lambda_2(\bm{\theta})=\lambda_3(\bm{\theta}) +m (\omega_{\bm{\theta}}-\alpha_{\bm{\theta}})   & \text{with multiplicity $(K-1)$}\\
\lambda_1(\bm{\theta})=\lambda_2(\bm{\theta})+mK\alpha_{\bm{\theta}}   & \text{with multiplicity $1$.}
\end{array}
$$
Property NSD is satisfied if and only if all these eigenvalues are negative, which is the case as long as $\lambda_2(\bm{\theta})<0$ for all $\bm{\theta}$. This imposes an upper bound on the number of products within each category, namely, $$m<m(n):=\inf_{\bm{\theta}\in \bm{\Theta}(n)}\lfloor (1+\omega_{\bm{\theta}})/(\omega_{\bm{\theta}}-\alpha_{\bm{\theta}}) \rfloor.$$

Consider an environment $\mathcal{E} := \{ (\bm{\Theta}(n), \varphi(n)) \}_{n}$ constructed by successively adding product categories of fixed size $m$, with $n=mK$ and fixed $m<m(n)$ along the sequence. Assuming that the environment satisfies \Cref{Ass:B-Model}\footnote{That is, cross-category complementarities are uniformly bounded away from zero, aggregate quantity satisfies $\sum_i {q}_i^0 > \delta \sqrt{n}$, and the observation errors $E_{ij}$ are independent, mean-zero, with variances bounded uniformly over $\bm{\Theta}(n)$ and $n$.}, we demonstrate that it exhibits significant structure.

First, quantities are observed without noise in this example, which takes care of the conditions involving quantity errors.\footnote{Formally, $\bm{\varepsilon}=\bm{0}$, so $\Vert \bm{\varepsilon}\Vert/\Vert \bm{q}^0\Vert=0$, as required in part (1) of \Cref{Def:SS}.} Next, when the number $K$ of categories is large enough, the dominant eigenvalue (in absolute value) is $\lambda_1(\bm{\theta})$, so we take
$$b(n)=\inf_{\bm{\theta}\in\bm{\Theta}(n)}|\lambda_1(\bm{\theta})|=\inf_{\bm{\theta}\in\bm{\Theta}(n)} \bigl(1-\omega_{\bm{\theta}}(m-1)+m(K-1)|\alpha_{\bm{\theta}}|\bigr).$$
This threshold grows linearly in $n$ because cross-category complementarities are bounded away from zero.\footnote{Under \Cref{Ass:B-Model}(a)(1), we have $|\alpha_{\bm{\theta}}|\ge |\kappa|>0$ uniformly, so $b(n)=\Theta(K)=\Theta(n)$ when $n=mK$ grows by increasing $K$ with $m$ fixed.} Moreover, under the observation-error assumptions in \Cref{Ass:B-Model}(b), standard random matrix bounds give $\Vert \bm{E}\Vert = O_{\mathrm{p},\bm{\Theta}}(\sqrt{n})$, and hence $\Vert \bm{E}\Vert/b(n)=o_{\mathrm{p},\bm{\Theta}}(1)$. (See, for example, \citet{vershynin2018high,bandeiravanhandel2016norm}.) Finally, for large $K$, the eigenvector associated with $\lambda_1$ is proportional to $\bm{1}$, and therefore the projection of $\bm{q}^0$ onto $\mathcal{L}(\underline{\bm{D}},b(n))$ is proportional to $\sum_i q_i^0$ and bounded away from $0$.\footnote{When $\bm{u}^1\propto \bm{1}$, we have $|\bm{u}^1\cdot \bm{q}^0|=(\sum_i q_i^0)/\sqrt{n}$; \Cref{Ass:B-Model}(a)(2) imposes $\sum_i q_i^0 > \delta\sqrt{n}$, which yields part (3) of \Cref{Def:SS}.}

Hence, this environment has significant structure. As we will see, significant structure allows us to estimate accurately, when $n$ is sufficiently large, the eigenvector $\bm{u}^1$ associated with the true market state $\bm{\theta}\in \bm{\Theta}(n)$. Using the insights from \Cref{sec:spt}, we can then design an intervention $\bm{\sigma}\propto \bm{u}^1$ and, since the associated $|\lambda_1|$ is very large (and converges to $\infty$ as $n$ grows), this intervention leads to the outcome described in \Cref{prop:example}.

We have shown significant structure in the block model where demand inter-dependencies are homogeneous within a category and across categories. A generalization of these arguments applies to the environment of \Cref{prop:example}. The important properties are simply that the dominant eigenvalue grows, in absolute value, linearly in $n$, whereas the operator norm of the observation noise grows slower, only at rate $\sqrt{n}$. This minimum level of complementarities implies that as $K$ increases, at least one low-dimensional and significant part of the complementarity pattern can be accurately estimated.

Whereas this section has illustrated plausible conditions on interactions for significant structure to be present, practically speaking it is important to explain how it can be verified in applications. We take up such tests in \Cref{concluding remarks_sampling}.

\section{Main result}\label{sec:mainresult}

Our main result is that in an economy with significant structure, the authority can design interventions that substantially increase total surplus. To state the result we introduce one technical assumption, which simplifies the proof considerably.

\begin{assumption}[Recoverable diagonal]
    \label{as:error_assumption}
 In the environment $\mathcal{E}:=\left(\bm{\Theta}(n),\varphi(n)\right)_n$, for all $(\bm{D},\bm{q}^0) \in \bm{\Theta}(n)$ and for all $n$, there exist mappings $\Delta_n : \mathbb{R}^{n\times n} \to \mathbb{R}^n$ (defined for all $n$) such that 
$\max_{i} \big|\Delta_n({\bm{D}}+\bm{E})_i - D_{ii}\big| = o_{\mathrm{p},\bm{\Theta}}(1)$. 
\end{assumption}

This assumes that, for each state $\bm{\theta}$, the diagonal of $\bm{D}$ (the own-price effects) can be estimated consistently from the noisy observation $\widehat{\bm{D}}$. It allows the authority to perform calculations with the normalized version of the noisy observation $\underline{\widehat{\bm{D}}}$. For instance, the assumption holds when $\max_i |E_{ii}| = o_{\mathrm{p},\bm{\Theta}}(1)$. In this case we can take $\Delta_n(\bm{D}+\bm{E})=\diag(\bm{D}+\bm{E})$. For a natural setting in which this assumption holds, see Appendix \ref{ap:sampling}.\footnote{We believe the assumption can be relaxed without changing the possibility of robust interventions, but it does simplify arguments considerably.}

\begin{thm}

\label{Th:Main}
Consider an environment with significant structure satisfying Assumptions 1--4. For every $\epsilon>0$ and for every target expenditure $s > 0$, the following properties can be simultaneously achieved $\epsilon$--robustly for sufficiently large $n$: 
\begin{enumerate}[(i)]
\item The sum of marginal consumer and producer surplus gains $\dot{C}+\dot{P}$ is at least twice the marginal expenditure, up to a small multiplicative error: $\dot{C}+\dot{P} \geq (2 - \epsilon)\dot{S}$.
\item The marginal effect on consumer surplus is vanishing: $|\dot{C}| < \epsilon$. Moreover, no individual consumer's surplus changes significantly---i.e., $|\dot{C}^h| < \epsilon $ for each household $h$.
\item The marginal expenditure is arbitrarily close to the target expenditure $s$, i.e.,  $|\dot{S} - s| < \epsilon $.
\end{enumerate}
\end{thm}
  Part (i) states that the authority can robustly achieve at least two dollars of surplus per dollar spent. Part (ii) states that it is possible to achieve this while leaving consumer surplus essentially unchanged. In particular, under the robust intervention rule we construct,  producer surplus increases by approximately two dollars  per dollar spent, while consumer surplus changes only by a vanishing amount.\footnote{Recalling the definition $\dot{W} = \dot{P} + \dot{C} - \dot{S}$, this implies that every dollar spent  yields approximately one unit increase in net total surplus $\dot{W}$.} Finally, part (iii) says the authority can precisely target the  realized expenditure (and thus the total surplus impact) of the intervention.
  Comparing the theorem with \Cref{prop:Welfare CI}, there is a robust intervention that, subject to a given target expenditure $\dot{S}=s$, achieves approximately the maximum total surplus possible subject to $\dot{C} \geq 0$.  In \Cref{S:Tight} we will see that at least in some cases, this is the only welfare outcome that can be implemented robustly (see part (2) of \Cref{Prop:LackOfAS}).  

We now provide intuition for the properties of the intervention. First recall that, by \Cref{prop:eig}, the pass-through to prices is proportional to $1/(1+|\lambda_\ell|)$, while the pass-through to quantities is proportional to $|\lambda_\ell|/(1+|\lambda_\ell|)$.  Combining this with \Cref{lem:welfare-pass}, each expenditure component $(\bm{u}^{\ell}\cdot\underline{\bm{q}}^0)(\bm{u}^{\ell}\cdot\underline{\bm{\sigma}})$ contributes $1/(1+|\lambda_\ell|)$ to consumer surplus, $2|\lambda_\ell|/(1+|\lambda_\ell|)$ to producer surplus, and $|\lambda_\ell|/(1+|\lambda_\ell|)$ to net total surplus.  When $|\lambda_\ell|\geq M$ for a large $M$, each dollar of intervention yields consumer-surplus incidence $O(1/M)$, while the producer-surplus incidence is $2-O(1/M)$. It is critical to the first part that prices move only slightly, but this small movement creates a quantity effect sufficient to generate the gains in the second part.

The same pass-through reasoning applies to any individual household after replacing $\underline{\bm{q}}^0$ by that household's normalized status-quo quantity vector. Thus, we can upgrade the statement about consumer surplus to a uniform one across households (provided consumers are not compelled to make other transfers); this is especially valuable when targeting consumers  in the product market for personalized transfers is not straightforward. Of course, the authority \emph{may} separately collect producer surplus through fixed transfers and redistribute it to consumers.\footnote{For example, if the authority is a platform, possible mechanisms include commissions or entry fees charged to firms, along with rebates to consumers.} Absent such transfers, the result yields a robust total-surplus intervention with negligible direct consumer-surplus incidence in the strong ``for all households'' sense we have described.

\begin{remark}
Our notion of $\epsilon$--robustness guarantees these surplus properties can also be stated as bounds on \emph{expected} performance conditional on each possible $\bm{\theta}$. The reason is that interventions can be readily adapted so that relevant welfare functionals become uniformly bounded; therefore, as we explain in \Cref{app:expectations}, the convergence-in-probability conclusions in \Cref{Th:Main} upgrade to  expectation statements. 
\end{remark}

\subsection{Sketch of the proof}\label{sec:sketch}
To prove \Cref{Th:Main}, we apply a statistical method that accurately identifies a subspace spanned by top eigenvectors of the Slutsky matrix from noisy observations. We then use the spectral price theory developed in \Cref{sec:spt} to show that interventions projecting exclusively onto this subspace possess the welfare properties stated in \Cref{Th:Main}.

\subsubsection{Recovering the subspace of top eigenvectors: The Davis--Kahan theorem}\label{S:DK}
\begin{figure}
\begin{tikzpicture}[scale=.8]
\tikzset{
  dot/.style={circle, fill, inner sep=1.5pt},
  eigenvalue/.style={dot, label={below:#1}},
  eigenvector/.style={->, thick, blue},
  projection/.style={dotted, thick, red},
}

\draw[{Latex[length=3mm]}-] (-5,0) -- (4.1,0);
\draw[{Latex[length=3mm]}-] (-5,2) -- (4.1,2);

\node[left, above=0.2cm] at (-5,0) {$\underline{\bm{D}}$};
\node[left, above=0.2cm] at (-5,2) {$\widehat{\underline{\bm{D}}}$};

\foreach \x/\l in {-4/$\lambda_1$, -2/$\lambda_2$, 1/$\lambda_3$, 3/$\lambda_4$}
  \node[eigenvalue={\l}] at (\x,0) {};

\node[below] at (4.1,0) {0};
\draw (4.1,-0.1) -- (4.1,0.1); %

\foreach \x/\l in {-1.8/$\widehat{\lambda}_2$, 1.2/$\widehat{\lambda}_3$, 3.2/$\widehat{\lambda}_4$}
  \node[eigenvalue={\l}] at (\x,2) {};

\node[below] at (4.1,2) {0};
\draw (4.1,1.9) -- (4.1,2.1); %

\node[eigenvalue={$\widehat{\lambda}_1$}, red] at (-3.8,2) {};

\draw[eigenvector] (-3.8,2) -- ++(40:1) node[above right] {eigenvector};

\draw[eigenvector] (-4,0) -- ++(20:0.5);
\draw[eigenvector] (-2,0) -- ++(60:0.5);

\draw[projection] (-3.8,2) -- (-4,0);
\draw[projection] (-3.8,2) -- (-2,0);

\draw (-4.8,-1.0) -- (-0.8,-1.0);
\draw (-4.8,-1.1) -- (-4.8,-0.9);
\draw (-0.8,-1.1) -- (-0.8,-0.9);
\node[below=5pt] at (-2.8,-1.0) {$\Vert \bm{E} \Vert$};

\node[anchor=west] at (7,3) {\emph{Legend}:};
\node[dot] at (7.5,2) {};
\node[anchor=west] at (8.1,2) {Eigenvalue};
\draw[eigenvector] (7,1.2) -- (8,1.2);
\node[anchor=west] at (8.1,1.2) {Eigenvector};
\draw[projection] (7,.3) -- (8,.3);
\node[anchor=west] at (8.1,.3) {Projection};
\end{tikzpicture}
\caption{\footnotesize An illustration of the Davis--Kahan theorem: How eigenvectors of $\widehat{\underline{\bm{D}}}$ (perturbed matrix) project onto eigenvectors of $\underline{\bm{D}}$ (true matrix) with similar eigenvalues (off by at most $\Vert \bm{E} \Vert$). This relationship ensures that the subspace generated by the ``top'' (i.e., large-eigenvalue) eigenvectors of $\widehat{\underline{\bm{D}}}$ is a good approximation of the subspace generated by the top eigenvectors of $\underline{\bm{D}}$. For our economic problem this implies that interventions based on top eigenvectors of $\widehat{\underline{\bm{D}}}$ yield high surplus pass-through, despite noise $\bm{E}$.}
\label{fig:DK}
\end{figure}

 The key tool in our statistical exercise that leverages the property of significant structure is the Davis--Kahan theorem.  Under the hypothesis that some eigenvalues of $\underline{\bm{D}}$ are large, standard eigenvalue perturbation bounds guarantee that, despite the noise in $\bm{E}$, the large eigenvalues of the normalized observed matrix are good approximations of the true large eigenvalues of $\underline{\bm{D}}$.  In other words, the noise in $\bm{E}$ cannot cause the large eigenvalues of $\underline{\bm{D}}$ to become ``mixed up'' with the eigenvalues far away in the spectrum; see \Cref{fig:DK} for an illustration.  The Davis--Kahan theorem then permits the recovery of eigenvectors.  More precisely, these perturbation results have the following two central implications in our setting: if $\underline{\bm{D}}$ has large eigenvalues then:  
\begin{enumerate}
    \item[(i)] the normalized matrix $\underline{\widehat{\bm{D}}}$ of $\widehat{\bm{D}}=\bm{D}+\bm{E}$ has some eigenvalues that are themselves large; 
    \item[(ii)] %
     the eigenvectors of $\underline{\widehat{\bm{D}}}$ associated with such eigenvalues can be expressed (up to a small error) as linear combinations of eigenvectors of $\underline{\bm{D}}$ with  large eigenvalues.  \end{enumerate}

\subsubsection{Intervention rules based on recovered eigenvectors} \label{sec:illustration_recovery}

  To facilitate the illustration, we assume that the largest eigenvalue of $\underline{\bm{D}}$ is sufficiently well-separated from all other eigenvalues by a ``gap'' much larger than $\Vert \bm{E} \Vert$.  Under this condition, the Davis--Kahan theorem yields an even stronger implication:  we can use $\underline{\widehat{\bm{D}}}$ to recover a normalized eigenvector  $\widehat{\bm{u}}^1$ that correlates almost perfectly with the corresponding eigenvector $\bm{u}^1$ of $\underline{\bm{D}}$.  This property is, for example, satisfied by the block model of demand in \Cref{sec:illustration3}.  
 
\Cref{prop:eig} and \Cref{lem:welfare-pass} together imply that the effect of intervention $\bm{\sigma}$ is
\begin{equation}
    \dot{W}_{\bm{\sigma}} = \sum_{\ell=1}^n (\bm{u}^\ell \cdot \underline{\bm{q}}^0)(\bm{u}^\ell \cdot \underline{\bm{\sigma}})\frac{|\lambda_\ell|}{1+|\lambda_\ell|} \quad \text{and} \quad \dot{S}_{\bm{\sigma}} = \sum_{\ell=1}^n (\bm{u}^\ell \cdot \underline{\bm{q}}^0)(\bm{u}^\ell \cdot \underline{\bm{\sigma}}). \label{eq:intuition1} \end{equation}

 Let us design an intervention $\underline{\bm{\sigma}} \propto \widehat{\bm{u}}^1$ based on the recovered information and choose the sign of $\underline{\bm{\sigma}}$ to ensure $(\widehat{\bm{u}}^1 \cdot \underline{\bm{q}}^0)(\widehat{\bm{u}}^1 \cdot \underline{\bm{\sigma}})$ is positive.  Since, by the Davis--Kahan theorem, $\widehat{\bm{u}}^1$ is effectively equal to its true counterpart $\bm{u}^1$ with a very small error, from now on we will treat $\bm{u}^1$ as known.  We can see from the equations above that $\dot{W}_{\bm{\sigma}}$ closely approximates $\dot{S}_{\bm{\sigma}}$, since a sufficiently large $|\lambda_1|$ implies $\frac{|\lambda_1|}{1+|\lambda_1|} \approx 1$.  Moreover, if we know $\bm{u}^1 \cdot \underline{\bm{q}}^0$ and this differs from zero (which is a requirement of significant structure in the illustrative case where $\mathcal{L}(\underline{\bm{D}},b(n))$ is one-dimensional), we can scale the intervention to be of the size that we desire, and achieve $\dot{S}_{\bm{\sigma}}=s$.  
 
This argument contains some wishful thinking, however. When we arranged the sign of $\bm{\sigma}$ so that $\dot{S}_{\bm{\sigma}}>0$, we did not consider that we only have a noisy observation $\widehat{\bm{q}}^0$ of $\bm{q}^0$. So part of the challenge of the proof is to manage the observation error that makes $\widehat{\bm{q}}^0$ different from ${\bm{q}}^0$, and to show that we can obtain a correct estimate of the sign with probability arbitrarily close to $1$ as $n\to\infty$. If we fail to do this correctly, our intervention will decrease welfare with positive probability. This explains the necessity of controlling the magnitude of the noise in the quantity signal so that its projections on low-dimensional subspaces are $o_{\mathrm{p},\bm{\Theta}}(1)$. This is ensured by part (1) of \Cref{Def:SS}.

However, bounded noise alone is insufficient. If $\bm{u}^1 \cdot \underline{\bm{q}}^0$ is very small, there may be no hope for consistently recovering the true magnitude or sign of  $\bm{u}^1 \cdot \underline{\bm{q}}^0$ from the signal $\widehat{\bm{u}}^1 \cdot \underline{\widehat{\bm{q}}}^0$. That is, even very small noise could still overwhelm a similarly small underlying mean projection $\bm{u}^1 \cdot \underline{\bm{q}}^0$. This unrecoverability would make it impossible to orient and scale our intervention appropriately, because the orientation procedure relies on such projections. Part (3) of the definition of significant structure prevents this problem by requiring that the projection of $\underline{\bm{q}}^0$ onto eigenvectors with large eigenvalues is bounded away from zero.

This discussion illustrates the following special case of our main result: If the largest eigenvalue of $\underline{\bm{D}}$ is well-separated from others and if $\underline{\bm{q}}^0 \cdot \bm{u}^1$ is not vanishingly small, then a subsidy profile proportional to $\bm{u}^1$ can, if it is suitably scaled, achieve all the properties of \Cref{Th:Main}.

\subsubsection{Intervention rules based on recovered eigenvectors: General case}
The proof of the main result improves on this sketch in two ways. First, it does not rely only on the eigenspace spanned by $\bm{u}^1$. Instead, it uses a potentially much larger eigenspace of $\underline{\widehat{\bm{D}}}$. The general intervention projects $\widehat{\underline{\bm{q}}}^0$ onto $\mathcal{L}(\underline{\widehat{\bm{D}}},b(n))$, the eigenspace of all eigenvectors of the observed $\underline{\widehat{\bm{D}}}$ with eigenvalues whose absolute value is at least $b(n)$. This makes it  easier for the analog of $\widehat{\bm{u}}^1 \cdot \underline{\bm{q}}^0$ not to be too small, since the projection of $\underline{\bm{q}}^0$ onto a larger eigenspace will have a larger norm. Second, the general proof dispenses with assuming that any eigenvalues are well-separated. Instead, it handles any possible spectrum of $\underline{\bm{D}}$ subject to our maintained assumptions. This introduces considerable complexity, as it is no longer generally possible to recover any true eigenvector $\bm{u}^1$ with any accuracy. We instead work directly with a recovered eigenspace that generalizes the span of $\widehat{\bm{u}}^1$. We show that despite limited knowledge of  individual true eigenvectors of $\underline{\bm{D}}$ underlying this space, we can use the fact that all of them have large eigenvalues to generalize our argument for showing that surplus expressions in (\ref{eq:intuition1}) can be made very close and nonzero with a feasible intervention. This is where the arguments go beyond standard applications of the Davis--Kahan theorem.

\section{Illustrating recoverability in a two-block example}\label{Sec:Illustration}

We develop a Monte Carlo experiment designed to demonstrate the central mechanism behind  \Cref{Th:Main}: significant structure enables recovery of welfare-relevant information from noisy observations of the demand system.
Technical details supporting the simulation are in \Cref{app:hadamard_sim}.

\subsection{Demand system and noise in the experiment}
This example uses a block model in which the classification of goods into categories is unknown. The authority knows that there are two categories, that each category includes half of the products, and that within-category products are complements, while across-category products are substitutes. However, the authority does not know which products belong to the same category.

A concrete interpretation is a pair of competing product ecosystems: within an ecosystem, products such as a console, controllers, games, and subscription services are consumed together, while lower prices in one ecosystem draw demand away from the rival ecosystem. The two-block specification captures this reduced-form pattern by treating goods within an ecosystem as complements and goods across ecosystems as substitutes.

The sign pattern of substitutes and complements is different from the example presented in \Cref{sec:SBM}, where we had substitution within small categories and complementarity across categories. The configuration we use in this section makes it possible for the significant structure to be encoded in a rank-one matrix, and this makes the analysis more transparent.\footnote{When we have within-category substitutes, this is not possible because the sign pattern of a matrix $-\lambda \bm{u} \bm{u}^\tr$ necessarily has negative entries (complements) on the diagonal blocks and positive entries (substitutes) on the off-diagonal blocks.} We introduce a tunable parameter $b$ governing the intensity with which this component is represented (and this parameter corresponds to the largest eigenvalue of the Slutsky matrix). \Cref{Th:Main} implies that when this eigenvalue is large enough, the authority can robustly increase total market surplus at the best rate possible. Our simulations will demonstrate this. They will also show that when this eigenvalue is small, the interventions presented here fail to achieve this goal, which provides intuition for the negative results.

\medskip

\paragraph{Constructing a demand system.}  For each market state $\bm{\theta}$ we construct a normalized Slutsky matrix $\underline{\bm{D}}$ which has a block structure corresponding to this partition:

\[
\begin{tikzpicture}[scale=0.72, baseline={(current bounding box.center)}]
    \definecolor{blockblue}{RGB}{80, 130, 200}     %
    \definecolor{blockred}{RGB}{180, 60, 60}       %

    \def\matsize{2.3}
    \def\halfsize{1.15}

    \begin{scope}[shift={(-5.0,0)}]
        \fill[blockblue!55] (0,\halfsize) rectangle (\halfsize,\matsize);
        \fill[blockblue!55] (\halfsize,0) rectangle (\matsize,\halfsize);
        \fill[blockred!50] (0,0) rectangle (\halfsize,\halfsize);
        \fill[blockred!50] (\halfsize,\halfsize) rectangle (\matsize,\matsize);
        \draw[black!70, line width=0.6pt] (0,0) rectangle (\matsize,\matsize);
        \node[below=0.25cm] at (\halfsize,0) {$\underline{\bm{D}}$};
    \end{scope}

    \node at (-2.2,\halfsize) {\large $=$};

    \begin{scope}[shift={(-0.4,0)}]
        \fill[white] (0,0) rectangle (\matsize,\matsize);
        \fill[blockblue!75] (0,\halfsize) rectangle (\halfsize,\matsize);
        \fill[blockblue!75] (\halfsize,0) rectangle (\matsize,\halfsize);
        \fill[blockred!70] (0,0) rectangle (\halfsize,\halfsize);
        \fill[blockred!70] (\halfsize,\halfsize) rectangle (\matsize,\matsize);
        \draw[black!70, line width=0.6pt] (0,0) rectangle (\matsize,\matsize);
        \draw[black!35, dashed, line width=0.5pt] (0,\halfsize) -- (\matsize,\halfsize);
        \draw[black!35, dashed, line width=0.5pt] (\halfsize,0) -- (\halfsize,\matsize);
        \node[above=0.08cm, font=\scriptsize] at (0.575,\matsize) {$A$};
        \node[above=0.08cm, font=\scriptsize] at (1.725,\matsize) {$B$};
        \node[left=0.08cm, font=\scriptsize] at (0,1.725) {$A$};
        \node[left=0.08cm, font=\scriptsize] at (0,0.575) {$B$};
       \draw[decorate, decoration={brace, amplitude=4pt, mirror}] (-.5,-0.15) -- (2.5,-0.15);
        \node[below=0.25cm, font=\small] at (\halfsize,0) {$\lambda_1 \bm{u}^1 (\bm{u}^1)^\top$};
        \node[below=0.7cm, font=\small] at (\halfsize,0) {dominant direction};
    \end{scope}

    \node at (2.45,\halfsize) {\large $+$};

    \begin{scope}[shift={(5.3,0)}]
        \node at (0,\halfsize) {$\displaystyle\sum_{\ell=2}^{n-1} \lambda_\ell \bm{u}^\ell (\bm{u}^\ell)^\top$};
        \draw[decorate, decoration={brace, amplitude=4pt, mirror}] (-2.0,-0.15) -- (2.0,-0.15);
        \node[below, font=\footnotesize, align=center] at (0,-0.55) {contributions from};
        \node[below, font=\footnotesize, align=center] at (0,-1.05) {bulk directions};
    \end{scope}
    \node at (7.7,\halfsize) {\large $+$};
      \begin{scope}[shift={(10,0)}]
        \node at (0,\halfsize) {$\displaystyle \lambda_n \bm{u}^n (\bm{u}^n)^\top$};
        \draw[decorate, decoration={brace, amplitude=4pt, mirror}] (-2.0,-0.15) -- (2.0,-0.15);
        \node[below, font=\footnotesize, align=center] at (0,-0.55) {contributions from};
        \node[below, font=\footnotesize, align=center] at (0,-1.05) { aggregate direction};
    \end{scope}
\end{tikzpicture}
\]
where:\medskip

\medskip

\paragraph{Dominant direction (large eigenvalue).} A market state $\bm{\theta}$ determines a partition of the products into two sets, $A$ and $B$, of the same size. Given $\bm{\theta}$, the $A$-entries of $\bm{u}^1$ are $1/\sqrt{n}$ and the $B$-entries are $-1/\sqrt{n}$. We associate with $\bm{u}^1$ an eigenvalue of $\lambda_1 = -(1 + b)$, where $b \geq 0$ is a parameter we vary. The ``main'' part of $\underline{\bm{D}}$ is a rank-one component $\lambda_1 \bm{u}^1 (\bm{u}^1)^\top$ that creates a block structure in the Slutsky matrix, with a product's block membership determined by the sign of the corresponding entry of $\bm{u}^1$. For goods $i$ and $j$ in the \emph{same} group (same sign in $\bm{u}^1$), we have $u^1_i u^1_j > 0$, so $\lambda_1 u^1_i u^1_j < 0$ (since $\lambda_1 < 0$). This means same-group goods are {complements}. For goods in {different} groups (opposite signs), $u^1_i u^1_j < 0$, so $\lambda_1 u^1_i u^1_j > 0$---different-group goods are {substitutes}.\medskip

\medskip

\paragraph{Aggregate direction (zero eigenvalue).} The eigenvector $\bm{u}^n = \frac{1}{\sqrt{n}}\bm{1}$ represents uniform changes across all goods. We assign it eigenvalue $\lambda_n = 0$.\medskip

\paragraph{Bulk directions.} We now turn to most of the eigenvectors---the so-called bulk. These $n-2$ eigenvectors receive eigenvalues chosen so that they are close to $-1$ and $\sum_\ell \lambda_\ell = -n$. Specifically, $\lambda_\ell = -1 + \frac{b - 1}{n-2}$ for $\ell\notin \{1,n\}$. This ensures that the diagonal $\underline{D}_{ii}=-1$ is constant, as required of the normalized Slutsky matrix.

\medskip
\paragraph{Status quo quantities.} 
The normalized status-quo quantity vector is $\underline{\bm{q}}^0 = \bm{u}^n + 0.05\,\bm{u}^1$. Hence, $\underline{\bm{q}}^0$ projects primarily onto the aggregate direction $\bm{u}^n$ with a smaller projection onto the dominant direction $\bm{u}^1$. %

\medskip
\paragraph{Observation noise and the recovery threshold.} In this simulation, we assume the authority observes $\underline{\bm{q}}^0$ without error, and observes $\widehat{\underline{\bm{D}}} = \underline{\bm{D}} + \bm{E}$, where $\bm{E}$ is a symmetric random matrix with $E_{ij} \sim \mathcal{N}(0,1)$ i.i.d.\ for $i < j$, $E_{ij} = E_{ji}$, and $E_{ii} = 0$, in line with \Cref{as:error_assumption}. This is a Wigner matrix (with zeros on the diagonal), and classical results give $\Vert \bm{E} \Vert \approx 2\sqrt{n}$ with high probability. Thus, the uncertainty about $\bm{\theta}$ is which dominant direction $\bm{u}^1$ (equivalently, which A/B partition) obtains. Each Monte Carlo repetition corresponds to a fresh draw of this signal holding $\bm{\theta}$ fixed.

 The significant structure condition (\Cref{Def:SS}) requires $|\lambda_1| \gg \Vert \bm{E} \Vert$, which is equivalent to $b \gg \sqrt{n}$.  In this simulation, the recovery pattern changes sharply when $b/\sqrt{n}$ is near $1$: below that range, the dominant eigenvalue is comparable to the noise and the leading eigenvector of $\widehat{\underline{\bm{D}}}$ is largely noise-driven, while for $b/\sqrt{n} \gg 1$ the leading eigenvector $\widehat{\bm{u}}$ aligns closely with the true dominant direction $\bm{u}^1$.  The Davis--Kahan theorem is the general version of this last part and is used in the main positive result of the paper: recovery improves once the spike is sufficiently large relative to $\Vert \bm{E} \Vert$.

\subsection{The intervention rule} \label{sec:rule}
Let $\widehat{\bm{u}}$ be the eigenvector of $\widehat{\underline{\bm{D}}}$ with the largest eigenvalue in absolute value. The authority implements the intervention
\[
\underline{\bm{\sigma}} = \frac{s \cdot \widehat{\bm{u}}}{\widehat{\bm{u}} \cdot \underline{\widehat{\bm{q}}}^0},
\]
noting that this guarantees spending $\dot{S} = \underline{\bm{q}}^0 \cdot \underline{\bm{\sigma}}$ is positive; assume the authority scales the intervention to achieve target spending $s = 1$.

\subsection{Results}

We fix $n=256$ and we vary the size of the dominant eigenvalue $\lambda_1$ by changing $b$. At each value of $b/\sqrt{n}$ we run 500 Monte Carlo draws of the noise matrix $\bm{E}$ and implement the intervention $\underline{\bm{\sigma}}$. For each draw we compute total surplus outcomes and the alignment of the dominant eigenvector $\widehat{\bm{u}}$ with the dominant eigenvector $\bm{u}^1$ measured by $|\bm{u}^1\cdot\widehat{\bm{u}}|^2$. We report median outcomes and confidence intervals. \Cref{fig:pareto} summarizes the results.

\begin{figure}[htbp]
    \centering
    \begin{subfigure}[a]{\textwidth}
        \centering
     \includegraphics[width=.92\textwidth]{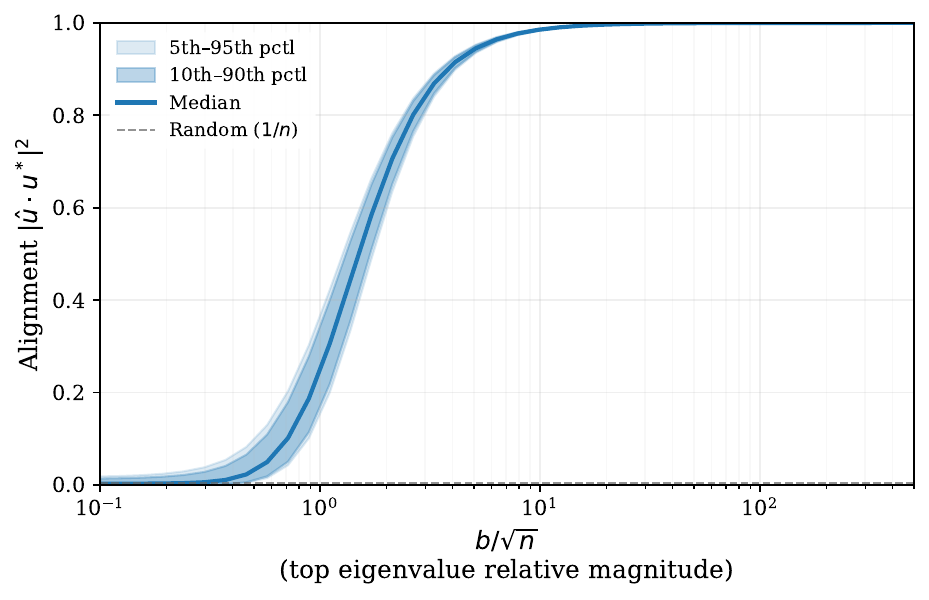} \quad \quad \quad
        \caption{Eigenvector recovery}

        \label{fig:pareto_recovery}
    \end{subfigure}

    \vspace{1em}

    \begin{subfigure}[b]{\textwidth}
        \centering
         \includegraphics[width=1.05\textwidth]{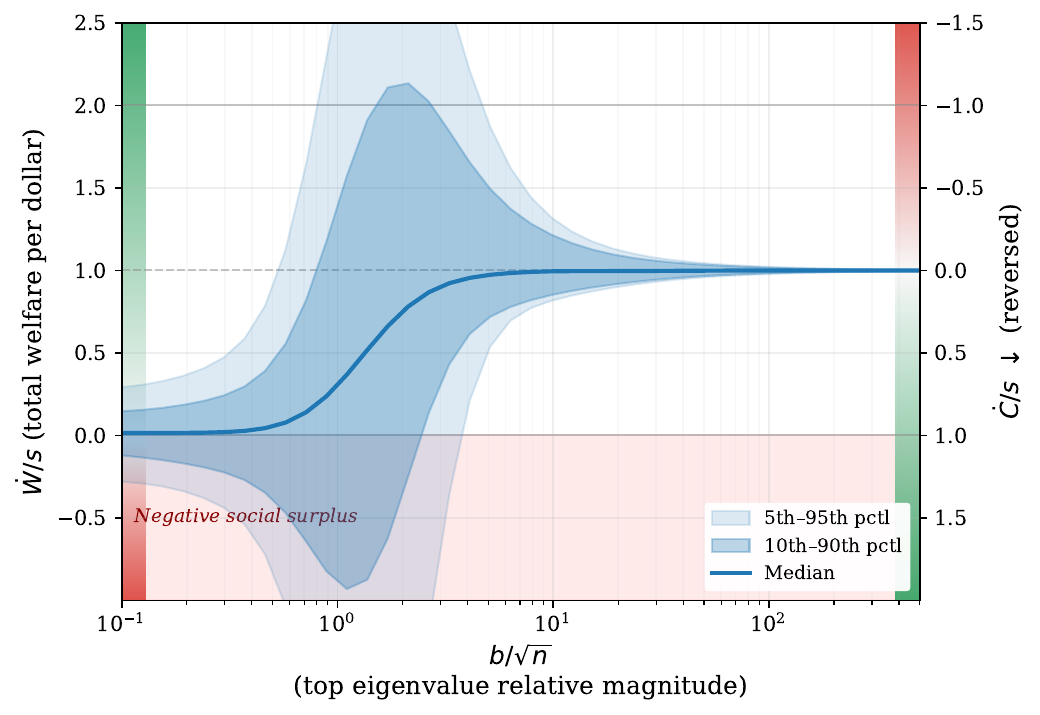}
        \caption{Total Surplus outcomes}
        \label{fig:pareto_welfare}
    \end{subfigure}
    \caption{\textbf{Recovery of $\underline{\bm{D}}$ and welfare outcomes} ($n = 256$ and $500$ draws per value of $b$).  \textbf{(a)}~Alignment $|\widehat{\bm{u}} \cdot \bm{u}^1|^2$ between recovered and true spike eigenvectors. Shaded bands: 5th--95th and 10th--90th percentiles; line: median.
    \textbf{(b)}~Total surplus per dollar $\dot{W}/\dot{S}$ (left axis) and consumer surplus per dollar $\dot{C}/\dot{S}$ (reversed right axis). The shaded region marks negative social surplus.
    }
    \label{fig:pareto}
\end{figure}

\medskip
 \paragraph{Panel (A): Eigenvector recovery.} This panel illustrates the sharp recovery threshold in the present  simulation.  Below the critical signal-to-noise ratio threshold of $1$ ($b/\sqrt{n} < 1$), the median alignment of $\widehat{\bm{u}}$ and $\bm{u}^1$ is of order $1/n$, which is the value expected for two random unit vectors.  Above the threshold, alignment rises rapidly toward the case of perfect alignment of $1$, with narrowing bands indicating increasingly reliable recovery.

\medskip
\paragraph{Panel (B): Welfare outcomes.} Consider the spectral decomposition of total surplus:
\[
\dot{W} =(\bm{u}^1 \cdot \underline{\bm{q}}^0)(\bm{u}^1 \cdot \underline{\bm{\sigma}}) \frac{|\lambda_1|}{1+|\lambda_1|}+\sum_{\ell=2}^{n-1} (\bm{u}^\ell \cdot \underline{\bm{q}}^0)(\bm{u}^\ell \cdot \underline{\bm{\sigma}}) \frac{|\lambda_\ell|}{1+|\lambda_\ell|}+(\bm{u}^n \cdot \underline{\bm{q}}^0)(\bm{u}^n \cdot \underline{\bm{\sigma}}) \frac{|\lambda_n|}{1+|\lambda_n|}.
\]
Below the critical threshold, median total surplus per dollar $\dot{W}/\dot{S}$ is close to zero. The wide bands indicate high variability: some noise realizations happen to align with $\bm{u}^1$ by chance, yielding positive welfare, while others yield negative total surplus---outcomes in the shaded region represent net welfare destruction.

The reason is that lack of recoverability of $\bm{u}^1$ implies that  $\underline{\bm{\sigma}}$ is based on a random direction and, therefore, the projection $\bm{u}^\ell \cdot \underline{\bm{\sigma}}$ is roughly equal across all $\ell$. But the quantity vector $\underline{\bm{q}}^0$ loads almost entirely on $\bm{u}^n$, and $\bm{u}^n$ has pass-through weight $|\lambda_n|/(1+|\lambda_n|)=0$ in our construction. Thus, failed recovery systematically traps spending in a low-pass-through direction and, so, expected total surplus approaches zero.

Above the threshold, $\dot{W}/\dot{S}$  rises quickly toward the theoretical maximum and the bands narrow dramatically. The intervention reliably delivers close to the optimal total surplus per dollar. As $b$ increases, $\dot{W}/\dot{S}$ approaches~$1$: nearly one dollar of net total surplus per dollar of spending. This is equivalent to a $2$-for-$1$ gain in gross total surplus $\dot{C}+\dot{P}$. This is why the significant structure condition matters: it ensures the authority can identify and target the high-pass-through directions that yield substantial welfare gains.

Finally, panel~(B) also shows consumer surplus per dollar $\dot{C}/\dot{S}$ (right axis, reversed). Above threshold, $\dot{C}/\dot{S}$ approaches zero while $\dot{W}/\dot{S}$ approaches~$1$. This means nearly all the welfare gain accrues to producers ($\dot{P}/\dot{S} \approx 2$), confirming the prediction of \Cref{Th:Main}: interventions targeting high-pass-through directions achieve maximum total surplus but allocate almost all the surplus increase to the supply side.

\subsection{What does successful recovery mean economically?}
\Cref{fig:ground_truth} shows the ground-truth block structure $\lambda_1 \bm{u}^1 (\bm{u}^1)^\top$. \Cref{fig:blocks} shows the authority's recovered rank-one component $\widehat{\lambda}\widehat{\bm{u}}(\widehat{\bm{u}})^\top$ of $\widehat{\underline{\bm{D}}}$ for three signal-to-noise ratios. We use the same noise realization $\bm{E}$ across panels, varying only the value of the dominant eigenvalue $\lambda_1$.

\begin{figure}[htbp]
    \centering
    \includegraphics[width=0.5\textwidth]{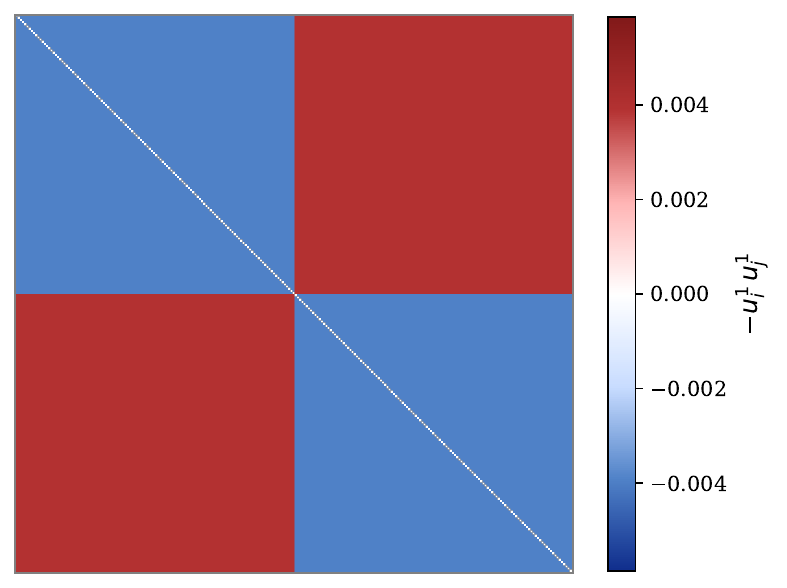}
    \caption{\textbf{Ground-truth block structure} ($n = 256$). The rank-one component $\lambda_1 \bm{u}^1 (\bm{u}^1)^\top$ corresponding to the dominant eigenvalue of the true normalized Slutsky matrix $\underline{\bm{D}}$, with goods ordered by sign in $\bm{u}^1$. Blue: complements (same group); red: substitutes (different groups).
    }
    \label{fig:ground_truth}
\end{figure}

\begin{figure}[htbp]
    \centering
    \includegraphics[width=\textwidth]{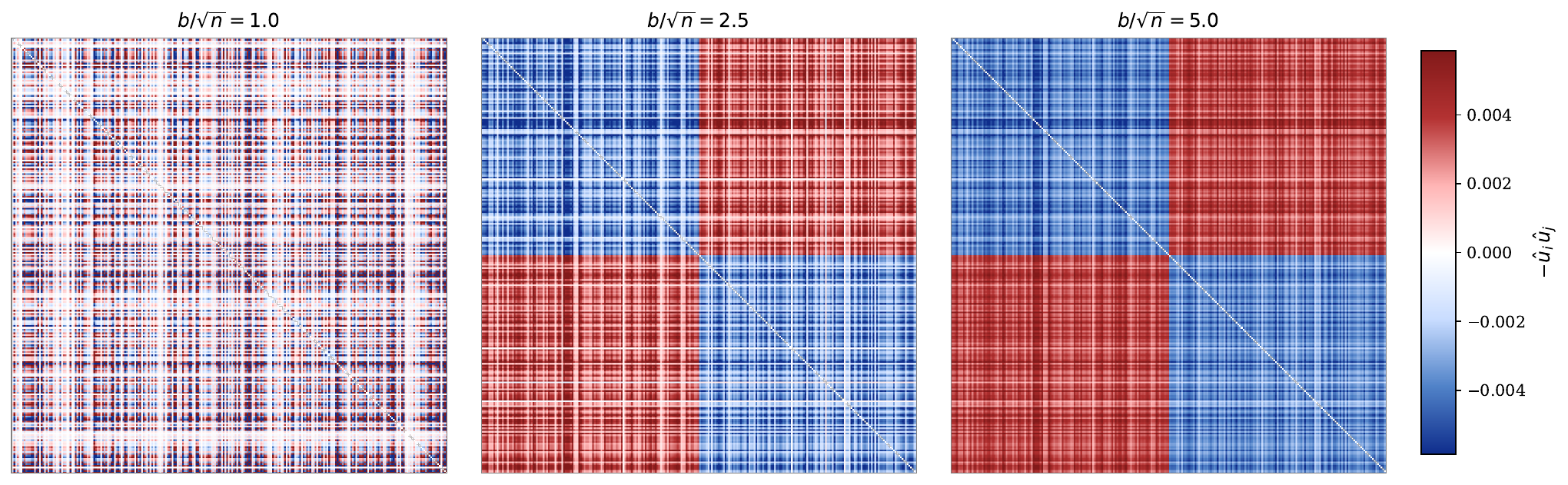}
    \caption{\textbf{Recovery of substitute/complement structure} ($n = 256$, same noise realization). Leading rank-one component $\widehat{\lambda} \widehat{\bm{u}} \widehat{\bm{u}}^\top$ of $\widehat{\underline{\bm{D}}}$, with goods ordered by true $\bm{u}^1$ signs. \textbf{Left} ($b/\sqrt{n} = 1$): Pure noise, no discernible structure. \textbf{Center} ($b/\sqrt{n} = 2.5$): Block structure emerges but with significant noise. \textbf{Right} ($b/\sqrt{n} = 5$): Near-perfect recovery of the two-block pattern.}
    \label{fig:blocks}
\end{figure}

At threshold ($b/\sqrt{n} = 1$), the estimated structure is weakly related to the true structure---the authority cannot tell which goods are substitutes versus complements. Above threshold, the two-block pattern emerges with increasing clarity. At $b/\sqrt{n} = 5$, the recovered structure closely matches the ground truth.

This visualization makes concrete what ``significant structure'' means for intervention design. The authority does not need to estimate every entry of $\underline{\bm{D}}$ accurately. Instead, it needs to recover the \emph{global} pattern---which families of goods interact as substitutes versus complements---that determines where welfare-improving interventions are possible. When this pattern is sufficiently pronounced relative to noise (significant structure), eigenvector-based methods extract it reliably.

\section{Discussion and concluding remarks}\label{concluding remarks}

We conclude with some observations about the scope of our analysis. \Cref{S:Tight} shows the limits of robust interventions. \Cref{concluding remarks_sampling} discusses the theoretical foundation of significant structure and develops an empirical test for it. \Cref{sec:hedonic} formally maps our leading example, the block model of demand in \Cref{sec:illustration3}, to a representative consumer with quadratic utilities. \Cref{concluding remarks_NL} discusses local linear demand. \Cref{concluding remarks_NG} relates our work to the literature on network games.

\subsection{The limits of robust interventions}\label{S:Tight}

We have established that if demand has significant structure, the authority can robustly achieve the maximum possible total surplus per dollar spent subject to the constraint that consumers are not harmed. The associated intervention rule has a precise distributional property as it boosts production with minimal price changes, resulting in firms capturing all efficiency gains.

The key condition of significant structure ensures sufficiently strong signal-to-noise ratio in the signal about the Slutsky matrix. This raises two questions. First, when this property is violated, so that the signal is weak relative to observation noise, can we find interventions that robustly increase total surplus? Second, under the assumption of significant structure, are the distributional properties of the robust interventions in our main result necessary for any robust intervention?

\begin{prop}[Limits of robust interventions]\label{Prop:LackOfAS} Consider interventions with a positive expenditure by the authority, $\dot{S}>0$.
  
$\;$ \\
    \begin{enumerate}
\item There exist environments satisfying Assumptions 1--4, but without significant structure, such that for any $\varepsilon \in (0,1/2)$ there is no intervention rule that $\varepsilon$-robustly achieves $\dot{C} + \dot{P}\geq (1+\varepsilon)\dot{S}$.
\item There exist environments satisfying Assumptions 1--4, with significant structure, such that, for any $\kappa>0$, no intervention rule robustly achieves $\dot{C} \geq \kappa \dot{S}$.\end{enumerate}
\end{prop}

Whereas with significant structure we could construct an intervention returning almost two dollars in market surplus for each dollar invested, part (1) says that without significant structure, we may not be able to achieve any non-negligible return on the authority's spending. And part (2) says that guaranteeing any fixed positive share of spending accrues to consumer surplus  is impossible even with significant structure.

Both parts are proved in \Cref{app:lackofas_proof} using Hadamard-basis constructions closely related to the two-block example in \Cref{Sec:Illustration}. In part~(1), the market state is defined by a two-block classification of goods: a partition of the $n$ products into two equal-sized subsets determining which pairs behave as complements versus substitutes. We use a special type of basis (a Hadamard basis) to index a large collection of such balanced classifications. Each basis vector has entries $\pm 1/\sqrt{n}$, with the signs corresponding to the assignment of goods to blocks.\footnote{The special feature of the Hadamard basis is that there are $n$ such vectors that are all orthogonal to one another.}

For part~(1), we construct $\bm{D}$ so that the vector corresponding to the block partition that is realized has a large eigenvalue; however, all realizations of $\bm{D}$ are designed to generate the same signal. Any deterministic rule must therefore select the same intervention across these states. We prove that for any such intervention, there exists a state in which it is nearly orthogonal to the welfare-relevant spike direction, so $\dot{W}$ is necessarily small. This is akin to the phenomenon illustrated in \Cref{Sec:Illustration} below the recovery threshold, where the leading eigendirection of $\widehat{\underline{\bm{D}}}$ is dominated by noise and eigenvector-based interventions behave like random directions (see \Cref{fig:pareto_recovery} and \Cref{fig:pareto_welfare}  for $b/\sqrt{n}<1$). The proof here shows that, not only does the specific type of intervention studied there fail, but any possible intervention fails. In our terminology, this is an environment without significant structure: although the true demand system features a welfare-relevant two-block pattern, the signal is too weak relative to observation noise to recover which pattern obtains.\footnote{Formally, the demand noise is of the same order as the eigenvalue scale that distinguishes the blocks, so the associated major eigendirections are not stably identified uniformly over states.}

For part~(2), we instead ensure significant structure by making the ``major'' eigenspace fixed and strongly detectable through noise, so it is recoverable and high-pass-through welfare gains remain available. The key observation is that \emph{consumer} surplus pass-through behaves very differently from producer-surplus pass-through (\Cref{prop:eig,lem:welfare-pass}). Along the large-eigenvalue directions that drive $\dot{W}$ (\Cref{eq:intuition1}), prices barely move, so $\dot{C}$ is negligible. To obtain a fixed positive consumer-surplus incidence per dollar of expenditure uniformly, a rule must load on directions with smaller eigenvalues, where price pass-through is order one; we hide one such direction in the quantity vector while making the relevant quantity signals nearly indistinguishable, and with a proof similar to that of part (1) show that the desired lower bound on $\dot{C}/\dot{S}$ must fail in some state.

We now note a negative result that focuses on expected performance, which is often a focus of the frequentist statistical decision theory literature \citep{manski2004statistical}.

\begin{remark} The construction used for \Cref{Prop:LackOfAS}(1) also yields the following implication for expected performance.

 After normalizing any positive-expenditure rule so that $\dot{S}_{\bm R(\widehat{\bm{\theta}})}=1$, 
    \[
    \inf_{\bm{\theta} \in \bm{\Theta}(n)}\mathbb{E}_{\varphi_{\bm{\theta}}(n)}
    \left[
    \dot{C}
    +\dot{P}
    -\dot{S}
    \right]
    \leq \varepsilon.
    \]
A proof is given in \Cref{app:lackofas_expected_performance}.
\end{remark}

\subsection{Theoretical foundations and empirical tests for significant structure}\label{concluding remarks_sampling} 

Significant structure is a joint property concerning the norm of estimation errors in market quantities and $\bm{D}$, and the operator norm of $\bm{D}$. We now discuss foundations for and tests of the key parts of this assumption.

\subsubsection{Theoretical foundation}
In \Cref{sec:SBM} we provide a block demand structure under which the norm of $\bm{D}$ grows linearly in $n$. In this class of demand environments, whenever the norm of the estimation errors grows in $n$ at a slower rate, we obtain recoverability of eigenvectors associated with large eigenvalues. The results raise the question of whether estimation procedures that an authority can carry out are consistent with this requirement. 

In \Cref{ap:sampling} we formalize a natural estimation procedure under which $\Vert \bm{E} \Vert \in O_{\mathrm{p}}(\sqrt{n})$. The authority samples a distinct household for each product pair $(i,j)$, and performs a demand experiment to obtain \emph{independent, unbiased} estimates of how the demands for both products change when $p_i$ is perturbed.   
For each product $i$, we obtain $n$ distinct estimates of the own-price effect, and so we can recover the own-price demand derivative ($D_{ii}$ for each $i$) by the law of large numbers; thus we can satisfy \Cref{as:error_assumption}. Turning to off-diagonal entries, standard random matrix theory implies that, under suitable assumptions, the operator norm of the estimation error matrix is $O_{\mathrm{p}}(\sqrt{n})$.

Estimation errors on demand derivatives may exhibit spatial or structural correlations: for example, errors in estimating cross-price elasticities between nearby product categories may be positively correlated.  Recent advances in random matrix theory show that the key consequence we require---namely, that the spectral norm $\Vert \bm{E} \Vert$ is $o(n)$---holds under far more general dependence structures.\footnote{When correlations between entries decay polynomially with distance under a suitable metric, or when entries exhibit finite-range dependence (i.e., $E_{ij}$ and $E_{kl}$ are independent whenever indices are sufficiently far apart), the spectral norm of a symmetric $n \times n$ error matrix $\bm{E}$ satisfies $\Vert \bm{E} \Vert = O_{\mathrm{p}}(\sqrt{n})$; see \citet{erdos2019slow} for slowly decaying correlations, \citet{anderson2008finite} for finite-range dependence, and \citet{reker2022operator} for polynomially decaying metric correlations. These results provide the operator-norm control needed for condition (2) of \Cref{Def:SS}.}

\subsubsection{A test for significant structure}

In applications, significant structure is a substantive joint condition on (i) the scale of recoverable eigenvalues of the normalized Slutsky matrix and (ii) the magnitude of observation error. Since this condition is not directly observable, it is useful to have an empirical diagnostic. The proof of our main result implies a statistical test that can be used to assess whether significant structure is present.

The test works as follows: Take one estimate of the market state and construct the intervention recommended in our positive result. Then  take an independent estimate of the market state, and \emph{treating this estimate as if it were exactly the true market state}, assess the net total surplus effect from using the intervention constructed based on the first estimate. If significant structure is present, the predicted net total surplus increase per dollar of predicted spending should be close to $1$. 

The reason for this is that under significant structure, the information on the underlying Slutsky matrix $\bm{D}$ will accurately reveal eigenspaces corresponding to large eigenvalues. Along eigenspaces with large eigenvalues, the total surplus pass-through behavior of interventions is the same for the true matrix and for any perturbed version $\bm{D} + \bm{E}$ where $\bm{E}$ is drawn like our error matrices; that is what makes the recovery procedure so robust. Thus, an intervention that is chosen to have a good effect for one sample should perform well in another.

More formally: Suppose the analyst has \(T\ge 4\) independent estimates of the market state (e.g., from disjoint subsamples); call them \((\widehat{\bm{D}}^{(t)},\widehat{\bm{q}}^{0,(t)})_t\). Let \(\underline{\widehat{\bm{\sigma}}}^{(t)}\) denote the  intervention produced by our rule (see \Cref{sec:rule}) from the  estimate $t$, with target spending level \(s=1\):
\[
\underline{\widehat{\bm{\sigma}}}^{(t)} = \frac{P_{\mathcal{L}(\underline{\widehat{\bm{D}}}^{(t)},\, b(n))}\,\underline{\widehat{\bm{q}}}^{0,(t)}}{\bigl\Vert P_{\mathcal{L}(\underline{\widehat{\bm{D}}}^{(t)},\, b(n))}\,\underline{\widehat{\bm{q}}}^{0,(t)} \bigr\Vert^2}
\]
For two splits \(t\neq t'\), the predicted welfare effect of the intervention designed on split \(t\), evaluated at the state estimate from split \(t'\), is:  
\begin{equation}\label{eq:test}
 \widehat{\tau}^{(t\to t')}:= \sum_{\ell=1}^n \bigl(\widehat{\bm{u}}^{\ell,(t')} \cdot \underline{\widehat{\bm{q}}}^{0,(t')}\bigr)\bigl(\widehat{\bm{u}}^{\ell,(t')} \cdot \underline{\widehat{\bm{\sigma}}}^{(t)}\bigr)\frac{|\widehat{\lambda}_\ell^{(t')}|}{1+|\widehat{\lambda}_\ell^{(t')}|},  
\end{equation}
 where \((\widehat{\lambda}_\ell^{(t')},\widehat{\bm{u}}^{\ell,(t')})\) is the eigendecomposition of \(\underline{\widehat{\bm{D}}}^{(t')}\) and the normalization is as in \Cref{sec:norm}. We pair the \(T\) estimates arbitrarily. For simplicity, assume \(T\) is even, writing \(T=2Z\), and for each pair \(z=1,\dots,Z\), define the pair statistic  
\[
\widehat{\tau}^{(z)}:=\frac{1}{2}\Bigl(\widehat{\tau}^{(2z-1\to 2z)}+\widehat{\tau}^{(2z\to 2z-1)}\Bigr),
\qquad
\overline{\tau}:=\frac{1}{Z}\sum_{z=1}^Z \widehat{\tau}^{(z)}.
\]
Here the arrow notation \(t\to t'\) means: construct the intervention using split \(t\), but evaluate its predicted welfare effect using the independently estimated state from split \(t'\). The statistic \(\widehat{\tau}^{(z)}\) symmetrizes this cross-evaluation within each paired split.

We now formalize the prediction that if the true state $\bm{\theta}$ belongs to an environment with significant structure then \(\widehat{\tau}^{(z)}\) is near one.

Let \(\mu_n(\bm{\theta}):=\mathbb{E}[\widehat{\tau}^{(1)}\mid \bm{\theta}]\) denote the (conditional) population mean of the paired statistic when there are \(n\) products. We form a one-sided \((1-\alpha)\) lower confidence bound for \(\mu_n(\bm{\theta})\) via the nonparametric bootstrap at the level of pairs: resample \(\{\widehat{\tau}^{(z)}\}_{z=1}^Z\) with replacement many times (indexing these bootstrap pairings $r=1,2,\ldots,R$) compute the resampled means \(\overline{\tau}^{(r)}\), and let \({L}_{1-\alpha}\) be the empirical $\alpha$-quantile of \(\overline{\tau}^{(r)} \). We say significant structure is not certified at tolerance \(\epsilon\) whenever \(L_{1-\alpha}<1-\epsilon\). Intuitively, a bound well below one indicates that predicted total surplus falls materially short of predicted expenditure in the resampling exercise. \Cref{app:split_sample_testing} formalizes the coverage guarantee for this diagnostic, proving that certification occurs with high probability when significant structure is indeed present.\medskip 

\paragraph{Illustration based on \Cref{Sec:Illustration}.} We consider the environment constructed in \Cref{Sec:Illustration} and consider a variety of values of $b/\sqrt{n}$. We generate $T=500$ independent noisy estimates of demand and quantities, and  for each bootstrap draw $r$, we pair them up and then compute the $\overline{\tau}^{r}$ outcome for that pairing. We repeat this many times with different pairings, take the $5^{\text{th}}$ percentile of the resulting empirical distribution (this is the curve in \Cref{fig:stat_test}), and reject when that number is below $1-\epsilon$.

Thus, the test is quite demanding in terms of how the intervention has to perform when an out-of-sample market state is taken as ground truth. And yet our result in \Cref{app:split_sample_testing} shows that when significant structure is present and ``strong'' in the sense of a large $b/\sqrt{n}$ ratio, it is very likely that the intervention does indeed perform very well out of sample, so that the test statistic falls above the line.

\begin{figure}[htbp]
\centering
\includegraphics[width=0.7\textwidth]{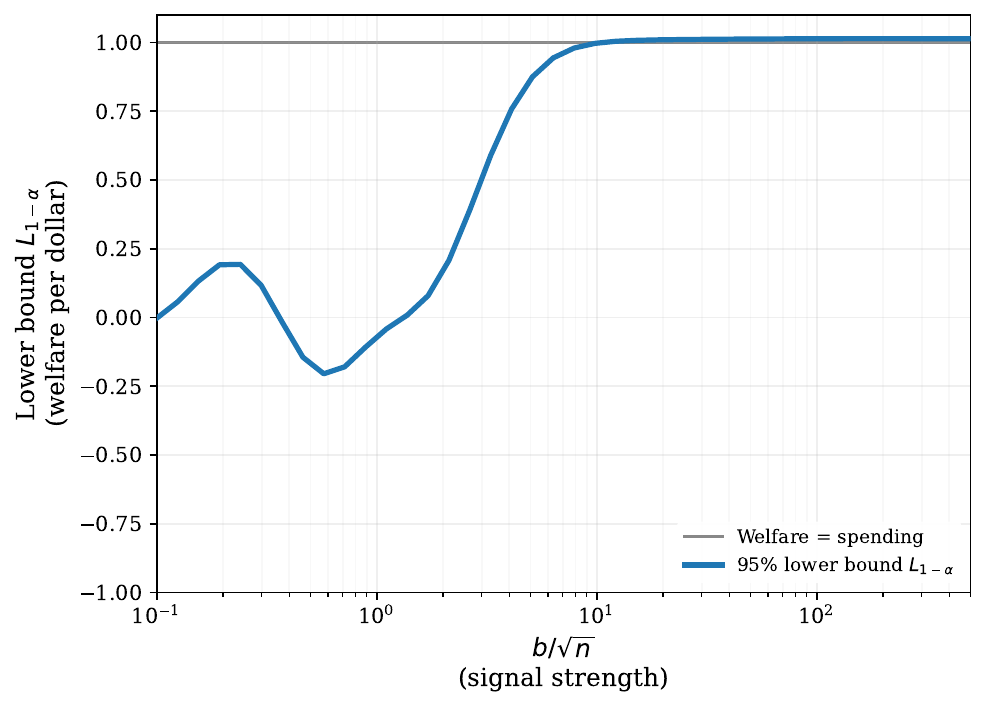}
\caption{Bootstrap lower confidence bound for cross-evaluated total surplus. For each value of $b/\sqrt{n}$, the line gives the $5^{\text{th}}$ percentile of the empirical distribution of test statistics across the bootstrap experiments. At low signal strength this is well below one. At high signal strength this value approaches one, consistent with significant structure. At tolerance \(\epsilon\), values below \(1-\epsilon\) indicate a failure to certify significant structure according to this diagnostic.}
\label{fig:stat_test}
\end{figure}

\subsection{Linear--quadratic utility and relationship with hedonic utility models}\label{sec:hedonic}

\subsubsection{Linear--quadratic utility}
Our leading example, the block model of demand in \Cref{sec:illustration3}, is consistent with optimal behavior of a representative consumer with a standard quasilinear quadratic utility function (as in \cite{VivesandSing}). Suppose there are $K$ product categories with $m$ products in each category. The representative consumer's utility is:
$$
U(\bm{q})=
\bm{a}^{\tr} \bm{q}-\frac{1}{2}\bm{q}^\tr\bm{H}\bm{q},
$$
where $\bm{a}\in \mathbb{R}^n_{+}$, $\bm{H}$ is an $n\times n$ symmetric matrix with $H_{ii}=z>0$, $H_{ij}=z_{\text{In}}>0$ if $i\neq j$ belong to the same category and, otherwise, $H_{ij}=z_{\text{Ex}}<0$. Hence, the marginal utility of product $i$'s consumption is decreasing in the consumption of products within the same category and increasing in the consumption of products in different categories. Given prices $\bm{p}$, the first-order condition of the maximization of \( U-\bm{p}\cdot \bm{q} \)  with respect to \( \bm{q} \) provides the demand function:
\[
\bm{q}(\bm{p}) = \bm{H}^{-1}(\bm{a} - \bm{p}),
\]
and, differentiating it with respect to prices yields the Jacobian:
\begin{equation} \label{Eq:discussion}
\bm{D}=\frac{\partial \bm{q}(\bm{p})}{\partial \bm{p}} = -\bm{H}^{-1}.
\end{equation}
For each market state $\bm{\theta}$, we wish to find $(z^*(\bm{\theta}),z^*_{\text{In}}(\bm{\theta}),z^*_{\text{Ex}}(\bm{\theta}))$ so that the $\bm{D}$ in (\ref{Eq:discussion})  corresponds to the Slutsky matrix in \Cref{sec:illustration3}, which is equivalent to solving a system of three equations fully specified in \Cref{app:linear}.

 As we increase the number of product categories $K$ we want the largest (in absolute value) eigenvalue of $\bm{D}$ grows linearly in $K$ and that $\bm{D}$ is negative definite.  This requires that, as we increase $K$, the following holds:  
\begin{equation}\label{Eq:B}
 \lim_{K\rightarrow \infty} z^*_{\text{Ex}}(\bm{\theta})m(K-1)+(z^*(\bm{\theta})+z^*_{\text{In}}(\bm{\theta})(m-1))= 0^{+}.  
 \end{equation}
The left-hand side is the corresponding eigenvalue of $\bm{H}$ in the all-ones direction (equivalently, the curvature of $U$ in that direction is its negative). Hence, (\ref{Eq:B}) requires that, as we expand product categories, this eigenvalue remains positive (preserving concavity) but approaches zero. Intuitively, this captures that increases of consumption in this direction remain valuable, corresponding to the idea of large-scale complementarities.

 One might have thought that a near-singular Hessian causes some kind of instability or bad behavior of the equilibrium around such a point.  Our results show that this is not the case, due to the way the Hessian maps into $\bm{D}$ and then into the system of equations governing the equilibrium.  When all is said and done, cost perturbations in any direction have bounded pass-through to prices and quantities.  
 
Finally, we note that equilibrium quantities and prices are well-behaved in an economy defined by (\ref{Eq:B}). To see this, consider the case where $a_i=a$ and $c_i=c$ for all $i$; this ensures that the equilibrium is symmetric and takes the following form:
\[
p=\frac{c}{1+|\lambda_1(\bm{\theta})|}+\frac{|\lambda_1(\bm{\theta})|a}{1+|\lambda_1(\bm{\theta})|} \text{ and } q= \frac{|\lambda_1(\bm{\theta})|}{1+|\lambda_1(\bm{\theta})|}(a-c),
\]
where $\lambda_1(\bm{\theta})$ is the dominant eigenvalue of the Slutsky matrix defined in \Cref{sec:illustration3}. Note that equilibrium price and quantity depend on $K$ only via $|\lambda_1(\bm{\theta})|$. As $K$ goes to infinity $|\lambda_1(\bm{\theta})|\rightarrow \infty$ and  $p\rightarrow a$ and $q\rightarrow (a-c)$. This is consistent with \Cref{Ass:B-Model}(a)(2) imposed on quantity  in the illustrative example of \Cref{sec:SBM}, since any uniform vector satisfies it after normalization.

\subsubsection{Relationship to hedonic models}
Recent work by \citet{pelligrino2021} and \citet{edererpelligrino2021} uses a canonical oligopoly model to empirically quantify the evolution of market power. In particular, in \citet{pelligrino2021}, the model of demand is hedonic, in the spirit of  \citet{lancaster1966new}: the household's utility is additively separable in the contributions of various \emph{characteristics}, and a product provides a bundle of these characteristics. 

As in the linear--quadratic demand above, the Slutsky matrix $\bm{D}$ derived from the hedonic model can be expressed as (\ref{Eq:discussion}). 
In this case, $\bm{H}$ is a convex combination of the identity matrix and the cosine similarity matrix of products' characteristics, which \citet{hoberg2016text} estimated for a large set of consumer goods  using text data.\footnote{\citet{pelligrino2021} and \citet{edererpelligrino2021} consider quantity competition, but the Slutsky matrix does not depend on this choice.} An implication of this is that all entries $H_{ij}$ are positive. This means that the marginal utility of consuming product $i$ is decreasing in product $j$'s consumption, and this holds for all products $(i,j)$. By assumption, the hedonic model does not allow for products that the consumer likes to consume together, e.g., a tennis racket and a tennis ball.

This turns out to impose a bound on $\Vert\underline{\bm{D}}\Vert$, and so significant structure cannot be realized, as we show in \Cref{app:hedonic}.\footnote{We checked that using the data and the parameter values used by \citet{pelligrino2021}, the leading eigenvalues of the Slutsky matrix are small.}  

The intuition can be found in \Cref{S:PT}: blocks of substitutes do not create large eigenvalues.  However, as that section also illustrated, significant structure will naturally emerge once we broaden our view of the market to allow for both complementary and substitute products. There is a straightforward economic reason for why such complementarities more readily produce significant structure: when  one good's price decreases, all its complements can experience comparable nonvanishing increases in demand. This creates the clusters of nonvanishing entries in $\bm{D}$ that are the hallmark of significant structure.

In summary, direct complementarities seem practically important and can naturally yield the significant structure central to our results. We hope these observations will motivate further empirical research on the structure of large-scale oligopoly models with complementary goods.

\subsection{Nonlinear demand}\label{concluding remarks_NL}
We have assumed that each good’s demand is exactly linear in its own price in a neighborhood of the status quo equilibrium. This assumption has helped significantly with tractability; for example, it afforded a very clean characterization of the complete-information benchmark in \Cref{prop:Welfare CI}.

This section clarifies the aspects of our analysis that extend with non-linear demand, and outlines aspects that need further development.

\subsubsection{Two conceptually distinct steps}
Our analysis rests on two results that play different roles:
\begin{enumerate}
\item A \emph{spectral decomposition of total surplus pass-through}: given any price perturbation $\dot{\bm{p}}$, the first-order total surplus effect depends only on the Slutsky matrix $\bm{D}$ and equilibrium markups at the status quo, via an interpretable spectral decomposition. This result is independent of the assumption that each good’s demand is locally linear in its own price.
\item A \emph{mapping from subsidies to prices}: in the linear environment, equation (\ref{eq:PricePass}) gives $\dot{\underline{\bm{p}}}=-(\bm{I}-\underline{\bm{D}})^{-1}\underline{\bm{\sigma}}$, so that the eigenbasis of $\underline{\bm{D}}$ diagonalizes both total surplus and the cost-to-price mapping. This result exploits the assumption that each good’s demand is locally linear in its own price in a fundamental way.
\end{enumerate}
The first step identifies which price movements are desirable; the second step shows how to implement them via per-unit subsidies. Local linearity is essential only for the second step.

\subsubsection{First-order welfare theory extends fully}
Because demand $\bm{q}(\bm{p})$ is differentiable, the first-order quantity response to any small price perturbation is $\dot{\bm{q}}=\bm{D}\dot{\bm{p}}$, without any linearity assumptions whatsoever. Indeed, the first-order welfare effect of a price perturbation is
\begin{equation}\label{eq:welfare_nonlinear}
\dot{W} = \sum_\ell \lambda_\ell \big[(\underline{\bm{p}}^0-\underline{\bm{c}}^0)\cdot \bm{u}^\ell\big]\big(\dot{\underline{\bm{p}}}\cdot \bm{u}^\ell\big),
\end{equation}

This is of interest in its own right. The expression shows that welfare gains depend on the alignment among three objects: the status-quo markup vector $(\underline{\bm{p}}^0-\underline{\bm{c}}^0)$, the demand responsiveness encoded in $\underline{\bm{D}}$, and the direction of the price perturbation $\dot{\underline{\bm{p}}}$. The spectral form decomposes this into independent contributions from each eigendirection: a price movement in direction $\bm{u}^\ell$ contributes to total surplus in proportion to the eigenvalue $\lambda_\ell$ and the markup's loading on that direction.

 Evaluating the Bertrand first-order condition at the status quo and normalizing gives  
\begin{equation}\label{eq:markup_quantity}
\underline{\bm{p}}^0-\underline{\bm{c}}^0=\underline{\bm{q}}^0,
\end{equation}
 so the total surplus formula becomes $$\dot{W} = \underline{\bm{q}}^0 \cdot \underline{\bm{D}} \;  \dot{\underline{\bm{p}}} = \sum_\ell \lambda_\ell \big(\underline{\bm{q}}^0 \cdot \bm{u}^\ell\big)\big(\dot{\underline{\bm{p}}} \cdot \bm{u}^\ell\big).$$ This is operationally advantageous: quantities are typically easier to measure than marginal costs, so the authority can evaluate potential price movements using observable data.  (Note, however, that the normalized quantities and prices require good estimates of $\partial_i q_i(\bm{p})/\partial p_i$ to perform the normalization---arguably still more accessible than precise measurements of markups.)  

The spectral perspective---that large eigenvalue directions with substantial quantity loadings matter most for total surplus---is thus fully general and does not depend on local linearity.

\subsubsection{Price pass-through of cost changes does not extend directly}
The assumption that each good’s demand is locally linear in its own price is used in our theory only in the mapping from subsidies to prices. Under this assumption, subsidies in the eigendirection $\bm{u}^\ell$ pass through exclusively to the price and quantity of that bundle, with coefficients that depend only on $\lambda_\ell$ (\Cref{prop:eig}). This ``diagonalization'' is what allows us to design robust interventions by targeting cost interventions to the estimated top eigenspace.

 When the assumption that each good’s demand is locally linear in its own price fails, the price pass-through of a per-unit subsidy depends additionally on the relevant slice of the demand-curvature tensor \citep[see, e.g.,][]{miklos2021pass} $$\left( \partial^2 q_i(\bm{p})/\partial p_i \partial p_k \right)_{i,k}.$$ The eigenbasis of $\bm{D}$ no longer diagonalizes the cost-to-price mapping, and interventions designed to move prices in a particular eigendirection may have significant effects in other directions.  
 
\subsubsection{Paths forward}
Extending \Cref{Th:Main} to the case beyond the linear setting of \Cref{A1} requires new techniques for one of two tasks: either (i) characterizing which cost perturbations achieve desired price movements when the price pass-through operator is more complicated, or (ii) designing alternative instruments that directly control prices.

For the first approach, the challenge is learning the mapping from cost perturbations to price movements. For example, under the most direct extension of the Bertrand equilibrium theory, predicting pass-through requires  recovering both the Slutsky matrix and certain mixed partials in the Hessian of the demand function. Performing this recovery from noisy observations is a harder problem than the one our framework addresses.\footnote{See, e.g., \cite*{miravete2024targeted} and \cite*{birchall2024estimating} for recent work on some of the challenges introduced by uncertainty about the shape/curvature of demand in differentiated-product markets.} There are many more numbers to recover, and a priori it is unclear how robust interventions should be generalized, since the linear algebra becomes more complicated. There is also the possibility that pass-through from costs to prices does not obey the equations of a Bertrand model.

A general approach here is to assume that prices move differentiably in costs but otherwise take the local mapping $\bm{L}$ in $ \dot{\bm{p}} = \bm{L} \dot{\bm{c}}$ as a black box. The goal then is simply to recover nonzero vectors in the preimage of the subspace of price vectors that we know have predictable welfare effects. Here \emph{compressed sensing} techniques are relevant---methods for recovering structured signals from limited measurements by exploiting sparsity or low-rank structure \citep[see, e.g.,][]{foucart2013compressive}.  We leave to future work the question of what assumptions on pass-through (whether with or without structural foundations) enable this type of robust intervention design.
 
For the second approach, one can sidestep the pass-through problem entirely by designing mechanisms that directly enforce target price perturbations, or incentivize them in a direct way (rather than by tailoring cost perturbations). Our spectral analysis of welfare effects remains useful for a planner who has power to control prices in this way: the Davis--Kahan analysis identifies which eigenspaces of $\widehat{\underline{\bm{D}}}$ can be trusted. This allows the authority to know that targeting price movements in these directions yields reliable total surplus improvements.  This is the most immediate way to use the ideas behind \Cref{Th:Main} in environments that do not satisfy \Cref{A1}.

\subsection{Games on networks}\label{concluding remarks_NG}
 We have studied robust interventions in an oligopoly pricing model, which can be seen as a network game with linear best replies \citep[e.g.,][]{Ballesteretal2006,bramoulle2016oxford} where the network summarizes the complement and substitute relationships among products. The methods we have developed can be extended to other settings where robust interventions have not been studied. For example, in a public goods setting, interventions would aim to realign private marginal returns with social marginal returns. The literature has developed tools to understand how to do this when the authority has precise information on the spillovers causing the underprovision of public goods.\footnote{See, for instance, \cite{bramoulle2014strategic} and \cite{ElliottGolub2019public}.
} However, we know little about designing interventions under noisy information about such externalities. Similarly, in contracting for teams under moral hazard, network methods have recently been developed for locally perturbing contracts to achieve better outcomes for a principal \citep{IncentiveSpillovers}. But it is a considerable challenge to extend these results to the realistic case where the strategic interactions among members of an organization are only imperfectly known. General games will lack some of the structure we have leveraged, including the symmetry and negative semidefiniteness of the strategic interactions matrix. So there are challenges to overcome in extending our results. We hope this paper stimulates research in these directions.

\newpage

{\footnotesize
\bibliographystyle{ecta}
\bibliography{svd,tax,refs}}
\newpage 
\appendix
\crefalias{section}{appendix}
\crefalias{subsection}{appendix}

\section{Omitted proofs and details for main results}

\subsection{Proof of Part (2) of \Cref{prop:Welfare CI}} \label{app:complete}
Since $\bm{D}$ is a non-diagonal matrix, the normalized $\underline{\bm{D}}\neq -\bm{I}$. Hence, there exists $\ell\neq \ell'$ with $\lambda_\ell\neq \lambda_\ell'$. For generic $\underline{\bm{q}}^0$, we also have that $(\bm{u}^\ell\cdot \underline{\bm{q}}^0)(\bm{u}^{\ell'}\cdot \underline{\bm{q}}^0)\neq 0$. Without loss, take $\ell=1$ and $\ell'=2$. We now construct a set of interventions under which we can obtain any outcome $(\dot{C}, \dot{P}, \dot{S})$ that satisfies (\ref{eq:Pareto_Identity}).

For any real number $\beta\neq 0$ and any $\dot{S}$, consider the following class of interventions
$$\underline{\bm{\sigma}}^0(\beta)=\beta\left( (\bm{u}^{1}\cdot \underline{\bm{q}}^0)\bm{u}^{1} +\alpha (\bm{u}^{2}\cdot \underline{\bm{q}}^0)\bm{u}^{2}\right),$$ where $\alpha$ is chosen so that $ \underline{\bm{\sigma}}^0 \cdot \underline{\bm{q}}^0=\dot{S}$; that is:
$$  \alpha = \frac{\dot{S}/\beta -(\bm{u}^{1}\cdot \underline{\bm{q}}^0)^2}{(\bm{u}^{2}\cdot \underline{\bm{q}}^0)^2}.$$
Hence, as we vary $\beta$, we keep $\underline{\bm{\sigma}}^0(\beta)\cdot \underline{\bm{q}}^0=\dot{S}$. Next, note that 
\begin{eqnarray*}
\dot{C} &=&\sum_{\ell}\frac{1}{1+|\lambda_\ell|}(\bm{u}^\ell \cdot \underline{\bm{q}}^0)(\bm{u}^\ell \cdot \underline{\bm{\sigma}}^0(\beta))=\frac{\beta}{1+|\lambda_1|}(\bm{u}^{1}\cdot \underline{\bm{q}}^0)^2  
 + \frac{\alpha \beta}{1+|\lambda_2|}(\bm{u}^{2}\cdot \underline{\bm{q}}^0)^2\\
 &=& 
\frac{\beta}{1+|\lambda_1|}(\bm{u}^{1}\cdot \underline{\bm{q}}^0)^2 +\frac{\dot{S} -\beta(\bm{u}^{1}\cdot \underline{\bm{q}}^0)^2}{(\bm{u}^{2}\cdot \underline{\bm{q}}^0)^2} \frac{1}{1+|\lambda_2|}(\bm{u}^{2}\cdot \underline{\bm{q}}^0)^2\\
&=&\beta(\bm{u}^{1}\cdot \underline{\bm{q}}^0)^2\left(\frac{1}{1+|\lambda_1|}- \frac{1}{1+|\lambda_2|}\right)+\frac{1}{1+|\lambda_2|} \dot{S}.
\end{eqnarray*}
Hence, $\dot{C}$ evaluated at $\bm{\sigma}^0(\beta)$ depends linearly on $\beta$, which implies that, for a given $\dot{S}$, we can achieve any given $\dot{C}$ by choosing $\beta$ appropriately.  \qedhere

\subsection{Proof of Proposition 2: Significant structure in the block model demand } \label{app:largeK}

We show that the environment covered by \Cref{prop:example} is a special case of the environment covered by \Cref{Th:Main}. Since in the block model $D_{ii}=-1$, \Cref{Ass:Asymptotics,as:error_assumption} are satisfied. Thus, it remains to show that significant structure holds. The following Lemma is key for this.

\begin{lem}[Spectral radius and positive block--constant eigenvector]
\label{lem:spectral-block}
Fix the state and let $\bm{D}$ satisfy the block structure with $K$ product categories each of size $m$. Denote its spectral radius by
\(
\rho\bigl(\bm{D}\bigr) = \max_j |\lambda_j(\bm{D})|.
\)
Then:

\begin{enumerate}
  \item[(a)] There exist constants $0 < c_1 \le c_2 < \infty$, independent of $K$, such that
  \[
    c_1 K \;\le\; \rho\bigl(\bm{D}\bigr) \;\le\; c_2 K
    \qquad \text{for all large enough} K.
  \]
  \item[(b)] For all large enough $K$ there exists an eigenpair $(\lambda, \bm{u})$ of $\bm{D}$ such that
  \begin{itemize}
    \item $\lambda < 0$ and $|\lambda| \ge c_1 K$;
    \item $\bm{u}_i > 0$ for all products $i$;
    \item $\bm{u}_i = \bm{u}_j$ whenever $k(i) = k(j)$ (the eigenvector is constant within categories).
  \end{itemize}
\end{enumerate}
\end{lem}

A key assumption we use for \Cref{lem:spectral-block} is part~(a)(1) of \Cref{Ass:B-Model} which rules out the possibility that cross-category demand complementarity vanishes. Part~(a) of \Cref{lem:spectral-block} shows that, for each state, the spectral radius is of order K. Part~(b) of \Cref{lem:spectral-block} shows that there is a large (negative) eigenvalue of order K whose eigenvector is strictly positive and constant within product categories. 

\Cref{lem:spectral-block} shows that $b(n)$ goes to infinity at a rate of $n$ when we increase $n$ by increasing $K$, holding $m$ constant. Part~(b) of \Cref{Ass:B-Model} implies that the norm of the noise diverges at a rate of $\sqrt{n}$. Hence, part~(2) of \Cref{Def:SS} holds. Part~(a)(2) of \Cref{Ass:B-Model} and part~(b) of \Cref{lem:spectral-block} imply part~(3) of \Cref{Def:SS} holds.

\begin{proof}[Proof of \Cref{lem:spectral-block}]

\emph{Step 1: Reduction to a $K \times K$ matrix.}
Let
\[
  \mathcal S
  \left\{
    \bm{u} \in \mathbb R^n :
    \bm{u}_i = \bm{u}_j \ \text{whenever } k(i) = k(j)
  \right\}
\]
be the subspace of vectors that are constant within each category.
For $\bm{u} \in \mathcal S$, let $u_c$ denote its common value on category $c$.
If $i$ is a product in category $c$, then
\[
  (\bm{D} \bm{u})_i
  =
  \bigl(-1 + (m-1) \bm{B}_{cc}\bigr) u_c
  +
  \sum_{\ell \neq c} m \bm{B}_{c\ell} u_\ell,
\]
which depends only on $c$.
Hence $\bm{D} \mathcal S \subseteq \mathcal S$.
On $\mathcal S$, $\bm{D}$ acts as the $K \times K$ symmetric matrix $\bm{A} = (\bm{A}_{c\ell})$ with
\begin{equation}
  \bm{A}_{cc} = -1 + (m - 1) \bm{B}_{cc},
  \qquad
  \bm{A}_{c\ell} = m \bm{B}_{c\ell} \quad (c \neq \ell).
  \label{eq:def-AK}
\end{equation}
In particular, if $\bm{A} \bm{y}=\lambda \bm{y}$, then the lifted vector $\bm{u}\in\mathbb{R}^n$ defined by $\bm{u}_i:=\bm{y}_{k(i)}$ satisfies $\bm{D}\bm{u}=\lambda \bm{u}$.

\smallskip

\emph{Step 2: Spectral-radius bounds.}
Let $\bm{v}:=\bm{1}/\sqrt{K}$.
Since $\bm{A}$ is symmetric, Rayleigh--Ritz implies
\[
  \lambda_{\min}\bigl(\bm{A}\bigr)
  \le
  \bm{v}^\top \bm{A} \bm{v}
  =
  \frac{1}{K}\sum_{c,\ell=1}^K \bm{A}_{c\ell}.
\]
By Property~NSD and the normalization $D_{ii}=-1$, we have $|D_{ij}|\le 1$ for all $i\neq j$, hence $\bm{B}_{cc}\le 1$ and $\bm{A}_{cc}\le m-2$.
Moreover, by part~(a)(1) of \Cref{Ass:B-Model}, for $c\neq \ell$ we have $\bm{B}_{c\ell} < \kappa<0$, so $\bm{A}_{c\ell} < m\kappa$.
Therefore,
\[
  \lambda_{\min}\bigl(\bm{A}\bigr)
  \le
  (m-2) + m\kappa(K-1)
  \le
  -c_1 K
\]
for all sufficiently large $K$, for some $c_1>0$ independent of $K$.
Since $\lambda_{\min}(\bm{A})$ is also an eigenvalue of $\bm{D}$, it follows that
\(\rho(\bm{D}) \ge |\lambda_{\min}(\bm{A})| \ge c_1 K\).

For the upper bound, again $|D_{ij}|\le 1$ implies
\[
  \rho\bigl(\bm{D}\bigr)
  \le
  \|\bm{D}\|_\infty
  := \max_i \sum_{j=1}^n |D_{ij}|
  \le
  n
  =
  mK,
\]
which proves part~(a) with $c_2:=m$.

\smallskip

\emph{Step 3: A strictly positive block-constant eigenvector.}
Define $\bm{M}:= m\bm{I}-\bm{A}$.
Then $\bm{M}_{cc}=m-\bm{A}_{cc}\ge 2$ and, for $c\neq \ell$, $\bm{M}_{c\ell}=-\bm{A}_{c\ell}=-m\bm{B}_{c\ell}>0$, so $\bm{M}$ is entrywise positive.
By the Perron--Frobenius theorem, $\bm{M}$ has a largest eigenvalue $\mu=\rho(\bm{M})$ with an associated eigenvector $\bm{y}\in\mathbb{R}^K$ satisfying $\bm{y}_c>0$ for all $c$.
Since $\bm{M}=m\bm{I}-\bm{A}$, we have $\bm{A}\bm{y}=(m-\mu)\bm{y}$.
Because the eigenvalues of $\bm{M}$ are $m-\lambda_j(\bm{A})$, we have $m-\mu=\lambda_{\min}(\bm{A})$.
Thus $\bm{y}$ is an eigenvector of $\bm{A}$ associated with $\lambda_{\min}(\bm{A})$, which satisfies $\lambda_{\min}(\bm{A})\le -c_1K$ by Step~2.
Lifting $\bm{y}$ to $\bm{u}\in\mathbb{R}^n$ by $\bm{u}_i:=\bm{y}_{k(i)}$ yields an eigenpair $(\lambda,\bm{u})$ of $\bm{D}$ with $\lambda=\lambda_{\min}(\bm{A})<0$, $|\lambda|\ge c_1K$, $\bm{u}_i>0$ for all $i$, and $\bm{u}$ constant within categories.
This proves part~(b).
\end{proof}

\subsection{Proof of \Cref{Th:Main}} \label{ap:main_proof}
First, we introduce a normalization that will be helpful in some of our asymptotic arguments, to avoid carrying around the norm of the quantity vector.

\begin{assumption} \label{as:normalized_q} $\Vert \bm{q}^0 \Vert =1$. 
\end{assumption}

\begin{remark} \textbf{Normalization of $\bm{q}^0$.}  \label{rem:normalization}
Our analysis is invariant to a common rescaling of quantities together with a compensating change in the unit of the numeraire.  Fix any sequence of market states $\bm{\theta}(n)=(\bm{D}(n),\bm{q}^0(n))\in\bm{\Theta}(n)$ in the environment, with associated baseline price vectors $\bm{p}^0(n)$.  For any sequence $\kappa(n)>0$, define transformed quantities by $\tilde{\bm{q}}^0(n):=\kappa(n)\bm{q}^0(n)$  and measure the numeraire in units that are $\kappa(n)$ times larger.  Then the numerical price vectors $\bm{p}^0(n)$, budget sets, Hicksian demands, and normalized Slutsky matrices are unchanged, and welfare and expenditure derivatives (which depend on $(\bm{q}^0(n),\bm{\sigma}(n))$ only through inner products) are identical, in terms of the new numeraire, to those computed from $(\tilde{\bm{q}}^0(n),\tilde{\bm{\sigma}}(n))$.  Taking $\kappa(n) = 1/\Vert \bm{q}^0(n) \Vert$ yields an observationally equivalent specification with $\Vert \tilde{\bm{q}}^0(n) \Vert = 1$ for all $n$, so the normalization entails no loss of generality.
Under this rescaling, the content of the significant structure condition does not change because the rescaling constants cancel out of all the relevant comparisons. \end{remark}

We also assume the normalization $D_{ii}=-1$ for all $i$, which is also without loss of generality by \Cref{sec:norm}. This saves on writing underlines throughout the proof.

Next, we introduce some notation. For any matrix $\bm{M}$, we define $\mathbb{\Lambda}(\bm{M},\underline{\lambda})$ as the set of eigenvalues of $\bm{M}$ with absolute value at least $\underline{\lambda}$, and $\mathcal{L}(\bm{M},\underline{\lambda})$ as the space spanned by corresponding eigenvectors. 

For the rest of the proof we will fix an environment $\mathcal{E}=(\bm{\Theta}(n),\varphi(n))$ with significant structure and let $b(n)$ denote the associated sequence from \Cref{Def:SS}. Fix an arbitrary market state $\bm{\theta}=(\bm{D},\bm{q}^0)\in\bm{\Theta}(n)$; by \Cref{Def:SS}, all bounds below are required to hold uniformly over such $\bm{\theta}$.

Under \Cref{Def:SS}, the signal about ${\bm{D}}$ can be written
$\widehat{\bm{D}} = \bm{D} + \bm{E}$, where condition~(2) of \Cref{Def:SS} guarantees that, for every $\bm{\theta}\in\bm{\Theta}(n)$ and every $\eta>0$,
\begin{equation} \mathbb{P}_{\varphi_{\bm{\theta}}(n)}\!\left( \Vert \bm{E} \Vert > \eta\, b(n) \right) \to_n 0. \label{eq:E_bound} \end{equation}

We now define three threshold sequences
\[
   \underline{M}(n) := \tfrac{1}{2}b(n),\quad
   \widehat{M}(n) := \tfrac{3}{4}b(n),\quad
   M(n) := b(n).
\]
For each $n$:
\[
 \underline{M}(n) < \widehat{M}(n) < M(n),
\]
and when $n$ is sufficiently large the differences
\[
M(n)-\underline{M}(n), \quad \widehat{M}(n)-\underline{M}(n)\quad \text{ and } \quad M(n)-\widehat{M}(n)
\]
are all of order $b(n)$. Since $\Vert\bm{E}\Vert = o_{\mathrm{p},\bm{\Theta}}(b(n))$ by \eqref{eq:E_bound}, these differences dominate $\Vert\bm{E}\Vert$ in probability.

Now fix an arbitrary sequence $(\bm{D}(n),\bm{q}^0(n))\in\bm{\Theta}(n)$. We use the following notation: 
    \begin{align*}
        \underline{\Lambda}(n):=\mathbb{\Lambda}(\bm{D}; \underline{M}(n)), \quad \widehat{\Lambda}(n)&:=\mathbb{\Lambda}(\widehat{\bm{D}}; \widehat{M}(n)),  \quad {\Lambda}(n):=\mathbb{\Lambda}({\bm{D}}; M(n))  \\ 
        \underline{L}(n):=\mathcal{L}(\bm{D},\underline{M}(n)), \quad \widehat{L}(n)&:=\mathcal{L}({\widehat{\bm{D}}},\widehat{M}(n)), \quad L(n):=\mathcal{L}({\bm{D}},M(n)).
    \end{align*}
    
    Let $P_V$ be the projection operator onto subspace $V$ and $P^\perp_V$ its orthogonal complement. Let $(\lambda_1, \bm{u}^1), \ldots, (\lambda_n, \bm{u}^n)$ be eigenpairs of $\bm{D}$, with $|\lambda_1| \geq |\lambda_2| \geq \cdots \geq |\lambda_n|$.

    \medskip
    \paragraph{Proof strategy: sandwiching eigenspaces.}
    The three thresholds define three nested eigenspaces. The \emph{core space} $L(n)$ (threshold $M$) contains the eigenvectors of $\bm{D}$ with eigenvalues whose absolute value is at least $b(n)$. This is the space that we know is nonempty; indeed, this is exactly what is guaranteed, uniformly across possible market states, by condition (3) of \Cref{Def:SS}. The \emph{estimated space} $\widehat{L}(n)$ (threshold $\widehat{M}$, applied to the noisy matrix $\widehat{\bm{D}}$) is what we can actually compute from data; because $\|\bm{E}\|=o_{\mathrm{p},\bm{\Theta}}(b(n))$, the Davis--Kahan theorem guarantees that this space captures all directions in $L(n)$. The gap $M(n) - \widehat{M}(n) = \Theta(b(n))$ provides sufficient slack: any true eigenvalue above $M(n)$ produces an estimated eigenvalue above $\widehat{M}(n)$, and moreover the associated eigenvectors are almost fully in the span of eigenvectors of $\widehat{\bm{D}}$ with eigenvalues at least $\widehat{M}$.
    
    The \emph{buffer space} $\underline{L}(n)$ (threshold $\underline{M}$) plays a subtler role. We will design an intervention $\bm{\sigma}$ that lives in the estimated space $\widehat{L}(n)$---this is what we can compute from data. But to verify that this intervention actually works, we must analyze its welfare effects in the \emph{true} eigenbasis of $\bm{D}$, which we do not observe. The buffer space provided by $\underline{L}(n)$ helps us achieve this: it is defined based on the true matrix, yet Davis--Kahan guarantees that $\widehat{L}(n)$ is almost entirely contained in $\underline{L}(n)$. And $\underline{L}(n)$, since it is defined using the true $\bm{D}$, has nice properties (in particular, being spanned by eigenvectors with large eigenvalues) that guarantee good behavior of the designed intervention. Here again we use Davis--Kahan to show that the differences between $\widehat{L}(n)$ and $\underline{L}(n)$ cause only mild errors that do not disrupt a good intervention. We now dive into the details.
        
    \medskip

    We now define our intervention:
    \begin{equation}
    \bm{\sigma} = \frac{P_{\widehat{L}(n)}\widehat{\bm{q}}^0}{\Vert P_{\widehat{L}(n)}\widehat{\bm{q}}^0 \Vert^2} 
    \end{equation}

  The expenditure of this intervention is   
     \begin{equation} \label{eq:define_expenditure} 
    \dot{S}=\bm{\sigma} \cdot \bm{q}^0 =  \frac{P_{\widehat{L}(n)}\widehat{\bm{q}}^0}{\Vert P_{\widehat{L}(n)}\widehat{\bm{q}}^0 \Vert^2}     \cdot \bm{q}^0
     \end{equation} Our first main lemma, which we will prove shortly, will assert that $\dot{S}$ converges in probability to $1$.   The challenge in proving such a result is that the actual expenditure depends on true quantities, whereas the intervention is built based on \emph{estimated} quantities projected onto an ``estimated'' eigenspace of $\widehat{\bm{D}}$.  We begin with a technical result that will be key to controlling the differences between actual and estimated objects.   It relies on the Davis--Kahan theorem.

\begin{lem}  
    The norm $\Vert P_{\underline{L}(n)}^\perp P_{\widehat{L}(n)} \Vert$ converges to $0$ in probability as $n \to \infty$.
\label{lem:bound_projection_product}
\end{lem}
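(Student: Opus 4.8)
The plan is to read the lemma off the Davis--Kahan $\sin\Theta$ theorem, using the high-probability bound $\Vert\bm{E}\Vert\leq b(n)$ recorded in \eqref{eq:E_bound} together with the fact that, by construction, the gap $\widehat{M}(n)-\underline{M}(n)$ dominates $b(n)$.

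First I would invoke Davis--Kahan in the following ``resolvent'' form: if $\bm{A}$ and $\bm{A}+\bm{E}$ are symmetric, $S_{1}\subseteq\mathbb{R}$ is an interval, $S_{2}\subseteq\mathbb{R}$ lies outside the $\delta$-neighborhood of $S_{1}$ (so $\mathrm{dist}(S_{1},S_{2})\geq\delta>0$), and $P$ and $Q$ denote the orthogonal projections onto the spans of the eigenvectors of $\bm{A}$ and of $\bm{A}+\bm{E}$ whose eigenvalues lie in $S_{1}$ and $S_{2}$ respectively, then $\Vert PQ\Vert\leq\Vert\bm{E}\Vert/\delta$. I would then apply this with $\bm{A}=\bm{D}$, $\bm{A}+\bm{E}=\widehat{\bm{D}}$, the interval $S_{1}=(-\underline{M}(n),\underline{M}(n))$, and $S_{2}=\{\mu:|\mu|\geq\widehat{M}(n)\}$. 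By the definitions $\underline{L}(n)=\mathcal{L}(\bm{D},\underline{M}(n))$ and $\widehat{L}(n)=\mathcal{L}(\widehat{\bm{D}},\widehat{M}(n))$, the spectral projection of $\bm{D}$ associated with $S_{1}$ is precisely $P^{\perp}_{\underline{L}(n)}$ and that of $\widehat{\bm{D}}$ associated with $S_{2}$ is precisely $P_{\widehat{L}(n)}$. The $\delta$-neighborhood of $S_{1}$ is $(-\widehat{M}(n),\widehat{M}(n))$, which is disjoint from $S_{2}$, so one may take $\delta=\widehat{M}(n)-\underline{M}(n)$, and Davis--Kahan gives $\bigl\Vert P^{\perp}_{\underline{L}(n)}\,P_{\widehat{L}(n)}\bigr\Vert\leq\Vert\bm{E}\Vert/\bigl(\widehat{M}(n)-\underline{M}(n)\bigr)$. (When $\widehat{L}(n)$ is trivial the left side is $0$ and there is nothing to prove; in fact, by Weyl's inequality together with the hypothesis that $\bm{D}$ has eigenvalues exceeding $M(n)$ in absolute value for large $n$, $\widehat{L}(n)$ is nonempty for large $n$.)

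To finish I would condition on the event $\{\Vert\bm{E}\Vert\leq b(n)\}$, which has probability tending to $1$ by \eqref{eq:E_bound}, and use that $b(n)\ll\widehat{M}(n)-\underline{M}(n)$ by the choice of the sequences $\underline{M}(n)$ and $\widehat{M}(n)$; on this event the bound above is at most $b(n)/\bigl(\widehat{M}(n)-\underline{M}(n)\bigr)\to_{n} 0$. Hence $\Vert P^{\perp}_{\underline{L}(n)}P_{\widehat{L}(n)}\Vert\to 0$ in probability, as claimed. The only point that requires care is the bookkeeping of spectral sets: one must separate the thresholds so that the relevant distance is the gap $\widehat{M}(n)-\underline{M}(n)$, rather than $0$ (which is what the naive pairing of ``large eigenvalues of $\bm{D}$'' with ``large eigenvalues of $\widehat{\bm{D}}$'' would give); everything else is a routine combination of the deterministic perturbation bound with the high-probability control of $\Vert\bm{E}\Vert$. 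I do not expect a genuine obstacle in this lemma itself — the delicate work comes in the later lemmas that build on it.
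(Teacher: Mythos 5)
Your proof is correct and takes essentially the same route as the paper: both invoke the Davis--Kahan $\sin\Theta$ theorem with the spectral separation $\widehat{M}(n)-\underline{M}(n)\gg b(n)$ between the eigenvalue sets associated with $P^{\perp}_{\underline{L}(n)}$ and $P_{\widehat{L}(n)}$, then combine the resulting deterministic bound with the high-probability control $\Vert\bm{E}\Vert\lesssim b(n)$ from Markov. Your write-up is a bit more explicit about the exact ``two-set'' form of Davis--Kahan and about why $\widehat{L}(n)$ is nonempty (Weyl's inequality), but the key quantities and logic coincide with the paper's proof.
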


 \begin{proof} Standard eigenvalue perturbation bounds and the Davis--Kahan theorem guarantee that, for any $\eta > 0$, there exists $N_1(\eta)$ such that for all $n > N_1(\eta)$, with probability greater than or equal to $1-\eta$, two  properties hold.  First, every eigenvalue in the set $\Lambda(n)$, which is nonempty by construction (it always contains at least the largest eigenvalue of $\bm{D}$), has a corresponding eigenvalue within distance $O(b(n))$, and therefore in $\widehat{\Lambda}(n)$.  Second,  
    \begin{equation}
        \Vert P_{\underline{L}(n)}^\perp P_{\widehat{L}(n)} \Vert \leq \frac{2\Vert \bm{E} \Vert}{g}, \label{eq:use_gap}
    \end{equation}
     where $g$, the ``gap,'' is the minimum distance between some eigenvalue of $\widehat{\bm{D}}$ in $\widehat{\Lambda}(n)$ and some eigenvalue of $\bm{D}$ not contained in $\underline{\Lambda}(n)$.  By our choice of thresholds, this gap is at least $\widehat{M}(n)-\underline{M}(n)$, which is a fixed positive fraction of $b(n)$ and therefore bounded below by a positive constant multiple of $b(n)$ for all large $n$. Combining this with $\Vert \bm{E} \Vert = o_{\mathrm{p},\bm{\Theta}}(b(n))$, we obtain that the right-hand side of \eqref{eq:use_gap} converges to zero in probability. Thus, increasing $N_1(\eta)$ if necessary, we conclude the statement of the lemma.  
    \end{proof}

    \begin{lem} \label{lem:control_E_dot}
        As $n\to \infty$, the expenditure derivative $\dot{S}$ converges in probability to $1$.
    \end{lem}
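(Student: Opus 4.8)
The plan is to show that $\dot{S} = \bm{\sigma}\cdot\bm{q}^0$ converges to $1$ by comparing the ``true'' expenditure with the estimated quantity $P_{\widehat{L}(n)}\widehat{\bm{q}}^0$ against which the intervention is normalized. Write $\dot{S} = \frac{P_{\widehat{L}(n)}\widehat{\bm{q}}^0 \cdot \bm{q}^0}{\Vert P_{\widehat{L}(n)}\widehat{\bm{q}}^0\Vert^2}$. The natural approach is to prove two things: (a) the numerator $P_{\widehat{L}(n)}\widehat{\bm{q}}^0\cdot\bm{q}^0$ and the squared norm in the denominator are each asymptotically equal to $\Vert P_{\widehat{L}(n)}\bm{q}^0\Vert^2$, up to terms vanishing in probability; and (b) this common quantity $\Vert P_{\widehat{L}(n)}\bm{q}^0\Vert$ stays bounded away from $0$ (so that the ratio is well-behaved) — in fact one only needs that the numerator and denominator agree up to $o_p(1)$ and that the denominator is bounded below, after which the ratio tends to $1$.

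First I would replace $\widehat{\bm{q}}^0$ by $\bm{q}^0$ everywhere, at the cost of a $P_{\widehat{L}(n)}\bm{\varepsilon}$ term. Since $\widehat{L}(n)$ is the eigenspace of $\widehat{\bm{D}}$ for eigenvalues exceeding $\widehat{M}(n)$, its dimension is at most $\Vert\widehat{\bm{D}}\Vert_F^2/\widehat{M}(n)^2$; using that $\widehat{M}(n)\gg b(n)$ together with the operator-norm bound on $\bm{E}$ (and Property NSD, which bounds $\Vert\bm{D}\Vert_F$ appropriately in the recoverable-structure regime), one checks $\dim \widehat{L}(n) = o(n)$, so \Cref{as:generalized}(2) gives $\Vert P_{\widehat{L}(n)}\bm{\varepsilon}\Vert \to_p 0$. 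Combined with $\Vert\bm{q}^0\Vert\le 1$ and Cauchy--Schwarz, this lets me replace $P_{\widehat{L}(n)}\widehat{\bm{q}}^0$ by $P_{\widehat{L}(n)}\bm{q}^0$ in both numerator and denominator with only $o_p(1)$ error. After this reduction the numerator is exactly $\Vert P_{\widehat{L}(n)}\bm{q}^0\Vert^2$, matching the denominator, so $\dot{S}\to_p 1$ provided $\Vert P_{\widehat{L}(n)}\bm{q}^0\Vert$ is bounded away from $0$ in probability.

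The core of the argument is therefore the lower bound $\liminf_n \Vert P_{\widehat{L}(n)}\bm{q}^0\Vert > 0$. Here I would use \Cref{lem:bound_projection_product}: since $\Vert P_{\underline{L}(n)}^\perp P_{\widehat{L}(n)}\Vert \to_p 0$, the estimated eigenspace $\widehat{L}(n)$ is (asymptotically) contained in the true eigenspace $\underline{L}(n)$, and a symmetric application of Davis--Kahan in the other direction — using that every eigenvalue of $\bm{D}$ above $M(n)$ has a matched eigenvalue of $\widehat{\bm{D}}$ above $\widehat{M}(n)$, which is the first conclusion recorded in the proof of \Cref{lem:bound_projection_product} — shows that $\widehat{L}(n)$ asymptotically \emph{contains} $L(n) = \mathcal{L}(\bm{D},M(n))$, i.e. $\Vert P_{\widehat{L}(n)}^\perp P_{L(n)}\Vert \to_p 0$. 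Hence $\Vert P_{\widehat{L}(n)}\bm{q}^0\Vert \ge \Vert P_{L(n)}\bm{q}^0\Vert - \Vert P_{\widehat{L}(n)}^\perp P_{L(n)}\Vert\,\Vert\bm{q}^0\Vert \ge \delta - o_p(1)$, where the bound $\Vert P_{L(n)}\bm{q}^0\Vert\ge\delta$ is precisely the $(M(n),\delta)$-recoverable structure hypothesis. This gives the desired uniform lower bound and completes the proof.

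The main obstacle I anticipate is the second step: making rigorous the claim that $\widehat{L}(n) \supseteq L(n)$ asymptotically. Davis--Kahan as typically stated controls the sine of the angle between eigenspaces defined by matched \emph{intervals} of the spectrum, and one has to be careful that the interval $[\widehat{M}(n),\infty)$ used to define $\widehat{L}(n)$ captures \emph{all} the perturbed copies of eigenvalues above $M(n)$ while excluding perturbed copies of eigenvalues below, say, $\underline{M}(n)$ — this is exactly why the three separated thresholds $\underline{M}(n)\ll\widehat{M}(n)\ll M(n)$ with pairwise gaps dominating $b(n)$ were introduced, and the bookkeeping of which eigenvalues land where, on the high-probability event $\Vert\bm{E}\Vert\le b(n)$, is the delicate part. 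The dimension bound $\dim\widehat{L}(n)=o(n)$ needed for the law-of-large-numbers step is comparatively routine but should be stated carefully since it underlies the replacement of $\widehat{\bm{q}}^0$ by $\bm{q}^0$.
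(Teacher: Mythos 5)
Your overall strategy is correct and in fact shores up a step the paper handles loosely, but there is a genuine gap in your dimension bound, and a simpler reduction is available.

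\emph{Reduction step.} You match numerator and denominator of the ratio $\dot{S}=\frac{P_{\widehat{L}(n)}\widehat{\bm{q}}^0\cdot\bm{q}^0}{\Vert P_{\widehat{L}(n)}\widehat{\bm{q}}^0\Vert^2}$ by replacing $\widehat{\bm{q}}^0$ by $\bm{q}^0$ in both, then argue the ratio tends to $1$. That works, but the paper does something cleaner: substituting $\bm{q}^0=\widehat{\bm{q}}^0-\bm{\varepsilon}$ gives $\dot{S}=1-\frac{P_{\widehat{L}(n)}\widehat{\bm{q}}^0\cdot P_{\widehat{L}(n)}\bm{\varepsilon}}{\Vert P_{\widehat{L}(n)}\widehat{\bm{q}}^0\Vert^2}$, because $P_{\widehat{L}(n)}\widehat{\bm{q}}^0\cdot\widehat{\bm{q}}^0=\Vert P_{\widehat{L}(n)}\widehat{\bm{q}}^0\Vert^2$ exactly. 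Cauchy--Schwarz then bounds the error by $\Vert P_{\widehat{L}(n)}\bm{\varepsilon}\Vert/\Vert P_{\widehat{L}(n)}\widehat{\bm{q}}^0\Vert$, and the ratio analysis is avoided.

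\emph{Lower bound on the denominator.} Here you are actually more careful than the paper, whose one-sentence justification (``$L(n)$ is a subspace of $\widehat{L}(n)$'') cannot be taken literally since these are eigenspaces of different matrices. Your symmetric application of Davis--Kahan, using the matching of eigenvalues across the separated thresholds $\underline{M}(n)\ll\widehat{M}(n)\ll M(n)$ to get $\Vert P_{\widehat{L}(n)}^\perp P_{L(n)}\Vert\to_p 0$, is the right way to make the claim precise. The projection inequality you use, $\Vert P_A\bm{x}\Vert\ge\Vert P_B\bm{x}\Vert-\Vert P_A^\perp P_B\Vert\,\Vert\bm{x}\Vert$, is valid (it follows from $\Vert P_B\bm{x}\Vert\le\Vert P_B P_A\bm{x}\Vert+\Vert P_B P_A^\perp\bm{x}\Vert\le\Vert P_A\bm{x}\Vert+\Vert P_A^\perp P_B\Vert\,\Vert\bm{x}\Vert$, using $\Vert P_B P_A^\perp\Vert=\Vert(P_A^\perp P_B)^*\Vert$).

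\emph{Dimension bound.} This is where there is a real gap. You bound $\dim\widehat{L}(n)\le\Vert\widehat{\bm{D}}\Vert_F^2/\widehat{M}(n)^2$, but $\Vert\bm{D}\Vert_F^2=\sum_\ell\lambda_\ell^2\le\Vert\bm{D}\Vert_{\mathrm{op}}\sum_\ell|\lambda_\ell|$ can be as large as order $n^2$ (the recoverable-structure regime forces $\Vert\bm{D}\Vert_{\mathrm{op}}\to\infty$), so this gives $\dim\widehat{L}(n)=o(n)$ only when $\widehat{M}(n)\gg\sqrt{n}$, which is not implied by $\widehat{M}(n)\gg b(n)$. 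The paper's argument is tighter: since $\bm{D}$ is negative semidefinite with diagonal $-1$, $\sum_\ell|\lambda_\ell|=-\operatorname{trace}(\bm{D})=n$, so the number of eigenvalues of $\bm{D}$ with $|\lambda|\ge M$ is at most $n/M$. Combined with Weyl's inequality to transfer the count from $\widehat{\bm{D}}$ to $\bm{D}$ (every eigenvalue of $\widehat{\bm{D}}$ exceeding $\widehat{M}(n)$ in modulus matches one of $\bm{D}$ exceeding $\widehat{M}(n)-\Vert\bm{E}\Vert$), this gives $\dim\widehat{L}(n)=O(n/\widehat{M}(n))=o(n)$. Replace your Frobenius-norm count by this trace argument.
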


    \begin{proof}
   
  Write $\bm{q}^0=\widehat{\bm{q}}^0-\bm{\varepsilon}$ and calculate
        \begin{align*}
    \dot{S} = \bm{\sigma} \cdot \bm{q}^0 &=  \frac{P_{\widehat{L}(n)}\widehat{\bm{q}}^0}{\Vert P_{\widehat{L}(n)}\widehat{\bm{q}}^0 \Vert^2}  \cdot (\widehat{\bm{q}}^0 - \bm{\varepsilon}) \\
    &= 1-\frac{P_{\widehat{L}(n)}\widehat{\bm{q}}^0}{\Vert P_{\widehat{L}(n)}\widehat{\bm{q}}^0 \Vert^2} \cdot P_{\widehat{L}(n)} \bm{\varepsilon}
    \end{align*}

     By the Cauchy--Schwarz inequality,  
     \[  
     \bigl|\dot{S}-1\bigr|  
    \le
     \frac{\Vert P_{\widehat{L}(n)}{\bm{\varepsilon}} \Vert}{\Vert P_{\widehat{L}(n)}\widehat{\bm{q}}^0 \Vert}.  
      \]  
     Condition (3) of \Cref{Def:SS} and \Cref{as:normalized_q} imply $\Vert P_{L(n)}\bm{q}^0 \Vert \geq \delta$.  By the same Davis--Kahan perturbation argument used above (now comparing $L(n)$ and $\widehat{L}(n)$ with gap $M(n)-\widehat{M}(n)=\Theta(b(n))$), we have $\Vert P_{\widehat{L}(n)}^\perp P_{L(n)} \Vert \xrightarrow{p} 0$.  Moreover,  
     \[  
     P_{L(n)}\bm{q}^0  
     =  
     P_{L(n)}P_{\widehat{L}(n)}\bm{q}^0  
     +  
     P_{L(n)}P_{\widehat{L}(n)}^\perp \bm{q}^0.  
      \]  
     Therefore,  
     \[  
    \delta
    \le
     \Vert P_{L(n)}\bm{q}^0 \Vert  
    \le
     \Vert P_{\widehat{L}(n)}\bm{q}^0 \Vert  
     +  
     \Vert P_{L(n)}P_{\widehat{L}(n)}^\perp\Vert\,\Vert \bm{q}^0 \Vert  
     =  
     \Vert P_{\widehat{L}(n)}\bm{q}^0 \Vert + o_{\mathrm{p}}(1),  
     \]  
     where the last equality uses \Cref{as:normalized_q}.  Hence  
     \[  
     \Vert P_{\widehat{L}(n)}\bm{q}^0 \Vert \ge \delta - o_{\mathrm{p}}(1).  
      \]  
     Since $\widehat{\bm{q}}^0=\bm{q}^0+\bm{\varepsilon}$,  
     \[  
     \Vert P_{\widehat{L}(n)}\widehat{\bm{q}}^0 \Vert  
     \ge \Vert P_{\widehat{L}(n)}\bm{q}^0 \Vert - \Vert P_{\widehat{L}(n)}\bm{\varepsilon} \Vert. 
      \]  
     So it remains to show that $\Vert P_{\widehat{L}(n)}{\bm{\varepsilon}} \Vert$ tends in probability to $0$.  
     
     For this, we state a lemma.  
     
     \begin{lem} \label{lem:generalized}  
     Fix an environment that has significant structure, subject to the normalization $\Vert \bm{q}^0 \Vert =1$.  Let $V(n)$ be a sequence of linear subspaces of $\mathbb{R}^n$, possibly random but measurable with respect to $\widehat{\bm{D}}$, such that $\dim V(n)/n \xrightarrow{p} 0$.  Then $\Vert P_{V(n)} \bm{\varepsilon} \Vert$ tends to $0$ in probability.  
     \end{lem}
    
    \begin{proof}
     If $\bm{\varepsilon}=\bm{0}$ almost surely, there is nothing to prove.  Otherwise let $\sigma_n^2:=\operatorname{Var}(\varepsilon_i)\in(0,\infty)$ and define $Z_i:=\varepsilon_i/\sigma_n$, so $Z:=(Z_1,\dots,Z_n)^\tr$ has i.i.d.\ entries with mean $0$ and variance $1$.  Since scaling cancels,  
     \[  
     \frac{\Vert P_{V(n)}\bm{\varepsilon}\Vert^2}{\Vert\bm{\varepsilon}\Vert^2}  
     =  
     \frac{\Vert P_{V(n)}Z\Vert^2}{\Vert Z\Vert^2}  
     =  
     \frac{\frac{1}{n}\Vert P_{V(n)}Z\Vert^2}{\frac{1}{n}\Vert Z\Vert^2}.  
      \]  
     Let $d(n):=\dim V(n)$.  Because $V(n)$ is measurable with respect to $\widehat{\bm{D}}$ and $\bm{\varepsilon}$ is independent of $\bm{E}$ by our signal assumption, the vector $Z$ is independent of $V(n)$.  Therefore,  
     \[  
     \mathbb{E}\!\left[\frac{1}{n}\Vert P_{V(n)}Z\Vert^2 \,\middle|\, V(n)\right]  
     =  
     \frac{1}{n}\operatorname{tr}\!\left(P_{V(n)}\right)  
     =  
     \frac{d(n)}{n}.  
      \]  
     Since $0\le d(n)/n\le 1$ and $d(n)/n\xrightarrow{p}0$, we also have $\mathbb{E}[d(n)/n]\to 0$.  Markov's inequality then gives, for every $\eta>0$,  
     \[  
     \mathbb{P}\!\left(\frac{1}{n}\Vert P_{V(n)}Z\Vert^2>\eta\right)  
    \le
     \frac{1}{\eta}\,\mathbb{E}\!\left[\frac{1}{n}\Vert P_{V(n)}Z\Vert^2\right]  
     =  
     \frac{1}{\eta}\,\mathbb{E}\!\left[\frac{d(n)}{n}\right]  
     \to 0.  
      \]  
     By the weak law of large numbers, $\frac{1}{n}\Vert Z\Vert^2=\frac{1}{n}\sum_{i=1}^n Z_i^2\xrightarrow{p}1$.  Slutsky's theorem implies $\Vert P_{V(n)}\bm{\varepsilon}\Vert/\Vert\bm{\varepsilon}\Vert\xrightarrow{p}0$.  Under significant structure and the normalization $\Vert \bm{q}^0 \Vert =1$, condition (1) of \Cref{Def:SS} implies $\Vert \bm{\varepsilon} \Vert=O_{\mathrm{p},\bm{\Theta}}(1)$, and therefore $\Vert P_{V(n)}\bm{\varepsilon}\Vert\xrightarrow{p}0$.  
     \end{proof}
    
     Applying \Cref{lem:generalized} with $V(n)=\widehat{L}(n)$, and using that\footnote{ Weyl's inequality implies that with probability $1-o(1)$, every eigenvalue counted by $\widehat{L}(n)$ comes from a true eigenvalue of $\bm{D}$ whose absolute value exceeds $b(n)/2$. Since $\bm{D}$ is negative semidefinite with trace $-n$, there are at most $2n/b(n)=o(n)$ such eigenvalues.}  $\dim \widehat{L}(n)/n \xrightarrow{p}0$, we obtain $\Vert P_{\widehat{L}(n)}{\bm{\varepsilon}} \Vert \xrightarrow{p} 0$.  Combining this with  what we stated just before the lemma,   \[  
     \Vert P_{\widehat{L}(n)}\widehat{\bm{q}}^0 \Vert  
     \ge \Vert P_{\widehat{L}(n)}\bm{q}^0 \Vert - \Vert P_{\widehat{L}(n)}\bm{\varepsilon} \Vert, 
      \]   shows that $\Vert P_{\widehat{L}(n)}\widehat{\bm{q}}^0 \Vert \geq \delta/2$ with probability $1-o(1)$.  Returning to the bound on $|\dot{S}-1|$, we conclude that $\dot{S}\xrightarrow{p}1$.  
     \end{proof}

Now, to prove the theorem, we use  \Cref{lem:welfare-pass} to write:
    \begin{equation}
    \dot{W} = \underbrace{\sum_{\lambda_\ell \in \underline{\Lambda}(n)} (\bm{u}^\ell \cdot \bm{q}^0)(\bm{u}^\ell \cdot \bm{\sigma})\frac{|\lambda_\ell|}{1+|\lambda_\ell|}}_{\dot{W}_M} + \underbrace{\sum_{\lambda_\ell \notin \underline{\Lambda}(n)} (\bm{u}^\ell \cdot \bm{q}^0)(\bm{u}^\ell \cdot \bm{\sigma})\frac{|\lambda_\ell|}{1+|\lambda_\ell|}}_{\dot{W}_R}. \label{eq:welfare_split}
    \end{equation}
where we have divided the expression into a main (M) part and the rest (R). A similar (but simpler) decomposition applies to expenditure
\begin{equation}
    \dot{S} =\underbrace{\sum_{\lambda_\ell \in \underline{\Lambda}(n)} (\bm{u}^\ell \cdot \bm{q}^0)(\bm{u}^\ell \cdot \bm{\sigma})}_{\dot{S}_M} + \underbrace{\sum_{\lambda_\ell \notin \underline{\Lambda}(n)} (\bm{u}^\ell \cdot \bm{q}^0)(\bm{u}^\ell \cdot \bm{\sigma})}_{\dot{S}_R}. \label{eq:expenditure_split}
    \end{equation}

The proof can be completed with two further lemmas.

   \begin{lem} As $n\to \infty$, both 
    $\dot{W}_{R} $ and $  \dot{S}_{R} $ converge in probability to $0$.  \label{lem:small_remainders}
    \end{lem}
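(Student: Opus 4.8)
The plan is to bound both remainder terms $\dot{W}_R$ and $\dot{S}_R$ by the single quantity $\Vert P^\perp_{\underline{L}(n)}\bm{\sigma}\Vert$ and then show that this quantity vanishes in probability. The point is that $\dot{W}_R$ and $\dot{S}_R$ see only the part of the intervention lying outside $\underline{L}(n)=\mathcal{L}(\bm{D},\underline{M}(n))$, whereas $\bm{\sigma}$ is by construction supported on $\widehat{L}(n)=\mathcal{L}(\widehat{\bm{D}},\widehat{M}(n))$; since $\underline{M}(n)\ll\widehat{M}(n)$, \Cref{lem:bound_projection_product} says $\widehat{L}(n)$ is nearly contained in $\underline{L}(n)$, so that outside part is negligible.

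First I would rewrite the remainders spectrally. Because $\underline{L}(n)$ is exactly the span of the eigenvectors $\bm{u}^\ell$ with $\lambda_\ell\in\underline{\Lambda}(n)$, its orthogonal complement is spanned by the $\bm{u}^\ell$ with $\lambda_\ell\notin\underline{\Lambda}(n)$, and hence $\sum_{\lambda_\ell\notin\underline{\Lambda}(n)}(\bm{u}^\ell\cdot\bm{\sigma})^2=\Vert P^\perp_{\underline{L}(n)}\bm{\sigma}\Vert^2$. Applying Cauchy--Schwarz to the sum defining $\dot{S}_R$ in \cref{eq:expenditure_split}, and likewise to $\dot{W}_R$ in \cref{eq:welfare_split} after discarding the pass-through weights $\frac{|\lambda_\ell|}{1+|\lambda_\ell|}\in[0,1]$, and using $\sum_\ell(\bm{u}^\ell\cdot\bm{q}^0)^2=\Vert\bm{q}^0\Vert^2\le 1$ from \Cref{as:market_size}, I would obtain
\[
 |\dot{S}_R| \;\le\; \Vert P^\perp_{\underline{L}(n)}\bm{\sigma}\Vert
 \qquad\text{and}\qquad
 |\dot{W}_R| \;\le\; \Vert P^\perp_{\underline{L}(n)}\bm{\sigma}\Vert .
\]

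Next I would control $\Vert P^\perp_{\underline{L}(n)}\bm{\sigma}\Vert$. Substituting $\bm{\sigma}=P_{\widehat{L}(n)}\widehat{\bm{q}}^0/\Vert P_{\widehat{L}(n)}\widehat{\bm{q}}^0\Vert^2$ and using that $P_{\widehat{L}(n)}\widehat{\bm{q}}^0$ lies in the range of $P_{\widehat{L}(n)}$,
\[
 \Vert P^\perp_{\underline{L}(n)}\bm{\sigma}\Vert
 \;=\; \frac{\Vert P^\perp_{\underline{L}(n)}P_{\widehat{L}(n)}\big(P_{\widehat{L}(n)}\widehat{\bm{q}}^0\big)\Vert}{\Vert P_{\widehat{L}(n)}\widehat{\bm{q}}^0\Vert^2}
 \;\le\; \frac{\Vert P^\perp_{\underline{L}(n)}P_{\widehat{L}(n)}\Vert}{\Vert P_{\widehat{L}(n)}\widehat{\bm{q}}^0\Vert}.
\]
The numerator tends to $0$ in probability by \Cref{lem:bound_projection_product}, while the denominator is bounded below by $\delta$ up to a vanishing error with probability tending to $1$, exactly as established in the proof of \Cref{lem:control_E_dot} from the recoverable-structure hypothesis. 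Hence $\Vert P^\perp_{\underline{L}(n)}\bm{\sigma}\Vert\to 0$ in probability, and the two displayed bounds give the conclusion.

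I do not anticipate a genuine obstacle here: all the real content sits in \Cref{lem:bound_projection_product} (the Davis--Kahan step) and in the already-proved lower bound on $\Vert P_{\widehat{L}(n)}\widehat{\bm{q}}^0\Vert$. The only points needing care are bookkeeping ones: ensuring that, in the Cauchy--Schwarz estimate for $\dot{W}_R$, the $\bm{\sigma}$-side sum is kept restricted to indices with $\lambda_\ell\notin\underline{\Lambda}(n)$ so that it equals $\Vert P^\perp_{\underline{L}(n)}\bm{\sigma}\Vert^2$ rather than $\Vert\bm{\sigma}\Vert^2$; and handling the null event $P_{\widehat{L}(n)}\widehat{\bm{q}}^0=\bm 0$ on which $\bm{\sigma}$ is undefined, which has probability tending to $0$ and on which $\bm{\sigma}$ may be set arbitrarily.
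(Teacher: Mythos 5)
Your argument is correct and is essentially the same as the paper's (which condenses it to one sentence): both hinge on the construction $\bm{\sigma}\in\widehat{L}(n)$ and on \Cref{lem:bound_projection_product}, and your Cauchy--Schwarz and projection bookkeeping is exactly the omitted detail. Your explicit chain $|\dot{S}_R|,|\dot{W}_R|\le \Vert P^\perp_{\underline{L}(n)}\bm{\sigma}\Vert\le \Vert P^\perp_{\underline{L}(n)}P_{\widehat{L}(n)}\Vert/\Vert P_{\widehat{L}(n)}\widehat{\bm{q}}^0\Vert$ is precisely what makes the paper's terse proof rigorous.
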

    \begin{proof} This follows from \Cref{lem:bound_projection_product} and the fact that by construction, $\bm{\sigma} \in \widehat{L}(n)$. \end{proof}    
    
     \begin{lem}  As $n\to \infty$,   
    $$\dot{W}_M - \dot{S}_M \xrightarrow{p} 0.$$ \label{lem:WM_lower_bound} 
    \end{lem}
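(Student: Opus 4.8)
The plan is to reduce the claim to an elementary deterministic inequality whose only probabilistic input, a lower bound on $\Vert P_{\widehat{L}(n)} \widehat{\bm{q}}^0 \Vert$, has already been obtained in the proof of \Cref{lem:control_E_dot}. First I would subtract the ``main'' part of the expenditure decomposition \eqref{eq:expenditure_split} from the ``main'' part of the welfare decomposition \eqref{eq:welfare_split} term by term, using the identity $\tfrac{|\lambda_\ell|}{1+|\lambda_\ell|} - 1 = -\tfrac{1}{1+|\lambda_\ell|}$, to obtain
\[
\dot{W}_M - \dot{S}_M = -\sum_{\lambda_\ell \in \underline{\Lambda}(n)} (\bm{u}^\ell \cdot \bm{q}^0)\,(\bm{u}^\ell \cdot \bm{\sigma})\,\frac{1}{1+|\lambda_\ell|}.
\]
The whole point is that every eigenvalue appearing here is large: $\lambda_\ell \in \underline{\Lambda}(n)$ means $|\lambda_\ell| \geq \underline{M}(n)$, so $\tfrac{1}{1+|\lambda_\ell|} \leq \tfrac{1}{\underline{M}(n)}$ uniformly over the sum.

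Second, I would apply the triangle inequality followed by Cauchy--Schwarz, using that $\{\bm{u}^\ell : \lambda_\ell \in \underline{\Lambda}(n)\}$ is an orthonormal basis of $\underline{L}(n)$, which turns the coordinate sums into projected norms:
\[
\bigl| \dot{W}_M - \dot{S}_M \bigr| \;\leq\; \frac{1}{\underline{M}(n)} \sum_{\lambda_\ell \in \underline{\Lambda}(n)} \bigl|\bm{u}^\ell \cdot \bm{q}^0\bigr|\,\bigl|\bm{u}^\ell \cdot \bm{\sigma}\bigr| \;\leq\; \frac{\Vert P_{\underline{L}(n)} \bm{q}^0 \Vert \, \Vert P_{\underline{L}(n)} \bm{\sigma} \Vert}{\underline{M}(n)} \;\leq\; \frac{\Vert \bm{q}^0 \Vert \, \Vert \bm{\sigma} \Vert}{\underline{M}(n)}.
\]
Here $\Vert \bm{q}^0 \Vert \leq 1$ by \Cref{as:market_size}, and by construction of the intervention $\Vert \bm{\sigma} \Vert = 1 / \Vert P_{\widehat{L}(n)} \widehat{\bm{q}}^0 \Vert$.

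Third, I would invoke the bound already derived in the proof of \Cref{lem:control_E_dot}: with probability tending to $1$, $\Vert P_{\widehat{L}(n)} \widehat{\bm{q}}^0 \Vert \geq \delta$, and hence $\Vert \bm{\sigma} \Vert \leq 1/\delta$ on that event. Since $\underline{M}(n) \to \infty$, the displayed upper bound then converges to $0$ in probability, which is the claim. (If $\underline{\Lambda}(n)$ were empty the statement would be trivial; it is in fact nonempty since it contains $\Lambda(n)$, which is nonempty by the recoverable structure assumption.)

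I expect the only delicate point to be the lower bound $\Vert P_{\widehat{L}(n)} \widehat{\bm{q}}^0 \Vert \geq \delta$ used in the last step — the sole genuinely statistical ingredient. It combines the Davis--Kahan consequence that $L(n) \subseteq \widehat{L}(n)$ with high probability, so that $\Vert P_{\widehat{L}(n)} \bm{q}^0 \Vert \geq \Vert P_{L(n)} \bm{q}^0 \Vert \geq \delta$ by the definition of recoverable structure, together with $\Vert P_{\widehat{L}(n)} \bm{\varepsilon} \Vert \to_p 0$ (via \Cref{as:generalized} and the dimension bound $\dim \widehat{L}(n)/n \to 0$ from the trace argument). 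Since this has already been carried out in \Cref{lem:control_E_dot}, the present lemma requires no new probabilistic work: the remaining content is the algebraic identity and the two elementary norm bounds above.
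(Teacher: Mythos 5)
Your proof is correct and follows the same route as the paper: the $\tfrac{|\lambda_\ell|}{1+|\lambda_\ell|}-1 = -\tfrac{1}{1+|\lambda_\ell|}$ identity, the uniform bound by $1/\underline{M}(n)$, Cauchy--Schwarz to produce $\Vert P_{\underline{L}(n)}\bm{q}^0\Vert\,\Vert P_{\underline{L}(n)}\bm{\sigma}\Vert$, and $\underline{M}(n)\to\infty$. One point in your favor: the paper's own proof ends after invoking $\Vert\bm{q}^0\Vert\leq 1$ and does not explicitly bound $\Vert P_{\underline{L}(n)}\bm{\sigma}\Vert$; you correctly close this gap by noting $\Vert\bm{\sigma}\Vert = 1/\Vert P_{\widehat{L}(n)}\widehat{\bm{q}}^0\Vert$ and that the denominator is bounded below with high probability (via recoverable structure plus $\Vert P_{\widehat{L}(n)}\bm{\varepsilon}\Vert\to_p 0$). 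One small caveat, which you inherit from the paper rather than introduce: the clause ``$L(n)\subseteq\widehat{L}(n)$'' is not literally true, since they are eigenspaces of different matrices; the intended content (delivered by Davis--Kahan through \Cref{lem:bound_projection_product}) is that $\Vert P_{\widehat{L}(n)}^\perp P_{L(n)}\Vert$ is small, so $\Vert P_{\widehat{L}(n)}\bm{q}^0\Vert \geq \Vert P_{L(n)}\bm{q}^0\Vert - o_p(1) \geq \delta - o_p(1)$, which is all you need.
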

    \begin{proof} Using the expressions above, write the difference 
     $$ | \dot{W}_M - \dot{S}_M | \leq {\sum_{\lambda_\ell \in \underline{\Lambda}(n)} \frac{1}{1+|\lambda_\ell|} \left|(\bm{u}^\ell \cdot \bm{q}^0)(\bm{u}^\ell \cdot \bm{\sigma})\right|}.$$  
     Note that for all $\ell \in \underline{\Lambda}(n)$, $|\lambda_\ell| \geq \underline{M}(n)$, so $\frac{1}{1+|\lambda_\ell|} \leq \frac{1}{\underline{M}(n)}$.  Therefore,    
    $$
       | \dot{W}_M - \dot{S}_M | \leq \frac{1}{\underline{M}(n)} \sum_{\lambda_\ell \in \underline{\Lambda}(n)} |(\bm{u}^\ell \cdot \bm{q}^0)(\bm{u}^\ell \cdot \bm{\sigma})|.  $$
      By Cauchy--Schwarz for each $\ell$ and then summing, we have
    $$
       | \dot{W}_M - \dot{S}_M | \leq \frac{1}{\underline{M}(n)} \sum_{\lambda_\ell \in \underline{\Lambda}(n)} \Vert P_{\bm{u}^\ell} \bm{q}^0 \Vert \Vert P_{\bm{u}^\ell} \bm{\sigma} \Vert.$$
      $$ | \dot{W}_M - \dot{S}_M | \leq \frac{1}{\underline{M}(n)} \Vert P_{\underline{L}(n)} \bm{q}^0 \Vert \Vert P_{\underline{L}(n)} \bm{\sigma} \Vert.$$
      Since $\Vert P_{\underline{L}(n)} \bm{q}^0 \Vert \leq \Vert \bm{q}^0 \Vert = O(1)$ by \Cref{as:normalized_q}, and $\Vert P_{\underline{L}(n)} \bm{\sigma} \Vert \leq \Vert \bm{\sigma} \Vert = O_{\mathrm{p}}(1)$ by the denominator lower bound in the proof of \Cref{lem:control_E_dot}, the result follows.  \end{proof}

We can put everything together.  \Cref{lem:control_E_dot} gives that $\dot{S} \xrightarrow{p} 1$. Combining this with (\ref{eq:expenditure_split}) and \Cref{lem:small_remainders} (which says that $\dot{S}_{R} \xrightarrow{p} 0$) we see that $\dot{S}_M \xrightarrow{p} 1$. Then using \Cref{lem:WM_lower_bound}, we find that $\dot{W}_M \xrightarrow{p} 1$. Another application of \Cref{lem:small_remainders} gives that $\dot{W}_{R} \xrightarrow{p} 0$, so that $\dot{W} \xrightarrow{p} 1$. The claim about the effect on $\dot{C}$ follows immediately from \Cref{prop:Welfare CI}.

Finally, we consider the effect on individual consumer surpluses. For any consumer $h$,
    \begin{equation}
     \dot{C}^h = -\bm{q}^h \cdot \dot{\bm{p}} = -\sum_{\ell=1}^n (\bm{u}^\ell \cdot \bm{q}^h)(\bm{u}^\ell \cdot \dot{\bm{p}})  
    \end{equation}
    
     Using \Cref{prop:eig}:  
    \begin{equation}
     \bm{u}^\ell \cdot \dot{\bm{p}} = -\frac{1}{1+|\lambda_\ell|}(\bm{u}^\ell \cdot \bm{\sigma})  
    \end{equation}
    Then by an argument very similar to the proof of \Cref{lem:small_remainders}, we conclude that $\dot{C}^h \xrightarrow{p} 0$.

\subsection{Expectation guarantees}
\label{app:expectations} Here we upgrade our in-probability guarantees on the performance of the intervention to expectation guarantees.

\medskip

\paragraph{A bounded modification of the intervention rule}

The intervention in the proof of \Cref{Th:Main} is (up to scaling to hit spending $s$)
of the form
$$
\bm{\sigma}
=
P_{\widehat{L}(n)}\widehat{\bm{q}}^0/\|P_{\widehat{L}(n)}\widehat{\bm{q}}^0\|^2,
$$
which can be large on realizations where $\|P_{\widehat{L}(n)}\widehat{\bm{q}}^0\|$ is very small.
Under significant structure, these realizations are asymptotically negligible uniformly over states.
For expectation statements, it is convenient to enforce boundedness deterministically.

Fix the target spending level $s>0$, and let $\delta>0$ be the projection lower bound
in condition (3) of \Cref{Def:SS} after the normalization $\|\bm{q}^0\|=1$
from \Cref{as:normalized_q}. Define the truncated rule
\[
\bm{R}^{\mathrm{tr}}(\widehat{\bm{\theta}})
=
s\cdot
\frac{P_{\widehat{L}(n)}\widehat{\bm{q}}^0}
{\max\left\{\|P_{\widehat{L}(n)}\widehat{\bm{q}}^0\|^2,\;(\delta/4)^2\right\}}.
\]
Then for all signal realizations,
\begin{equation}
\label{eq:sigma_uniform_bound_short}
\|\bm{R}^{\mathrm{tr}}(\widehat{\bm{\theta}})\|\le \frac{4s}{\delta}.
\end{equation}
Moreover, by the same argument used in the proof of \Cref{Th:Main}, we have
$\|P_{\widehat{L}(n)}\widehat{\bm{q}}^0\|\ge \delta/2$ with probability $1-o(1)$
uniformly over $\bm{\theta}$, so the truncation event
$\{\|P_{\widehat{L}(n)}\widehat{\bm{q}}^0\|<\delta/4\}$ has probability $o(1)$
uniformly over states. Therefore, $\bm{R}^{\mathrm{tr}}$ coincides with the original rule
except on a vanishing-probability event, so replacing the rule by $\bm{R}^{\mathrm{tr}}$
does not change any $\epsilon$--robustness conclusions.

\medskip

\paragraph{Uniform boundedness of the welfare objects}

Let $\bm{\sigma}=\bm{R}^{\mathrm{tr}}(\widehat{\bm{\theta}})$. Since $\|\bm{q}^0\|=1$,
\[
|\dot S_{\bm{\sigma}}| = |\bm{q}^0\cdot \bm{\sigma}|\le \|\bm{\sigma}\|.
\]
By \Cref{prop:eig}, the pass-through multipliers that map $\bm{\sigma}$ into first-order
price changes have absolute values in $[0,1]$ in every eigenvector; equivalently, there is a uniform
operator-norm bound $\|\dot{\bm{p}}_{\bm{\sigma}}\|\le \|\bm{\sigma}\|$. Hence
$
|\dot C_{\bm{\sigma}}|
=|\bm{q}^0\cdot \dot{\bm{p}}_{\bm{\sigma}}|
\le \|\dot{\bm{p}}_{\bm{\sigma}}\|
\le \|\bm{\sigma}\|
$.
Using $\dot W_{\bm{\sigma}}=\dot P_{\bm{\sigma}}+\dot C_{\bm{\sigma}}-\dot S_{\bm{\sigma}}$
and the identity $\dot P_{\bm{\sigma}}=2(\dot S_{\bm{\sigma}}-\dot C_{\bm{\sigma}})$ from (\ref{eq:Pareto_Identity}) 
there is a constant $K$  such that
\[
|\dot S_{\bm{\sigma}}|,\;|\dot C_{\bm{\sigma}}|,\;|\dot P_{\bm{\sigma}}|,\;|\dot W_{\bm{\sigma}}|
\le
K\|\bm{\sigma}\|
\le
K\cdot \frac{4s}{\delta}
\qquad\text{almost surely, uniformly over $\bm{\theta}$.}
\]
Thus each welfare error term appearing in \Cref{Th:Main} is uniformly bounded under $\bm{R}^{\mathrm{tr}}$.

\medskip

\paragraph{Conclusion: High-probability guarantees imply expectation guarantees}

\begin{prop}[Expectation upgrade]
\label{prop:expected_performance_short}
Maintain the assumptions of \Cref{Th:Main} and fix spending $s>0$. Replace the intervention rule
in the proof of \Cref{Th:Main} by the bounded modification $\bm{R}^{\mathrm{tr}}$ above, and let
$\bm{\sigma}=\bm{R}^{\mathrm{tr}}(\widehat{\bm{\theta}})$.
Then 
uniformly over $\bm{\theta}\in\bm{\Theta}(n)$,
\[
\mathbb{E}\big[|\dot S_{\bm{\sigma}}-s|\big]\to 0,
\qquad
\mathbb{E}\big[|\dot C_{\bm{\sigma}}|\big]\to 0,
\qquad
\mathbb{E}\big[|\dot P_{\bm{\sigma}}-2s|\big]\to 0,
\qquad
\mathbb{E}\big[|\dot W_{\bm{\sigma}}-s|\big]\to 0.
\]
\end{prop}

\begin{proof}
As noted above, $\bm{R}^{\mathrm{tr}}$ differs from the original rule only on a vanishing-probability event
uniformly over $\bm{\theta}$, so the same high-probability statements remain valid.
For each displayed term, let $X_n(\bm{\theta})$ denote the corresponding error random variable (e.g.\ $\dot S_{\bm{\sigma}}-s$).
The proof of \Cref{Th:Main} gives $X_n=o_{\mathrm{p},\bm{\Theta}}(1)$, and the previous subsection gives a uniform almost-sure bound
$|X_n(\bm{\theta})|\le M$ for some finite $M$ independent of $n$ and $\bm{\theta}$.
For such random variables, the $L^1$ convergence of $X_n$ follows from the corresponding in-probability convergence statement proved in \Cref{Th:Main}.
\end{proof}

\medskip

Because all conclusions are uniform over $\bm{\theta}$, they also hold under any prior over states by iterated expectations.

\section{Proof of \Cref{Prop:LackOfAS}}\label{app:lackofas_proof}

All arguments in this proof are in normalized coordinates. The constructions below have $D_{ii}=-1$, so underlined and unnormalized variables coincide. We use Hadamard constructions to hide, respectively, the welfare-relevant demand direction and the consumer-surplus-relevant quantity direction.

We use a common Hadamard setup. Work along the subsequence $n=2^m$, and let $\{\bm{u}^1,\ldots,\bm{u}^n\}$ be an orthonormal Hadamard basis with $\bm{u}^n=\bm{1}/\sqrt{n}$. Thus $(u_i^\ell)^2=1/n$ for every $i$ and $\ell$. Whenever eigenvalues $\lambda_1,\ldots,\lambda_n$ satisfy $\sum_{\ell=1}^n\lambda_\ell=-n$, the matrix
\[
\bm{D}=\sum_{\ell=1}^n \lambda_\ell\bm{u}^\ell(\bm{u}^\ell)^\tr
\]
has diagonal entries equal to $-1$. If all eigenvalues are nonpositive, then $\bm{D}$ is negative semidefinite. The bounded-own-price and recoverable-diagonal assumptions then hold immediately.

\subsection{Part 1: No significant structure}

The quantity vector in this construction is common across states and is observed without error. Because the inequalities in \Cref{Prop:LackOfAS} are homogeneous in the intervention, any signal-measurable rule with positive expenditure can be rescaled, using the observed $\bm{q}^0$, to satisfy
\[
\dot{S}_{\bm{\sigma}}=\bm{q}^0\cdot \bm{\sigma}=1.
\]
Under this normalization, part~(1) requires $\dot{W}_{\bm{\sigma}}\ge \varepsilon$, because $\dot{C}+\dot{P}=\dot{S}+\dot{W}$.

Let $z_n=\lfloor n^{1/4}\rfloor$ and define the common quantity vector
\[
\bm{q}^0=2z_n\bm{u}^n+\sum_{k=1}^{z_n}\bm{u}^k .
\]
This vector is strictly positive for all large $n$, and its only nonzero Hadamard coordinates are $\bm{u}^n\cdot\bm{q}^0=2z_n$ and $\bm{u}^k\cdot\bm{q}^0=1$ for $k\le z_n$.
The hidden state will identify one of these $z_n$ quantity-loaded Hadamard directions as the high-pass-through direction. Because the signal does not reveal this index, any signal-measurable rule must choose the same intervention without knowing which of the $z_n$ directions is valuable.

For each hidden state $j\in\{1,\ldots,z_n\}$, set
\[
\tau_n:=
\frac{1}{z_n}
\left(
\frac{n^{1/2}}{1+n^{1/2}}
+(z_n-1)\frac{n^{-2}}{1+n^{-2}}
\right),
\]
so $\tau_n$ is the average welfare pass-through weight across the hidden directions in a fixed state: one direction has pass-through close to $1$, and the other $z_n-1$ directions have pass-through close to $0$. Thus $\tau_n\le 1/z_n+n^{-2}\to0$.
Let the eigenvalues of $\bm{D}^{(j)}$ be
\[
\lambda_j^{(j)}=-n^{1/2},\qquad
\lambda_k^{(j)}=-n^{-2}\quad\text{for } k\le z_n,\ k\ne j,
\]
\[
\lambda_n^{(j)}=-\frac{\tau_n}{1-\tau_n},
\]
and, for all $\ell\notin\{1,\ldots,z_n,n\}$,
\[
\lambda_\ell^{(j)}
=
\frac{-n+n^{1/2}+(z_n-1)n^{-2}+\tau_n/(1-\tau_n)}
{n-z_n-1}.
\]
The eigenvalue in the common direction $\bm{u}^n$ is chosen so that its pass-through weight is also $\tau_n$, since $|\lambda_n^{(j)}|/(1+|\lambda_n^{(j)}|)=\tau_n$.

The last value makes the trace equal to $-n$ and converges to $-1$. Hence all eigenvalues are negative for large $n$.

The signal is deterministic and common across states:
\[
\widehat{\bm{D}}=-\bm{I},\qquad \widehat{\bm{q}}^0=\bm{q}^0.
\]
Thus a rule chooses the same normalized intervention $\bm{\sigma}$ in every state. Since $\bm{q}^0$ is orthogonal to all remaining Hadamard directions, the normalization gives
\[
(\bm{u}^n\cdot\bm{q}^0)(\bm{u}^n\cdot\bm{\sigma})
+\sum_{k=1}^{z_n}(\bm{u}^k\cdot\bm{q}^0)(\bm{u}^k\cdot\bm{\sigma})
=1.
\]
By \Cref{eq:intuition1}, in state $j$,
\[
\dot{W}_{\bm{\sigma}}^{(j)}
=
\frac{n^{1/2}}{1+n^{1/2}}
(\bm{u}^j\cdot\bm{q}^0)(\bm{u}^j\cdot\bm{\sigma})
+\frac{n^{-2}}{1+n^{-2}}
\sum_{\substack{k=1\\ k\ne j}}^{z_n}
(\bm{u}^k\cdot\bm{q}^0)(\bm{u}^k\cdot\bm{\sigma})
+\tau_n(\bm{u}^n\cdot\bm{q}^0)(\bm{u}^n\cdot\bm{\sigma}) .
\]
Averaging over $j=1,\ldots,z_n$ and using the definition of $\tau_n$ yields
\[
\frac{1}{z_n}\sum_{j=1}^{z_n}\dot{W}_{\bm{\sigma}}^{(j)}
=
\tau_n
\left[
\sum_{k=1}^{z_n}(\bm{u}^k\cdot\bm{q}^0)(\bm{u}^k\cdot\bm{\sigma})
+(\bm{u}^n\cdot\bm{q}^0)(\bm{u}^n\cdot\bm{\sigma})
\right]
=
\tau_n .
\]
Therefore some state $j$ satisfies $\dot{W}_{\bm{\sigma}}^{(j)}\le \tau_n$. Since $\tau_n\to0$, for every fixed $\varepsilon\in(0,1/2)$ and all sufficiently large $n$, the normalized target $\dot{W}_{\bm{\sigma}}\ge\varepsilon$ fails conditional on that state (with probability $1$, since the signal is deterministic).

It remains only to verify that this environment lacks significant structure. In state $j$, the observation error $\bm{E}^{(j)}=-\bm{I}-\bm{D}^{(j)}$ has operator norm of order $n^{1/2}$. If a candidate threshold $b(n)\to\infty$ is not asymptotically larger than $n^{1/2}$, condition~(2) of \Cref{Def:SS} fails. If $b(n)/n^{1/2}\to\infty$, then the large-eigenvalue space is eventually empty, so condition~(3) fails. Hence significant structure does not hold.
\subsection{Part 2: Significant structure present}

We now construct an environment with significant structure in which the large-eigenvalue direction is perfectly recoverable, but the quantity signal does not reveal a small component of quantities that is relevant for consumer surplus.

Let $\alpha_n=n^{-1}$. Work with the same Hadamard basis as above. Set
\[
\eta_n:=n^{-2},\qquad
\lambda_\ell=-\eta_n\quad(\ell=1,\ldots,n-1),
\qquad
\lambda_n=-n+(n-1)\eta_n,
\]
and define
\[
\bm{D}
:=
\sum_{\ell=1}^n \lambda_\ell\bm{u}^\ell(\bm{u}^\ell)^\tr .
\]
The eigenvalues sum to $-n$, so the diagonal entries of $\bm{D}$ are $-1$. All eigenvalues are negative. Thus $\bm{D}$ is negative semidefinite, the bounded-own-price assumption holds, and the diagonal is exactly recoverable.

The state space consists of the quantity vectors
\[
\bm{q}^{j,s}:=\bm{u}^n+s\alpha_n\bm{u}^j,
\qquad
j\in\{1,\ldots,n-1\},\quad s\in\{-1,1\}.
\]
Because $\bm{u}^n=\bm{1}/\sqrt{n}$ and every coordinate of $\bm{u}^j$ is $\pm1/\sqrt{n}$, all quantities are strictly positive:
\[
q_i^{j,s}\in\{(1-\alpha_n)/\sqrt{n},(1+\alpha_n)/\sqrt{n}\}.
\]
The signal about $\bm{D}$ is exact, $\widehat{\bm{D}}=\bm{D}$. The quantity signal is
\[
\widehat{\bm{q}}^0=\bm{q}^{j,s}+\bm{\varepsilon},
\]
where, conditional on the state $(j,s)$, the coordinates $\varepsilon_i$ are i.i.d.\ $N(0,1/n)$ and are independent of the matrix observation noise. Moreover,
\[
\|\bm{q}^{j,s}\|=\sqrt{1+\alpha_n^2},
\qquad
\|\bm{\varepsilon}\|^2=(1/n)\chi_n^2.
\]
Hence $\|\bm{\varepsilon}\|=O_{\mathrm{p},\bm{\Theta}}(\|\bm{q}^0\|)$.

The environment has significant structure. Take $b(n)=n/2$. Since $\lambda_n=-n+o(1)$ and all other eigenvalues have absolute value $\eta_n=o(1)$, the large-eigenvalue space is
\[
\mathcal{L}(\bm{D},b(n))=\operatorname{span}\{\bm{u}^n\}
\]
for all large $n$. The matrix observation noise is zero, so condition~(2) of \Cref{Def:SS} holds. Finally,
\[
\big\|P_{\mathcal{L}(\bm{D},b(n))}\bm{q}^{j,s}\big\|
=
|\bm{u}^n\cdot\bm{q}^{j,s}|
=1
\ge
\frac{1}{\sqrt{1+\alpha_n^2}}\|\bm{q}^{j,s}\|,
\]
so condition~(3) holds with any fixed $\delta<1$ for all large $n$.

It remains to show that a fixed positive consumer-surplus incidence per dollar of expenditure cannot be robustly guaranteed. The intuition is that consumer surplus can be created only through a hidden direction $\bm{u}^j$, whose sign the quantity signal cannot resolve. We make this precise in two steps: at any realized signal, whatever intervention the rule selects, the hidden direction can take the sign that opposes it, producing a state in which the rule fails---generating either no positive expenditure or a consumer-surplus incidence per dollar below the target level. The two such candidate states differ only in this sign and are statistically almost indistinguishable, so the rule fails with probability bounded away from zero.

Fix any $\kappa\in(0,1/2)$.\footnote{This proves the result for every $\kappa>0$: if $\kappa\ge 1/2$, failure to robustly achieve the inequality for any fixed $\kappa'\in(0,1/2)$ implies failure to robustly achieve the stronger inequality for $\kappa$.} Say a rule \emph{fails} in a state if its expenditure there is nonpositive, or is positive with $\dot{C}<\kappa\dot{S}$. Consider an arbitrary rule and a realized signal, and let $\bm{\sigma}$ be the selected intervention, with components
\[
a:=\bm{u}^n\cdot\bm{\sigma}
\qquad\text{and}\qquad
b_j:=\bm{u}^j\cdot\bm{\sigma}
\]
along the recoverable direction $\bm{u}^n$ and a hidden direction $\bm{u}^j$. Since $\bm{q}^{j,s}$ is orthogonal to every other eigendirection, \Cref{prop:eig,lem:welfare-pass} give
\[
\dot{S}_{\bm{\sigma}}^{j,s}
=
a+s\alpha_n b_j,
\qquad
\dot{C}_{\bm{\sigma}}^{j,s}
=
\frac{a}{1+|\lambda_n|}
+
\frac{s\alpha_n b_j}{1+\eta_n}.
\]
Fix $j\le n-1$ and pick $s$ with $s b_j\le0$, so the hidden direction opposes the intervention: $s\alpha_n b_j=-\alpha_n|b_j|$. Then $\dot{S}_{\bm{\sigma}}^{j,s}=a-\alpha_n|b_j|$, and since $\alpha_n|b_j|\ge0$ and $(1+\eta_n)^{-1}\ge(1+|\lambda_n|)^{-1}$,
\[
\dot{C}_{\bm{\sigma}}^{j,s}
=
\frac{a}{1+|\lambda_n|}
-
\frac{\alpha_n|b_j|}{1+\eta_n}
\le
\frac{a-\alpha_n|b_j|}{1+|\lambda_n|}
=
\frac{\dot{S}_{\bm{\sigma}}^{j,s}}{1+|\lambda_n|}.
\]
If $\dot{S}_{\bm{\sigma}}^{j,s}\le0$, the rule delivers no positive expenditure; if $\dot{S}_{\bm{\sigma}}^{j,s}>0$, then $\dot{C}_{\bm{\sigma}}^{j,s}/\dot{S}_{\bm{\sigma}}^{j,s}\le1/(1+|\lambda_n|)<\kappa$ for all large $n$, since $|\lambda_n|\to\infty$. Either way the rule fails in state $(j,s)$. Writing $F_{j,s}$ for the set of signals at which the rule fails in state $(j,s)$, we have shown that
\[
F_{j,1}\cup F_{j,-1}=\widehat{\bm{\Theta}}
\qquad\text{for all large } n.
\]

We now translate this pointwise ambiguity into the i.i.d.\ signal experiment. In state $(j,s)$ the quantity signal is distributed as $P_{j,s}=N(\bm{u}^n+s\alpha_n\bm{u}^j,\,n^{-1}\bm{I})$, so
\[
D_{\mathrm{KL}}(P_{j,1}\,\|\,P_{j,-1})
=
\frac{1}{2}(2\alpha_n\bm{u}^j)^\tr(n\bm{I})(2\alpha_n\bm{u}^j)
=
2n\alpha_n^2
=
\frac{2}{n},
\]
and Pinsker's inequality gives $\|P_{j,1}-P_{j,-1}\|_{\mathrm{TV}}\le n^{-1/2}$. Because $F_{j,1}\cup F_{j,-1}=\widehat{\bm{\Theta}}$,
\[
P_{j,1}(F_{j,1})+P_{j,-1}(F_{j,-1})
\ge
1-\|P_{j,1}-P_{j,-1}\|_{\mathrm{TV}}
\ge
1-n^{-1/2}.
\]
Hence some state in the pair $\{(j,1),(j,-1)\}$ has failure probability at least $(1-n^{-1/2})/2$, which is bounded away from zero. Therefore, no intervention rule can robustly guarantee any fixed positive consumer-surplus incidence per dollar of expenditure.

\subsection{Expected-performance implication}\label{app:lackofas_expected_performance}

We record the expected-performance statement used in the remark following \Cref{Prop:LackOfAS}. Consider the environment from part~(1). The signal is deterministic and common across the hidden states. Thus any rule chooses the same intervention $\bm{\sigma}$ in every hidden state. If the rule selects positive expenditure, the homogeneity of the criterion lets us normalize $\dot{S}_{\bm{\sigma}}=\bm{q}^0\cdot\bm{\sigma}=1$. The averaging calculation above gives
\[
\frac{1}{z_n}\sum_{j=1}^{z_n}\dot{W}_{\bm{\sigma}}^{(j)}
=\tau_n,
\qquad \tau_n\to0.
\]
Therefore some state $j$ satisfies $\dot{W}_{\bm{\sigma}}^{(j)}\leq\tau_n$. Since the signal is deterministic in this state,
\[
\mathbb{E}_{\varphi_{\bm{\theta}^{(j)}}(n)}
\left[
\dot{C}_{\bm R(\widehat{\bm{\theta}})}
+\dot{P}_{\bm R(\widehat{\bm{\theta}})}
-\dot{S}_{\bm R(\widehat{\bm{\theta}})}
\right]
=
\dot{W}_{\bm{\sigma}}^{(j)}
\leq \tau_n.
\]
Since $\dot{W}=\dot{C}+\dot{P}-\dot{S}$, it follows that
\[
\inf_{\bm{\theta}\in\bm{\Theta}(n)}
\mathbb{E}_{\varphi_{\bm{\theta}}(n)}
\left[
\dot{C}
+\dot{P}
-\dot{S}
\right]
\leq \tau_n.
\]
For every fixed $\varepsilon>0$, $\tau_n\leq\varepsilon$ for all sufficiently large $n$, which proves the displayed claim in the remark.
\newpage

\section{Hadamard basis and simulation construction}\label{app:hadamard_sim}

This appendix formalizes the Hadamard-basis construction used in the Monte Carlo
illustration in \Cref{Sec:Illustration}. The goal is to define an orthonormal
basis with entries $\pm 1/\sqrt{n}$, build a Slutsky matrix with a single
``spike'' eigenvalue, and obtain a transparent block structure once goods are
ordered by the spike eigenvector. This makes it easy to separate signal
strength from noise and to interpret the welfare incidence plots.

\paragraph{Hadamard basis.}
We work along the subsequence $n=2^m$ so that an $n\times n$ Hadamard matrix
exists. Define the Sylvester recursion
\[
\bm{H}_1=\begin{pmatrix}1\end{pmatrix},
\qquad
\bm{H}_{2n}=
\begin{pmatrix}
\bm{H}_n & \bm{H}_n\\
\bm{H}_n & -\bm{H}_n
\end{pmatrix}.
\]
Then $\bm{H}_n \bm{H}_n^\top = n\,\bm{I}$ and the normalized columns
$\bm{u}^\ell=\bm{H}_n[:,\ell]/\sqrt{n}$ form an orthonormal basis
$\{\bm{u}^1,\ldots,\bm{u}^n\}$ with entries $\pm 1/\sqrt{n}$.
We index so that the uniform vector is $\bm{u}^n=\bm{1}/\sqrt{n}$.

\paragraph{Eigenvalue spectrum and demand matrix.}
To match the construction in the main text, we take the spike direction to be
$\bm{u}^1$ and choose parameters
$\lambda>0$ and $\varepsilon>0$. Define eigenvalues
\[
\mu_n=-\varepsilon,\qquad
\mu_1=-(1+\lambda),\qquad
\mu_{\text{bulk}}=\frac{-n+\varepsilon+(1+\lambda)}{n-2},
\]
and set $\mu_\ell=\mu_{\text{bulk}}$ for all $\ell\notin\{1,n\}$.
This guarantees $\sum_{\ell=1}^n \mu_\ell=-n$, so the diagonal entries of
\[
\underline{\bm{D}}=\sum_{\ell=1}^n \mu_\ell\,\bm{u}^\ell(\bm{u}^\ell)^\top
\]
equal $-1$ exactly (since $(u_i^\ell)^2=1/n$ for all $i,\ell$), and
$\underline{\bm{D}}$ is negative semidefinite. The spike eigenvalue
$\mu_1=-(1+\lambda)$ controls signal strength; the aggregate eigenvalue
$\mu_n=-\varepsilon$ is near zero; and the bulk compensates to satisfy the
trace condition.

\paragraph{Status quo quantities.}
Let $v^1=\sqrt{n}\,\bm{u}^1\in\{\pm1\}^n$ and define
\[
\underline{\bm{q}}^0=\bm{1}+\eta\,v^1
=\sqrt{n}\,(\bm{u}^n+\eta\,\bm{u}^1),
\]
so $q_i^0\in[1-\eta,1+\eta]$ and $\underline{\bm{q}}^0$ has a small but
nonzero projection on the spike direction. This choice isolates the
spike's contribution to welfare pass-through while keeping quantities
strictly positive.

\paragraph{Block structure interpretation.}
Ordering goods by the sign of $u_i^1$ reveals the two-block structure
 of the outer product $\bm{u}^1(\bm{u}^1)^\top$:  
within-block entries are positive and across-block entries are negative.
  Multiplying by $\mu_1<0$ reverses those signs in the rank-one term  
 $\mu_1\bm{u}^1(\bm{u}^1)^\top$, so within-block entries  
 are negative and across-block entries are positive.  
  In other words, the Hadamard-basis construction is a two-block model up to a  
 permutation of goods; sorting by the sign of $u^1$ makes the outer  
 product exactly a $2\times2$ block matrix with constant entries $\pm 1/n$.  
  When the recovered eigenvector $\widehat{\bm{u}}$ aligns with $\bm{u}^1$,  
the estimated outer product $\widehat{\bm{u}}\widehat{\bm{u}}^\top$ displays
the same block pattern; below the recovery threshold it appears scrambled.

\paragraph{Signal and intervention.}
The authority observes $\widehat{\underline{\bm{D}}}=\underline{\bm{D}}+\underline{\bm{E}}$
with $\underline{\bm{E}}=\bm{E}$ a Wigner matrix (zero diagonal, i.i.d.\ $\mathcal{N}(0,1)$
off-diagonal). Let $\widehat{\bm{u}}$ be the leading eigenvector of
$-\widehat{\underline{\bm{D}}}$ and orient it so that
 $\widehat{\bm{u}}\cdot\underline{\bm{q}}^0>0$. The intervention  
\[
\underline{\bm{\sigma}}=s\cdot\frac{\widehat{\bm{u}}}{\widehat{\bm{u}}\cdot\underline{\bm{q}}^0}
\]
spends $s$ and yields welfare and incidence outcomes computed via the
 eigen-decomposition formula in \Cref{eq:intuition1}.  In this Wigner-noise  
 simulation, the key threshold is the ratio $\lambda/\sqrt{n}$, which governs  
 eigenvector recovery and therefore the transition in the welfare-incidence  
 panels.  
 
\section{Sampling errors: Theoretical foundations \\ and empirical tests}
\label{ap:sampling}

\subsection{An explicit sampling procedure}
The definition of significant structure (\Cref{Def:SS}), as well as the recoverable diagonal assumption (\Cref{as:error_assumption}), impose conditions on the distribution of the error matrix $\underline{\bm{E}}$. We present a sampling procedure and an associated estimator for the normalized demand matrix $\underline{\bm{D}}$ that satisfies these conditions. As in \Cref{sec:norm}, we do not assume any normalization to begin with (since the authority cannot assume the market is already normalized.)

We assume that all households share a single representative utility function for goods and that the number of households exceeds $n^2$, where $n$ is the number of firms. Moreover, we assume that, for all $i$ and $j$, the entries $D_{ij}$ are uniformly bounded, i.e., $|D_{ij}| \leq A <\infty$ for some positive constant $A$. Also recall that part (a) of \Cref{Ass:Asymptotics} requires that the true diagonal entries satisfy uniform bounds: there exist universal constants $0<d_{\min}\le d_{\max}<\infty$ such that $-D_{ii} \in [d_{\min},d_{\max}]$ for all $i$. The constants $A$, $d_{\min}$, and $d_{\max}$ do not depend on $i,j$, or $n$.

For each product pair $(i,j)$, the authority samples a distinct household---call it $h(i,j)$---with the representative preferences; it performs a demand experiment to obtain unbiased estimates of the relevant entries. For off-diagonal entries $i\neq j$, we write
\[
\widehat{D}^{h(i,j)}_{ij} = D_{ij} + F_{ij}^{h(i,j)},
\]
where the $F_{ij}^{h(i,j)}$ are mean-zero, independent across unordered pairs $\{i,j\}$, uniformly bounded, and satisfy $\operatorname{Var}(F_{ij}^{h(i,j)})\le\overline{V}$ for some constant $\overline{V}$ that does not depend on $i,j$, or $n$. Dropping the superscript for notational convenience, we define the corresponding pooled off-diagonal estimators by
\[
\widehat{D}_{ij} = D_{ij} + F_{ij},\qquad i\neq j,
\]
where the $F_{ij}$ inherit the same distributional properties.

Turning to estimates of $D_{ii}$, for each ordered pair $(i,j)$ we write
\[
\widehat{D}^{h(i,j)}_{ii} = D_{ii} + G_{ij}^{h(i,j)}.
 \]  
 We assume that the $G_{ij}^{h(i,j)}$ are mean-zero, independent across $(i,j)$, uniformly bounded, and independent of $F_{i'j'}$.  We also assume that the variances of the $G_{ij}^{h(i,j)}$ are bounded by $\overline{V}$, uniformly in $i,j$, and $n$.  Averaging over $j$, we define  
\[
\widehat{D}_{ii} = \frac{1}{n} \sum_{j} \widehat{D}^{h(i,j)}_{ii}
  = D_{ii} + G_{ii}, \qquad   G_{ii} = \frac{1}{n}\sum_j G_{ij}^{h(i,j)}.
  \]  
For notational convenience, we now define the diagonal entries of the matrix $\bm{F}$ by $F_{ii}:=G_{ii}$. Thus we can write
\[
\widehat{\bm{D}} = \bm{D} + \bm{F},
\]
where $\bm{F}$ is symmetric with independent, mean-zero entries on and above the diagonal, uniformly bounded support, and
\[
\operatorname{Var}(F_{ij})\le \overline{V}\quad\text{for all }i,j,
\qquad
\operatorname{Var}(F_{ii}) \le \frac{\overline{V}}{n}.
\]
For $i>j$ we impose symmetry by setting $F_{ij}=F_{ji}$.\footnote{Since $\bm{D}$ is a symmetric matrix, we view the primitive parameters as being the entries $F_{ij}$ on and above the diagonal, which are independent.} It will be convenient below to keep using the notation $G_{ii}$ in those parts of the argument where the averaging over $j$ and the variance bound $\operatorname{Var}(G_{ii})\le \overline{V}/n$ play a direct role.

Next, as in \Cref{sec:norm}, let $\widehat{\bm{\Gamma}}$ be the diagonal matrix whose $(i,i)$ diagonal entry is $\widehat{\Gamma}_{ii}=1/\sqrt{-\widehat{D}_{ii}}$, and construct
\[
\underline{\widehat{\bm{D}}}= \widehat{\bm{\Gamma}} \widehat{\bm{D}} \widehat{\bm{\Gamma}}.
\]
Turning to true values (as opposed to hatted estimators), recall the analogous definition $\Gamma_{ii}=1/\sqrt{-D_{ii}}$ and that
\[
{\underline{\bm{D}}}= {\bm{\Gamma}} {\bm{D}} {\bm{\Gamma}}
\]
is the true normalized Slutsky matrix. Write $ \underline{\bm{E}} = \underline{\widehat{\bm{D}}} - \underline{\bm{D}}$. By independence across $j$ and the bound on $\operatorname{Var}(G_{ij}^{h(i,j)})$, we have
\[
\operatorname{Var}(G_{ii}) \leq \frac{\overline{V}}{n},
\]
a fact we will use throughout.

\begin{lem} 
Under the sampling procedure described, the associated error matrix satisfies
\[
 \|\underline{\bm{E}}\| = O_{\mathrm{p}}(n^{1/2})  
\quad \text{as } n\to\infty.
  \]  
\end{lem}

\begin{proof}
Let $\bm{\Phi}=\widehat{\bm{\Gamma}}-\bm{\Gamma}$. 
We start by rewriting the error matrix $\underline{\bm{E}}$:
\[
\underline{\bm{E}} = \underline{\widehat{\bm{D}}} - \underline{\bm{D}} = \widehat{\bm{\Gamma}}\,\widehat{\bm{D}}\,\widehat{\bm{\Gamma}} - \bm{\Gamma}\,\bm{D}\,\bm{\Gamma}.
\]
Recall from the sampling construction that $\widehat{\bm{D}} = \bm{D} + \bm{F}$, where $\bm{F}$ is symmetric with independent, mean-zero entries on and above the diagonal and diagonal entries $F_{ii}=G_{ii}$. Substituting $\widehat{\bm{\Gamma}} = \bm{\Gamma} + \bm{\Phi}$ and this decomposition of $\widehat{\bm{D}}$, we have:
\[
\underline{\bm{E}} = (\bm{\Gamma} + \bm{\Phi})(\bm{D} + \bm{F})(\bm{\Gamma} + \bm{\Phi}) - \bm{\Gamma}\,\bm{D}\,\bm{\Gamma}.
\]
Expanding the product and using the symmetry of all the matrices involved, we obtain:
\[
\begin{aligned}
\underline{\bm{E}} &= \bm{\Gamma}\,\bm{D}\,\bm{\Gamma} + \bm{\Gamma}\,\bm{D}\,\bm{\Phi} + \bm{\Gamma}\,\bm{F}\,\bm{\Gamma} + \bm{\Gamma}\,\bm{F}\,\bm{\Phi} \\
&\quad + \bm{\Phi}\,\bm{D}\,\bm{\Gamma} + \bm{\Phi}\,\bm{D}\,\bm{\Phi} + \bm{\Phi}\,\bm{F}\,\bm{\Gamma} + \bm{\Phi}\,\bm{F}\,\bm{\Phi} - \bm{\Gamma}\,\bm{D}\,\bm{\Gamma} \\
&= \bm{\Gamma}\,\bm{F}\,\bm{\Gamma} + \bm{\Gamma}\,\bm{D}\,\bm{\Phi} + \bm{\Phi}\,\bm{D}\,\bm{\Gamma} + \bm{\Gamma}\,\bm{F}\,\bm{\Phi} + \bm{\Phi}\,\bm{F}\,\bm{\Gamma} + \bm{\Phi}\,\bm{D}\,\bm{\Phi} + \bm{\Phi}\,\bm{F}\,\bm{\Phi}.
\end{aligned}
\]
Our goal is to show that the spectral norm of $\underline{\bm{E}}$ grows at most at rate $\sqrt{n}$ in probability:
\[
\|\underline{\bm{E}}\| = O_{\mathrm{p}}(\sqrt{n}).
\]
Using the triangle inequality for the spectral norm $\|\cdot\|$, we have:
\[
\|\underline{\bm{E}}\| \leq \|\bm{\Gamma}\,\bm{F}\,\bm{\Gamma}\|
 + \|\bm{\Gamma}\,\bm{D}\,\bm{\Phi}\|
 + \|\bm{\Phi}\,\bm{D}\,\bm{\Gamma}\|
 + \|\bm{\Gamma}\,\bm{F}\,\bm{\Phi}\|
 + \|\bm{\Phi}\,\bm{F}\,\bm{\Gamma}\|
 + \|\bm{\Phi}\,\bm{D}\,\bm{\Phi}\|
 + \|\bm{\Phi}\,\bm{F}\,\bm{\Phi}\|.
\]
We will bound the typical size of each term on the right-hand side. The terms $\bm{\Phi}\,\bm{D}\,\bm{\Gamma}$ and $\bm{\Gamma}\,\bm{D}\,\bm{\Phi}$ are transposes of each other, and similarly for $\bm{\Phi}\,\bm{F}\,\bm{\Gamma}$ and $\bm{\Gamma}\,\bm{F}\,\bm{\Phi}$, so their norms can be bounded in the same way; in what follows we bound one representative of each pair and absorb the resulting factor of $2$ into the constants. In this argument, we adopt the standard practice that the meaning of the constant $C$ can change from line to line, but this symbol always stands for a deterministic constant that does not depend on $i$, $j$, or $n$.

\medskip

\paragraph{First term: $\|\bm{\Gamma}\,\bm{F}\,\bm{\Gamma}\|$.} Since $\bm{\Gamma}$ is a diagonal matrix with entries $\Gamma_{ii} = 1/\sqrt{-D_{ii}}$, and the $-D_{ii}$ are bounded away from zero and infinity by assumption, there exist constants $\Gamma_{\min}, \Gamma_{\max} > 0$ such that:
\[
\Gamma_{\min} \leq \Gamma_{ii} \leq \Gamma_{\max}.
\]
Thus, $\|\bm{\Gamma}\| \leq \Gamma_{\max}$. The matrix $\bm{F}$ has independent, mean-zero entries $F_{ij}$ with variances $\operatorname{Var}(F_{ij}) \leq \overline{V}$.

Consider the matrix $\bm{K} = \bm{\Gamma}\,\bm{F}\,\bm{\Gamma}$ with entries $K_{ij} = \Gamma_{ii}F_{ij}\Gamma_{jj}$. The variances of $K_{ij}$ satisfy:
\[
\operatorname{Var}(K_{ij}) = \Gamma_{ii}^2\Gamma_{jj}^2\operatorname{Var}(F_{ij}) \leq \Gamma_{\max}^4\overline{V}.
\]
By standard results on the spectral norm of symmetric random matrices with independent, mean-zero, uniformly bounded entries (see, for example, \citealp{vershynin2018high,bandeiravanhandel2016norm}), there exists a constant $C_1>0$ such that
\[
\Pr\big(\|\bm{K}\|>C_1\sqrt{n}\big)\to 0 \quad \text{as } n\to\infty,
\]
that is, $\|\bm{K}\| = O_{\mathrm{p}}(\sqrt{n})$.

\medskip

\paragraph{Second term: $\|\bm{\Gamma}\,\bm{D}\,\bm{\Phi}\|$.} Since $\bm{\Phi} = \widehat{\bm{\Gamma}} - \bm{\Gamma}$ and $\widehat{\Gamma}_{ii} = 1/\sqrt{-\widehat{D}_{ii}}$, we use a Taylor expansion around $-D_{ii}$ to approximate\footnote{Let $f(x) = x^{-1/2}$ for $x>0$. The derivative is $f'(x) = -\frac{1}{2}x^{-3/2}$. Since $\widehat{D}_{ii}=D_{ii}+G_{ii}$, we have $-\widehat{D}_{ii}=(-D_{ii})-G_{ii}$. Expanding $f((\!-D_{ii})-G_{ii})$ around $x = -D_{ii}$ yields $f((\!-D_{ii})-G_{ii}) = f(-D_{ii}) + f'(-D_{ii})(-G_{ii}) + O(G_{ii}^2)$. Therefore, $\Phi_{ii} = \widehat{\Gamma}_{ii} - \Gamma_{ii} = f(-\widehat{D}_{ii}) - f(-D_{ii}) = \frac{1}{2}\Gamma_{ii}^3G_{ii} + O(G_{ii}^2)$, where $\Gamma_{ii} = f(-D_{ii}) = 1/\sqrt{-D_{ii}}$.} $\Phi_{ii}$:
\[
\Phi_{ii} = \widehat{\Gamma}_{ii} - \Gamma_{ii} = \frac{1}{2}\Gamma_{ii}^3G_{ii} + O(G_{ii}^2).
\]
The variance of $\Phi_{ii}$ satisfies:
\begin{equation}
\operatorname{Var}(\Phi_{ii}) \leq \left(\frac{1}{2}\Gamma_{\max}^3\right)^2\operatorname{Var}(G_{ii}) \leq \frac{C}{n} \label{eq:variance_Phi}
\end{equation}
for some constant $C > 0$. The entries of $\bm{\Gamma}\,\bm{D}\,\bm{\Phi}$ are:
\[
(\bm{\Gamma}\,\bm{D}\,\bm{\Phi})_{ij} = \Gamma_{ii}D_{ij}\Phi_{jj}.
\]
Therefore, the squared Frobenius norm is:
\[
\|\bm{\Gamma}\,\bm{D}\,\bm{\Phi}\|_F^2 = \sum_{i,j} (\Gamma_{ii}D_{ij}\Phi_{jj})^2.
\]
Taking expectations:
\[
\mathbb{E}[\|\bm{\Gamma}\,\bm{D}\,\bm{\Phi}\|_F^2] \leq \Gamma_{\max}^2A^2\sum_{i,j}\mathbb{E}[\Phi_{jj}^2] \leq Cn^2\cdot\frac{1}{n} = Cn,
\]
where in the penultimate step we have used \Cref{eq:variance_Phi}. Using the fact that the Frobenius norm bounds the spectral norm and applying Markov's inequality, we obtain, for any $K>0$,
\[
\Pr\big(\|\bm{\Gamma}\,\bm{D}\,\bm{\Phi}\| > K\sqrt{n}\big)
 \le \Pr\big(\|\bm{\Gamma}\,\bm{D}\,\bm{\Phi}\|_F > K\sqrt{n}\big)
 \le \frac{\mathbb{E}[\|\bm{\Gamma}\,\bm{D}\,\bm{\Phi}\|_F^2]}{K^2 n}
 \le \frac{C}{K^2},
\]
uniformly in $n$, which is exactly the definition of $\|\bm{\Gamma}\,\bm{D}\,\bm{\Phi}\| = O_{\mathrm{p}}(\sqrt{n})$.

\medskip

\paragraph{Third term: $\|\bm{\Gamma}\,\bm{F}\,\bm{\Phi}\|$.} The entries of $\bm{\Gamma}\,\bm{F}\,\bm{\Phi}$ are:
\[
(\bm{\Gamma}\,\bm{F}\,\bm{\Phi})_{ij} = \Gamma_{ii}F_{ij}\Phi_{jj}.
\]
The squared Frobenius norm is:
\[
\|\bm{\Gamma}\,\bm{F}\,\bm{\Phi}\|_F^2 = \sum_{i,j} (\Gamma_{ii}F_{ij}\Phi_{jj})^2.
\]
Taking expectations and using the Cauchy--Schwarz inequality:
\[
\mathbb{E}[\|\bm{\Gamma}\,\bm{F}\,\bm{\Phi}\|_F^2] \leq \Gamma_{\max}^2\overline{V}\sum_{i,j}\mathbb{E}[\Phi_{jj}^2] \leq Cn^2\cdot\frac{1}{n} = Cn.
\]
Therefore, by the same argument as for the second term (using the bound on $\mathbb{E}[\|\bm{\Gamma}\,\bm{F}\,\bm{\Phi}\|_F^2]$ and Markov's inequality), we obtain $\|\bm{\Gamma}\,\bm{F}\,\bm{\Phi}\| = O_{\mathrm{p}}(\sqrt{n})$.
\medskip

\paragraph{Fourth term: $\|\bm{\Phi}\,\bm{D}\,\bm{\Phi}\|$.} The entries of $\bm{\Phi}\,\bm{D}\,\bm{\Phi}$ are:
\[
(\bm{\Phi}\,\bm{D}\,\bm{\Phi})_{ij} = \Phi_{ii}D_{ij}\Phi_{jj}.
\]
The squared Frobenius norm is:
\[
\|\bm{\Phi}\,\bm{D}\,\bm{\Phi}\|_F^2 = \sum_{i,j} (\Phi_{ii}D_{ij}\Phi_{jj})^2.
\]
Taking expectations:
\[
\mathbb{E}[\|\bm{\Phi}\,\bm{D}\,\bm{\Phi}\|_F^2] \leq A^2\sum_{i,j}\mathbb{E}[\Phi_{ii}^2]\mathbb{E}[\Phi_{jj}^2] \leq Cn^2\left(\frac{C}{n}\right)^2 = C.
\]
By Markov's inequality, this implies $\|\bm{\Phi}\,\bm{D}\,\bm{\Phi}\| = O_{\mathrm{p}}(1)$.
\medskip

\paragraph{Fifth term: $\|\bm{\Phi}\,\bm{F}\,\bm{\Phi}\|$.} The entries of $\bm{\Phi}\,\bm{F}\,\bm{\Phi}$ are:
\[
(\bm{\Phi}\,\bm{F}\,\bm{\Phi})_{ij} = \Phi_{ii}F_{ij}\Phi_{jj}.
\]
The squared Frobenius norm is:
\[
\|\bm{\Phi}\,\bm{F}\,\bm{\Phi}\|_F^2 = \sum_{i,j} (\Phi_{ii}F_{ij}\Phi_{jj})^2.
\]
Taking expectations and using the Cauchy--Schwarz inequality,
\[
\mathbb{E}[\|\bm{\Phi}\,\bm{F}\,\bm{\Phi}\|_F^2] \leq \overline{V}\sum_{i \neq j}\mathbb{E}[\Phi_{ii}^2]\mathbb{E}[\Phi_{jj}^2] + \sum_i\mathbb{E} [\bm{\Phi}_{ii}^4]
 = \overline{V}\Big(\sum_i \mathbb{E}[\Phi_{ii}^2]\Big)^2 \leq C.
\]
By Markov's inequality, this implies $\|\bm{\Phi}\,\bm{F}\,\bm{\Phi}\| = O_{\mathrm{p}}(1)$.

\medskip

 \paragraph{Diagonal entries and the recoverable diagonal condition.} We finally verify that, in this sampling scheme,   
 the diagonal of the true Slutsky matrix is recoverable from $\widehat{\bm{D}}$.  Define the recovery map by $\Delta_n(\bm{M})_i:=M_{ii}$.  Then  
\[
 \Delta_n(\widehat{\bm{D}})_i-D_{ii}=\widehat{D}_{ii}-D_{ii}=G_{ii}  
 =\frac{1}{n}\sum_j G_{ij}^{h(i,j)}.  
  \]  
 Each $G_{ii}$ is therefore an average of independent, mean-zero, uniformly bounded random variables.   Therefore $\max_i |\Delta_n(\widehat{\bm{D}})_i-D_{ii}|=o_{\mathrm{p}}(1)$, which is exactly the recoverable diagonal condition in this setting.  
 \end{proof}

 This construction therefore provides a sampling scheme under which the associated error matrix satisfies $\|\underline{\bm{E}}\| = O_{\mathrm{p}}(\sqrt{n})$.  In particular, in any environment where $\Vert \underline{\bm{D}}\Vert$ grows faster than $n^{1/2}$ (as in the block-model example in \Cref{sec:illustration3}), condition (2) of \Cref{Def:SS} is satisfied.  Furthermore, the final paragraph of the proof shows that the diagonal of the true Slutsky matrix is consistently recoverable from $\widehat{\bm{D}}$, so the recoverable diagonal condition (\Cref{as:error_assumption}) also holds under this sampling scheme.

\subsection{Bootstrap test for recoverable structure}\label{app:split_sample_testing}

This subsection formalizes a bootstrap test for significant structure suggested by our theory. 

\paragraph{Setup and estimand.}
Fix \(n\) and a market state \(\bm{\theta}=(\bm{D},\bm{q}^0)\). Suppose the analyst can form \(T\ge 2\) independent estimates of the market state by sampling households in disjoint subsamples:
\[
\widehat{\bm{D}}^{(t)} = \bm{D} + \bm{E}^{(t)}, \qquad
\widehat{\bm{q}}^{0,(t)} = \bm{q}^0 + \bm{\varepsilon}^{(t)},
\quad t=1,\dots,T,
\]
with \((\bm{E}^{(t)},\bm{\varepsilon}^{(t)})\) independent across \(t\) conditional on \(\bm{\theta}\).
Let \(\underline{\widehat{\bm{D}}}^{(t)}\) and \(\underline{\widehat{\bm{q}}}^{0,(t)}\) denote the corresponding normalized objects (as in \Cref{sec:norm}), and define the estimated major eigenspace
\[
\widehat{\mathcal{L}}^{(t)} := \mathcal{L}\bigl(\underline{\widehat{\bm{D}}}^{(t)},\widehat{M}(n)\bigr),
\]
where \(b(n)\) is the threshold from \Cref{Def:SS} and \(\widehat{M}(n):=\tfrac{3}{4}b(n)\) (as in the proof of \Cref{Th:Main}). Let \(\underline{\widehat{\bm{\sigma}}}^{(t)}\) be the intervention implied by our rule when fed the \(t\)-th estimate: project \(\underline{\widehat{\bm{q}}}^{0,(t)}\) onto \(\widehat{\mathcal{L}}^{(t)}\), orient the sign so that spending is positive, and rescale to hit target spending \(s=1\). For \(t\neq t'\), define the directional predicted welfare effect \(\widehat{\tau}^{(t\to t')}\) by \Cref{eq:test}. Assume \(T\) is even and write \(T=2Z\). Pair the splits arbitrarily (say \(2z-1\) with \(2z\)), and for each pair \(z=1,\dots,Z\) define
\[
\widehat{\tau}^{(z)}:=\frac{1}{2}\Bigl(\widehat{\tau}^{(2z-1\to 2z)}+\widehat{\tau}^{(2z\to 2z-1)}\Bigr).
\]
(If \(T\) is odd, discard one split; this is immaterial asymptotically.) Define the estimand
\[
\mu_n(\bm{\theta}) := \mathbb{E}\!\left[\widehat{\tau}^{(1)}\mid \bm{\theta}\right],
\]
where \(\widehat{\tau}^{(1)}\) denotes the generic pair statistic.

\paragraph{Null hypothesis: significant structure.}
For a tolerance \(\epsilon\in(0,1)\), we take the null hypothesis to be that the economy has significant structure in the sense of \Cref{Def:SS}. Concretely, the null is
\[
H_{0}^{\mathrm{SS}}:\ \bm{\theta}\ \text{lies in an environment with significant structure (\Cref{Def:SS}).}
\]
Under \(H_{0}^{\mathrm{SS}}\) our main theorem (\Cref{Th:Main}) implies that the intervention constructed from one split and evaluated on an independent split has predicted welfare per dollar close to one, i.e.,
\[
\mu_n(\bm{\theta})\to 1
\quad\text{uniformly over }\bm{\theta}\in\bm{\Theta}(n),
\]
 as \(n\to\infty\). Thus, for any fixed \(\epsilon\in(0,1)\) and all sufficiently large \(n\), \(H_{0}^{\mathrm{SS}}\) implies the inequality \(\mu_n(\bm{\theta})\ge 1-\epsilon\). The bootstrap test below targets this implication.  
 \medskip

\paragraph{A scaling normalization and a denominator bound.}
 By \Cref{rem:normalization}, we may normalize the underlying environment so that \(\|\bm{q}^0\|=1\).  That, along with condition~(1) of \Cref{Def:SS}, implies \(\|\widehat{\bm{q}}^{0,(t)}\|=O_{\mathrm{p},\bm{\Theta}}(1)\).  Moreover, \Cref{Ass:Asymptotics,as:error_assumption} imply that the diagonal normalization map used to pass from \(\widehat{\bm{q}}^{0,(t)}\) to \(\underline{\widehat{\bm{q}}}^{0,(t)}\) has operator norm uniformly \(O_{\mathrm{p},\bm{\Theta}}(1)\).  Hence \(\|\underline{\widehat{\bm{q}}}^{0,(t)}\|=O_{\mathrm{p},\bm{\Theta}}(1)\).  
  Write \(\widehat{\bm v}^{(t)}:=P_{\widehat{\mathcal{L}}^{(t)}}\,\underline{\widehat{\bm{q}}}^{0,(t)}\). Under the rule, the rescaling denominator satisfies  
\[
\widehat{\bm v}^{(t)}\cdot \underline{\widehat{\bm q}}^{0,(t)}
=\underline{\widehat{\bm q}}^{0,(t)}{}^\top P_{\widehat{\mathcal{L}}^{(t)}}\,\underline{\widehat{\bm q}}^{0,(t)}
=\|\widehat{\bm v}^{(t)}\|^2.
 \]  
 Accordingly,  
 \[  
 |\widehat{\tau}^{(t\to t')}|  
\le
 \|\underline{\widehat{\bm{q}}}^{0,(t')}\|\,\|\underline{\widehat{\bm{\sigma}}}^{(t)}\|  
 =  
 \frac{\|\underline{\widehat{\bm{q}}}^{0,(t')}\|}{\|\widehat{\bm v}^{(t)}\|},  
\]
 by Cauchy--Schwarz and because \(\frac{|\widehat{\lambda}_\ell^{(t')}|}{1+|\widehat{\lambda}_\ell^{(t')}|}\le 1\) for each \(\ell\).  Therefore, whenever \(\|\widehat{\bm v}^{(t)}\|\) is bounded away from zero, the directional cross-fit statistics \(\widehat{\tau}^{(t\to t')}\) (hence also the pair statistics \(\widehat{\tau}^{(z)}\)) are \(O_{\mathrm{p},\bm{\Theta}}(1)\).  Under the significant-structure null, \Cref{lem:bound_projection_product} and the proof of \Cref{lem:control_E_dot} imply that there exists a constant \(\kappa>0\) (depending only on the environment constants in \Cref{Def:SS}, not on \(\bm{\theta}\)) such that  
\[
\inf_{\bm{\theta}\in\bm{\Theta}(n)}
\mathbb{P}\!\left(\|\widehat{\bm v}^{(t)}\|\ge \kappa\,\middle|\,\bm{\theta}\right)\to 1
\quad\text{as }n\to\infty.
  \]  

\paragraph{Bootstrap lower confidence bound and one-sided certification.}
Let \(\overline{\tau}:=\frac{1}{Z}\sum_{z=1}^Z \widehat{\tau}^{(z)}\). Define the nonparametric bootstrap as follows: conditional on the observed \(\widehat{\tau}^{1:Z}\), draw weights \((W_1,\dots,W_Z)\sim \mathrm{Multinomial}(Z;1/Z,\dots,1/Z)\) and set \(\overline{\tau}^{*}:=\frac{1}{Z}\sum_{z=1}^Z W_z\,\widehat{\tau}^{(z)}\). Let \(\widehat{Q}_{1-\alpha}\) be the \((1-\alpha)\) conditional quantile of \(\overline{\tau}^{*}-\overline{\tau}\) given \(\widehat{\tau}^{1:Z}\), and define the one-sided basic bootstrap lower confidence bound \(L_{1-\alpha}:=\overline{\tau}-\widehat{Q}_{1-\alpha}\). We use the flag \(L_{1-\alpha}<1-\epsilon\): values \(L_{1-\alpha}\ge 1-\epsilon\) certify \(\mu_n(\bm{\theta})\ge 1-\epsilon\) at one-sided confidence level \(1-\alpha\).

\begin{prop}[Conditional asymptotic coverage]\label{thm:bootstrap_split_sample}
For any fixed \(\alpha\in(0,1)\),
\[
\mathbb{P}\!\left(\mu_n(\bm{\theta})\ge L_{1-\alpha}\,\middle|\,\bm{\theta}\right)\to 1-\alpha
\qquad\text{as }Z\to\infty.
\]
Thus, \([L_{1-\alpha},\infty)\) is an asymptotically valid one-sided \((1-\alpha)\) confidence set for \(\mu_n(\bm{\theta})\). In particular, observing \(L_{1-\alpha}\ge 1-\epsilon\) provides one-sided \((1-\alpha)\) certification for \(\mu_n(\bm{\theta})\ge 1-\epsilon\).
\end{prop}

\begin{proof}[Proof sketch]
Conditional on \(\bm{\theta}\), the boundedness of all the random variables involved guarantees that a CLT is applicable to \(\overline{\tau}\). Under the same moment condition, the nonparametric bootstrap for the sample mean is consistent \citep[Chapter~23]{vandervaart1998}: conditional on the data, \(\sqrt{Z}(\overline{\tau}^*-\overline{\tau})\) converges in probability to the same Gaussian limit as \(\sqrt{Z}(\overline{\tau}-\mu_n(\bm{\theta}))\). Therefore the conditional \((1-\alpha)\)-quantile \(\widehat{Q}_{1-\alpha}\) consistently estimates the \((1-\alpha)\)-quantile of \(\overline{\tau}-\mu_n(\bm{\theta})\), yielding the stated one-sided coverage.
\end{proof}

\section{Microfoundation of block model of demand} \label{app:linear}
Consider the standard quasilinear quadratic utility function: 
$$
U(\bm{q})=
\bm{a}^{\tr} \bm{q}-\frac{1}{2}\bm{q}^\tr\bm{H}\bm{q},
$$
where 
$\bm{a}\in \mathbb{R}^n_{+}$ and $\bm{H}$ is an $n\times n$ symmetric matrix.

For a given price vector $\bm{p}\in \mathbb{R}^n_{+}$, the first-order condition for maximizing \( U(\bm{q})-\bm{p}\cdot\bm{q} \)  with respect to \( \bm{q} \) yields the demand function:
\[
\bm{q}(\bm{p}) = \bm{H}^{-1}(\bm{a} - \bm{p}).
\]
Differentiating it with respect to prices yields the Jacobian:
\[
\bm{D}=\frac{\partial \bm{q}}{\partial \bm{p}} = -\bm{H}^{-1}.
\]  
 Strict concavity of the utility of the representative consumer is equivalent to $\bm{D}$ being  
 negative definite, i.e., all the eigenvalues of $\bm{D}$ are strictly negative.  This is equivalent to all eigenvalues of $\bm{H}$ being strictly positive.  Moreover, having an eigenvalue of $\bm{D}$ that grows large (in absolute value) with the market size requires having an eigenvalue of $\bm{H}$ that goes to $0$ as the market grows large.  
 
\subsection{Imposing block structure}
Assume now that $\bm{H}$ has $K$ blocks each of size $m$ and that:
\[
H_{ij} =
\begin{cases}
z>0,         & i=j,\\[4pt]
z_{\text{In}}>0,   & i\ne j \text{ and } 
             \bigl\lfloor \tfrac{i-1}{m} \bigr\rfloor
             =\bigl\lfloor \tfrac{j-1}{m} \bigr\rfloor
             \quad\text{(same block)},\\[10pt]
z_{\text{Ex}}<0,   & \text{otherwise} \quad\text{(different blocks).}
\end{cases}
\]
Hence, the marginal utility from consuming products is decreasing within the same category and increasing across categories.
The matrix $\bm{D}=-\bm{H}^{-1}$ has three distinct eigenvalues:
\begin{eqnarray*}
\lambda_3 &=& -\frac{1}{z-z_{\text{In}}} \quad\quad \text{with multiplicity } K(m-1)\\
\lambda_2 &=& -\frac{1}{z+z_{\text{In}}(m-1) -mz_{\text{Ex}}} \quad\quad \text{with multiplicity } (K-1)\\
\lambda_1 &=& -\frac{1}{z+ z_{\text{In}}(m-1) + m(K-1)z_{\text{Ex}}} \quad\quad \text{with multiplicity } 1
\end{eqnarray*}

Note that $-\bm{H}$ is the Hessian of $U$. Hence each eigenvalue of $\bm{D}=-\bm{H}^{-1}$ is the reciprocal of an eigenvalue of the Hessian of $U$. Therefore, eigenvalue $1/\lambda_i$ measures the curvature of the utility function $U$ when we change consumption in the direction of the associated eigenvector $\bm{u}^i$ of the Hessian of $U$. Concavity requires that these eigenvalues are all negative. In particular, $\lambda_1\leq 0$ ensures that the marginal utility of increasing quantities in the all-ones direction is decreasing. The corresponding eigenvalue of $\bm{H}$ in the all-ones direction is the sum of three terms:
\begin{itemize}
\item $z>0$: The contribution to curvature of increasing consumption of product $i$.
\item $(m-1)z_{\text{In}}>0$: The contribution to curvature of increasing quantity of other products in the block.
\item $m(K-1)z_{\text{Ex}}<0$: The contribution to curvature of increasing quantity of products in other blocks.
\end{itemize}

Thus, while $\lambda_1\leq 0$ is required to keep concavity in this direction, $1/\lambda_1\to 0^-$ requires adding sufficient interblock interactions as we grow the market to flatten the utility function in this direction---so that, in the limit, the utility function scales linearly with quantity in this direction. 

\subsubsection{From $\bm{D}$ induced by the representative consumer to $\bm{D}$ specified directly in \Cref{sec:illustration3}}
In \Cref{sec:illustration3} we defined a Slutsky $\bm{D}$ associated with state $\bm{\theta}$ to be such that $D_{ii}=-1$, $D_{ij}=\alpha_{\bm{\theta}}<0$ if $i$ and $j$ belong to different product categories, and $D_{ij}=\omega_{\bm{\theta}}>0$ if $i$ and $j$ belong to the same category. Such a $\bm{D}$ has three distinct associated eigenvalues
$$
\begin{array}{ll}
\lambda_3(\bm{\theta})=-1 - \omega_{\bm{\theta}}   & \text{with multiplicity $K(m-1)$}\\
\lambda_2(\bm{\theta})=\lambda_3(\bm{\theta}) +m (\omega_{\bm{\theta}}-\alpha_{\bm{\theta}})   & \text{with multiplicity $(K-1)$}\\
\lambda_1(\bm{\theta})=\lambda_2(\bm{\theta})+mK\alpha_{\bm{\theta}}   & \text{with multiplicity $1$}
\end{array}
$$
 So the equivalence of the $\bm{D}$ generated by the representative consumer and the example outlined in \Cref{sec:illustration3} requires defining, for each state $\bm{\theta}$, a triple $(z^*(\bm{\theta}),z^*_{\text{In}}(\bm{\theta}),z^*_{\text{Ex}}(\bm{\theta}))$ so that the three distinct eigenvalues in the two approaches coincide. This is equivalent to solving, state by state, the following system:  
\[
\begin{aligned}
-1 &=\omega_{\bm{\theta}}- \frac{1}{z^*(\bm{\theta})-z^*_{\text{In}}(\bm{\theta})}
        ,\\[8pt]
 \omega_{\bm{\theta}}-\alpha_{\bm{\theta}} &= \frac{z^*_{\text{In}}(\bm{\theta})-z^*_{\text{Ex}}(\bm{\theta})}{[z^*(\bm{\theta})-z^*_{\text{In}}(\bm{\theta}) + m(z^*_{\text{In}}(\bm{\theta})-z^*_{\text{Ex}}(\bm{\theta}))](z^*(\bm{\theta})-z^*_{\text{In}}(\bm{\theta}))},\\[8pt]
 \alpha_{\bm{\theta}} &=\frac{z^*_{\text{Ex}}(\bm{\theta})}{[z^*(\bm{\theta})-z^*_{\text{In}}(\bm{\theta}) + m(z^*_{\text{In}}(\bm{\theta})-z^*_{\text{Ex}}(\bm{\theta}))][z^*(\bm{\theta})-z^*_{\text{In}}(\bm{\theta}) + m(z^*_{\text{In}}(\bm{\theta})-z^*_{\text{Ex}}(\bm{\theta}))+n z^*_{\text{Ex}}(\bm{\theta})]}.  
 \end{aligned}
\]

\subsection{Equilibrium}
 In what follows we fix the true market state $\bm{\theta}$.  We want to illustrate that limiting economies where $|\lambda_1(\bm{\theta})|$ goes to infinity are consistent with well-behaved equilibrium outcomes.  For this we consider the simple case where $a_i=a$ and $c_i=c$ for all $i$; this simplification, together with the block model structure, guarantees the existence of a symmetric equilibrium.  Recall that demand is  
\[
\bm{q} \;=\; -\bm{D}\,(\bm{a}-\bm{p}).
  \]  
Let $p_j=p$ for all $j\neq i$ and let $p_i$ be the price of firm $i$. Then, using the definition of $\bm{D}$, firm $i$'s demand is
\[
\begin{aligned}
q_i(p_i,p) &=
-\,\Bigl[
    -(a - p_i)
  + \omega_{\bm{\theta}} \!\!\!\sum_{\substack{j \,\text{same block}\\ j\neq i}}\!\!(a - p)
  + \alpha_{\bm{\theta}} \!\!\!\sum_{k \,\text{other blocks}}\!\!(a - p)
  \Bigr] \\[6pt]
&=
-\, \Bigl[
    -(a - p_i)
  + \omega_{\bm{\theta}}\,(m-1)(a - p)
  + \alpha_{\bm{\theta}}\,m(K-1)(a - p)
   \Bigr].
\end{aligned}
\]
Recall that $\lambda_1(\bm{\theta})= -1 + \omega_{\bm{\theta}}(m-1) + \alpha_{\bm{\theta}} m(K-1)$. Hence, 
\[
\begin{aligned}
q_i(p_i,p)
&= -\Bigl[-(a-p_i)+\left(\lambda_1(\bm{\theta})+1\right)(a-p)\Bigr] \\[4pt]
&= -\Bigl[ p_i-p +\lambda_1(\bm{\theta})(a-p)\Bigr] .
\end{aligned}
\]

The profit function is \(\pi_i(p_i,p) = (p_i-c)\,q_i(p_i,p)\). The first-order condition is:
\[
\frac{\partial\pi_i(p_i,p)}{\partial p_i}=0 
\;\Longrightarrow\;
q_i(p_i,p) - (p_i-c)=0 .
\]
Imposing symmetry \(p_i=p\), 
$$q_i(p,p)= |\lambda_{1}(\bm{\theta})|(a-p)$$
so equilibrium price and quantity are
\[
p \;=\; \frac{|\lambda_{1}(\bm{\theta})|}{1+|\lambda_{1}(\bm{\theta})|}a + \frac{c}{1+|\lambda_{1}(\bm{\theta})|} \quad \text{and} \quad q= \frac{|\lambda_{1}(\bm{\theta})|}{1+|\lambda_{1}(\bm{\theta})|} (a-c)
\]
which depends on $K$ only via $|\lambda_{1}(\bm{\theta})|$. Under the assumption of significant structure, $|\lambda_{1}(\bm{\theta})|$ goes to $\infty$ as $n$ increases across states and so $p$ goes to $a$ and $q$ goes to $a-c$. 

\section{Hedonic Models and Significant Structure} \label{app:hedonic}

The (un-normalized) Slutsky matrix in \citet{pelligrino2021} is $-\bm{B}^{-1}$, where $\bm{B}=\bm{I} + \alpha(\bm{\Sigma}-\bm{I})$. The matrix $\bm{\Sigma}$ is positive semidefinite because it can be written as $\bm{V}^\tr \bm{V}$, where the columns of $\bm{V}$ are the characteristic vectors of various products.

It follows from this that all eigenvalues of $\bm{B}$ are real numbers bounded below by $1-\alpha$, and all eigenvalues of $-\bm{B}^{-1}$ are at most $1/(1-\alpha)$ in magnitude. Pellegrino uses the value $\alpha=0.12$, which prevents any eigenvalue from exceeding $1.14$. Once we normalize to obtain $\underline{\bm{D}}$ the eigenvalues of the resulting matrix are somewhat different, but they can still be bounded by a constant by elaborating this argument. Numerically we see that the normalization makes little difference.

\end{document}